\documentclass[pdflatex,iicol,sn-mathphys-num]{sn-jnl}

\usepackage{graphicx}%
\usepackage{multirow}%
\usepackage{amsmath,amssymb,amsfonts}%
\usepackage{amsthm}%
\usepackage{mathrsfs}%
\usepackage[title]{appendix}%
\usepackage{xcolor}%
\usepackage{textcomp}%
\usepackage{manyfoot}%
\usepackage{booktabs}%
\usepackage{algorithm}%
\usepackage{algorithmicx}%
\usepackage{algpseudocode}%
\usepackage{adjustbox}
\usepackage{listings}%

\let\orcidlogo\relax
\usepackage{orcidlink}

\theoremstyle{thmstyleone}%
\newtheorem{theorem}{Theorem}
\newtheorem{proposition}[theorem]{Proposition}%

\theoremstyle{thmstyletwo}%

\theoremstyle{thmstylethree}%
\newtheorem{definition}{Definition}%

\usepackage{parskip}
\usepackage{braket}
\usepackage{makecell}
\newtheorem{corollary}[theorem]{Corollary}
\newtheorem{lemma}[theorem]{Lemma}
\newenvironment{restate}[2]{%
  \par\noindent\textbf{#1 \ref{#2}. }\itshape%
}{\par\medskip}

\iffalse
\newcommand{\comments}[1]{\footnote{\textcolor{blue}{\textit{#1}}}}

\newcommand{\LW}[1]{\textcolor{blue}{#1}}

\else
\newcommand{\comments}[1]{}

\newcommand{\LW}[1]{#1}

\fi

\begin{document}

\title[Article Title]{Learning PDEs for Portfolio Optimization with Quantum Physics-Informed Neural Networks}



\author*[1]{\fnm{Letao} \sur{Wang}}\email{letao.wang@centralesupelec.fr}

\author[1,2]{\fnm{Abdel} \sur{Lisser}}\email{abdel.lisser@centralesupelec.fr}

\author[1]{\fnm{Sreejith} \sur{Sreekumar}}\email{sreejith.sreekumar@centralesupelec.fr}

\author[1]{\fnm{Zeno} \sur{Toffano}}\email{zeno.toffano@centralesupelec.fr}

\affil[1]{\orgdiv{Laboratory of Signals and Systems},
\orgname{CentraleSupélec, CNRS, Paris-Saclay University},
\orgaddress{\street{3 Rue Joliot Curie}, \city{Gif-sur-Yvette},
\postcode{91190}, \state{Île-de-France}, \country{France}}}

\affil[2]{\orgdiv{Fédération de Mathématiques de CentraleSupélec},
\orgname{CentraleSupélec, CNRS, Paris-Saclay University},
\orgaddress{\street{3 Rue Joliot Curie}, \city{Gif-sur-Yvette},
\postcode{91190}, \state{Île-de-France}, \country{France}}}


\abstract{Partial differential equations (PDEs) play a crucial role in financial mathematics, particularly in portfolio optimization, and solving them using classical numerical or neural network methods has always posed significant challenges. Here,  we investigate the potential role of quantum circuits for solving PDEs. We design a parameterized quantum circuit (PQC) for implementing a polynomial based on tensor rank decomposition, reducing the quantum resource complexity from exponential to polynomial when the corresponding tensor rank is moderate. Building on this circuit, we develop a Quantum Physics-Informed Neural Network (QPINN) and a Quantum-inspired PINN, both of which guarantee the existence of an approximation of the PDE solution, and this approximation can be represented as a polynomial that incorporates tensor rank decomposition. \LW{Numerical experiments are conducted on the Hamilton--Jacobi--Bellman (HJB) PDE arising from the Merton portfolio optimization problem, which determines the optimal investment fraction between a risky and a risk-free asset. The results show that our quantum models achieve lower losses and approximation errors than a classical fully connected PINN while using substantially fewer trainable parameters.} Our quantum models further outperform a classical PINN constructed to share a \LW{similar} inductive bias, providing experimental evidence of quantum-induced improvement \LW{in the tested settings} and highlighting a resource-efficient pathway toward classical and near-term quantum solvers for PDEs \LW{with exploitable solution structure}. }


\keywords{Quantum machine learning, Variational quantum algorithm, Quantum Physics-Informed Neural Network, Portfolio optimization, Partial differential equation.}



\maketitle

\section{Introduction}
\bmhead{Classical challenges}
In financial mathematics, Partial Differential Equations (PDEs) are fundamental to modeling and valuation of financial derivatives, risk management, and optimal portfolio strategies. A cornerstone in this domain is the Merton portfolio optimization problem introduced by \citet{Merton_1969}, which aims to determine an optimal investment strategy that maximizes the expected utility of the terminal wealth of an investor under stochastic market dynamics. Directly finding the optimal strategy is often difficult due to the randomness of dynamics, so it is commonly transformed into a Hamilton–Jacobi–Bellman (HJB) PDE, which characterizes both the maximum expected utility and the corresponding optimal investment strategy. Therefore, the Merton portfolio optimization problem can be solved by finding the solution to its corresponding HJB PDE \citep{HJB_2025,jump_diff,XuhuiWang_2022,HJB_2014,Gollier_book}.

Classical numerical PDE solvers, such as the finite difference method (FDM) and the finite element method (FEM), can be used to approximate PDE solutions. However, achieving high accuracy with these traditional deterministic methods often requires a significant computational cost \citep{Luo2025PhysicsInformed}. To address this limitation, non-deterministic algorithms have also been explored. For example, Physics-Informed Neural Networks (PINNs) have emerged as a promising mesh-free alternative that uses deep learning to solve PDEs by embedding physical laws into the loss function \citep{raissi2019physics}. Despite their promise, PINNs also face several limitations, such as slow convergence, difficulty capturing high-frequency features, and scalability issues in high-dimensional problems \citep{Luo2025PhysicsInformed}. These limitations of classical PDE solvers motivate the exploration of a new computational paradigm as a possible route to solve PDEs more efficiently.

\bmhead{Quantum opportunities}
Quantum computing, with its ability to encode and process information in exponentially large-dimensional Hilbert spaces, offers a new frontier for accelerating scientific computing tasks. It has been considered as a new computational model with the potential to solve problems beyond the reach of classical computers. Although the construction of practical quantum hardware remains challenging, theoretical results have shown that quantum algorithms can achieve significant complexity advantages over classical methods, for example in factoring \citep{Shor_1997} and solving linear systems \citep{Harrow_2009} to address linear ordinary differential equations \citep{Berry_2017}. Such results encourage people to investigate the potential of quantum computers for solving PDEs, which can be broadly categorized into two types \citep{Hunout_2025}: fully quantum algorithms that directly encode differential equations within quantum circuits and hybrid quantum-classical algorithms that integrate quantum components into classical optimization loops.

The first fully quantum algorithms for differential equations were encoded by vector amplitudes built on the quantum amplitude amplification algorithm (QAAA) and the quantum amplitude estimation algorithm (QAEA)   \citep{Brassard_2002}. These methods typically applied differentiation schemes over discretized variable spaces. Subsequently, \citet{kacewicz2006} proposed the Kacewicz quantum algorithm to solve initial value problems (IVP), i.e., ordinary differential equations (ODEs) with initial conditions using the QAAA and QAEA algorithms \citep{Brassard_2002}. Moreover, certain works extended the Kacewicz quantum algorithm to differential equations, including the Navier–Stokes equations \citep{Gaitan2020}, the Burgers equation \citep{Oz2022Solving}, and the heat equation \citep{OzSanKara2023}.

However, fully quantum algorithms for solving PDEs typically require fault-tolerant quantum hardware and efficient quantum random access memory (qRAM) \citep{Giovannetti_2008}, which remain infeasible in the near term \citep{Matteo_2020}. This has motivated the development of hybrid quantum–classical approaches \citep{Preskill_2018}. In parallel, quantum machine learning (QML) has emerged as a broad research field between quantum computing and machine learning \citep{Biamonte2017QuantumMachineLearning}. A popular approach to QML is based on parameterized quantum circuits (PQCs) \citep{Benedetti_2019}, whose training can be formulated as variational quantum algorithms (VQAs) \citep{Cerezo_2021} within the hybrid approach, making them suitable for near-term quantum devices that lack full fault tolerance or error correction capabilities. For instance, such hybrid approaches have been proposed for solving molecular ground-state energy estimation \citep{Peruzzo2014VQE}, quantum classification \citep{Schuld-2020}, and entropy estimation \citep{GSM-2024,SGM-2026}, to name a few applications.

Building on the idea of VQAs, \citet{Kyriienko_2021} proposed differentiable quantum circuits (DQCs) as a framework for solving PDEs, where a PQC is trained so that its expected measurement value serves as the output of the model, thus approximating the solution while minimizing the loss and enforcing the target PDE. Inspired by DQCs and the success of classical PINNs, several groups extended the DQC framework into Quantum Physics-Informed Neural Networks (QPINNs) \citep{Markidis_2022,Trahan_2024,Hegde2024Quantum,Berger2025Trainable,panichi2025quantumphysicsinformedneural,Farea_2025,Chen2026,Lantigua2026}, which can be seen as the quantum analog of PINNs. QPINNs can be implemented on both continuous-variable \citep{Markidis_2022,panichi2025quantumphysicsinformedneural} and discrete-variable (qubit-based) \citep{Berger2025Trainable,Trahan_2024} architectures, and the discrete-variable versions are sometimes referred to as VQAs for PDEs. In this work, we focus on the discrete-variable architecture.

\bmhead{Quantum limitations}
Precisely, several QPINN studies \citep{Hegde2024Quantum,Berger2025Trainable} employ a quantum model known as the quantum Fourier model (QFM) or re-uploading model, originally introduced by \citet{Schuld2021Effect}. In this approach, the circuit is made up of alternating encoding and trainable layers. The function induced by the quantum model, also called the \textit{hypothesis function} (the outputs of the hypothesis function are identical to those of the quantum model), can be expressed as a Fourier series. By varying all circuit parameters, these hypothesis functions collectively form the \textit{hypothesis space} of the model. Other QPINN models can also employ hypothesis functions based on Chebyshev polynomials \citep{Kyriienko_2021} or Lagrange polynomials \citep{Hunout_2025}. Moreover, QPINNs with QFM \citep{Hegde2024Quantum, Berger2025Trainable} and other VQAs designed to solve PDEs \citep{Kyriienko_2021,Hunout_2025} often employ a hardware efficient ansatz (HEA) \citep{Kandala_2017} as the training block. In practice, this introduces hidden constraints and makes it difficult to precisely characterize the hypothesis space from a theoretical perspective. Such limitations highlight a broader issue in VQA: most quantum models offer few theoretical guarantees that their hypothesis space contains an appropriate approximation of the target function \citep{chang2025primer}.

To better understand and characterize the hypothesis space, we turn to the quantum signal processing (QSP) framework \citep{Gily_n_2019}, where hypothesis functions can be expressed as univariate polynomials. Subsequently, \citet{yu2024nonasymptotic} extended QSP to multivariate settings, and this analysis shows that the quantum circuit resource complexity (depth, number of parameters) grows exponentially with the degree of the polynomial, making the approach impractical. 

\bmhead{Our approaches}
To address these limitations, we introduce an efficient QPINN model whose hypothesis space provably contains a \textit{tensor-decomposed} polynomial that accurately approximates the target function, where tensor-decomposed refers to representations that can be expressed as sums of products of univariate components, with the number of summation terms corresponding to the \textit{tensor rank} \citep{kolda2009tensor}. \LW{The tensor-decomposed structure applies only to some cases rather than to PDE solutions in general \citep{Logan2015}, and therefore serves as an inductive bias only when such structure is present. For PDE solutions that are not well aligned with this structure, we later introduce an added entangling layer in the QPINN to increase expressivity in a problem-oriented way and provide an additional inductive bias.} By further approximating each univariate function with a univariate polynomial, our proposed model effectively exploits this structural property, thereby reducing the quantum resource complexity for representing PDE solutions, from exponential to polynomial complexity when the tensor rank is not large. In addition to the quantum model, we also propose a quantum-inspired PINN obtained by a slight modification of the QPINN architecture, whose hypothesis functions can be tensor-decomposed polynomials and can be simulated efficiently on classical computers.

As we discussed before, the hypothesis functions of previous QPINNs can often be expressed through Fourier series (or other polynomial families) together with HEAs, but the actual hypothesis space is not the full set of Fourier series of a fixed degree with arbitrary coefficients, making the practical hypothesis space smaller than the ideal functional family. This effectively provides an upper bound on the hypothesis space size. In contrast, we consider the opposite perspective. The hypothesis space of our QPINN contains all tensor-decomposed polynomials of a fixed degree and bounded tensor rank, meaning that all possible coefficient choices are representable. As a result, the minimal distance (i.e., approximation error) between the target function and the hypothesis space can be explicitly estimated when the target function satisfies the required regularity conditions (e.g., continuity). Although the actual hypothesis space can be much larger than the set of tensor-decomposed polynomials, this set provides a lower bound on the hypothesis space size, thereby guaranteeing the existence of an approximation to the PDE solution within the QPINN hypothesis space.

\subsection*{Related works}
Similar ideas to our QPINN and quantum-inspired PINN have also been explored on the classical side. Polynomial hypothesis functions have been studied in PINNs \citep{Tang2023PIPINN}, while tensor-decomposition techniques have been incorporated into classical neural networks \citep{cohen2016,Chrysos_2021}. Our approaches share conceptual similarities with the framework proposed by \citet{Vemuri2025FunctionalTensor}, where the classical PINN learns each separable component of the tensor-decomposed hypothesis functions in order to efficiently approximate PDE solutions.

Some studies have demonstrated that orthogonal polynomials, such as Chebyshev polynomials, can offer theoretical advantages over standard non-orthogonal polynomials for solving PDEs \citep{powell1981approximation,boyd2001chebyshev}. One quantum approach has also employed Chebyshev polynomials as hypothesis functions for solving PDEs \citep{Kyriienko_2021}. 

However, it was experimentally observed in \citet{Tang2023PIPINN} that standard non-orthogonal polynomials achieve significantly faster training convergence than Chebyshev polynomials, while exhibiting a comparable error after convergence when solving low-dimensional PDEs such as the Burgers equation, Sine-Gordon, and Allen-Cahn equations. Standard non-orthogonal, Chebyshev, and tensor-decomposed polynomials represent particular subclasses within the general family of polynomial hypothesis functions, and it is still not well understood which hypothesis function representation is most appropriate for solving particular classes of PDEs. This observation motivates our investigation into quantum models whose hypothesis spaces are designed to contain polynomial hypothesis functions, as follows:

\subsection*{Main contribution}
\begin{itemize}
    \item We provide a detailed analysis of the quantum resource cost for implementing univariate, multivariate, and tensor-decomposed polynomials, and compare our results with existing approaches. While the implementation of a general multivariate polynomial requires exponentially large cost complexity, our tensor-decomposed model reduces the cost complexity to polynomial when the corresponding tensor rank is not very large, making the implementation on near-term quantum hardware more feasible.
    \item We incorporate the tensor-decomposed model to propose a QPINN and a Quantum-inspired PINN. Both models guarantee the existence of a tensor-decomposed polynomial approximating the PDE solution, with the approximation error depending on the PDE and the model parameters. Owing to entanglement, the QPINN has a richer hypothesis space and therefore greater expressivity.
    \item \LW{We conduct numerical simulations on two HJB PDEs in portfolio optimization, comparing our QPINN and Quantum-inspired PINN with classical PINNs. The experiments include a two-dimensional input case with a separable analytical solution and a three-dimensional input case with a non-separable analytical solution. In particular, we include a classical counterpart PINN that shares a similar tensor-decomposed inductive bias with our quantum models, as well as a fully connected PINN with significantly more parameters, and demonstrate that our QPINN and Quantum-inspired PINN outperform classical PINNs in the reported Merton HJB experiments.}
\end{itemize}

This paper is organized as follows. We first introduce the context of the Merton portfolio optimization problem in Section \ref{sec: merton portfolio optimization}. The general framework of PINNs and QPINNs is presented in Section \ref{sec:QPINN}, and the polynomials considered in our quantum models are described in Section \ref{sec:polynomials summary}. Section \ref{sec: summary} summarizes the main results on the quantum resource complexity required to implement general and tensor-decomposed polynomials, while Sections \ref{sec:poly} and \ref{sec: TD poly} provide the detailed descriptions. We then present our QPINN and Quantum-inspired PINN in Section \ref{sec:our_QPINN}. Finally, Section \ref{sec:experiment} reports the QPINN and PINN experimental simulations, and Section \ref{sec:conclusion} concludes the paper.

\section{Background}

In this section, we present the Merton portfolio optimization problem, how it can be solved within the QPINN framework, and define the corresponding polynomials needed.

\subsection{Merton Portfolio Optimization}
\label{sec: merton portfolio optimization}
The Merton portfolio optimization problem, introduced by \citet{Merton_1969}, studies how an investor should allocate his wealth with a choice between a risk-free asset (e.g., money deposited in a bank account) with a fixed interest rate $r(t)$ and a risky asset (e.g., a stock) with expected return $\mu(t)$ and volatility $\sigma(t)$. At each time $t$, the investor chooses a fraction $\alpha(t)$ of his wealth $X(t)$ to invest in the risky asset, while the remaining fraction $1-\alpha(t)$ is placed in the risk-free asset. The objective is to maximize the expected utility of the terminal wealth, which corresponds to finding the function
\begin{equation}
v(t, x)=\sup_{\alpha(t)} \mathbb{E}\Big[ U\big(X(T)\big)\,\big|\, X(t)=x \Big],
\end{equation}
where $U(x)$ is the investor's utility function that encodes risk preferences, and $v(t,x)$ is the maximum expected utility that can be achieved by following the optimal strategy from the current time $t$, with wealth $x$, up to the terminal time $T$. In the case of the Merton portfolio optimization model without jumps and with constant parameters, once the value function $v(t, x)$ is known, the optimal investment fraction $\hat{\alpha}(t)$ can be obtained directly, allowing the investor to achieve the best possible growth of wealth.

The dynamics of $X(t)$ follows a stochastic differential equation (SDE), which is often challenging to solve directly. Hence, one typically converts the problem into a PDE problem. In general settings, analytical solutions are unavailable, and we instead obtain the solution to the Merton portfolio problem by numerically solving the corresponding PDE. But in our setting, we can make the following assumptions: 
\begin{itemize}
    \item The expected return $\mu$, volatility $\sigma$, and fixed interest rate $r$ are constants, with $\mu > r$.
    \item The investor's utility function is $U(x)=\frac{x^\gamma}{\gamma}, x>0$ with a risk parameter $\gamma \in (0,1)$.
\end{itemize}
The investor's attitude towards risk determines the curvature of the utility function $U(x)$, and $\gamma \in (0,1)$ indicates risk aversion and implies a concave utility function \citep{Gollier_book}. Under these assumptions, the function $v(t,x)$ satisfies the Hamilton--Jacobi--Bellman (HJB) PDE on the domain $(t, x) \in[0, T] \times[0,+\infty)$:
\begin{equation}
\label{eq:HJB_PDE}
\begin{aligned}
\partial_t v(t, x)&\partial^2_x v(t, x)+\partial_x v(t, x) \partial^2_x v(t, x) r x\\
-\frac{1}{2} &\left(\frac{\mu-r}{\sigma}\right)^2 \partial_x v(t, x)^2=0,\\[1pt]
v(T,x)&=\frac{x^\gamma}{\gamma},\\
v(t,1)&=\exp (-k(T-t)) \frac{1}{\gamma}, 
\end{aligned}
\end{equation}
where $k=\frac{1}{2} \frac{\gamma}{\gamma-1}\left(\frac{\mu-r}{\sigma}\right)^2-r \gamma$. The analytical solution is given by
\begin{equation}
\label{eq:analytical_solution}
v(t, x)=\exp (-k(T-t)) \frac{x^\gamma}{\gamma},
\end{equation}
and the optimal investment fraction is given by $\hat{\alpha}=\frac{1}{1-\gamma} \frac{\mu-r}{\sigma^2}$. Note that the boundary conditions $v(T,x),v(t,1)$ in Equation~\eqref{eq:HJB_PDE} are not unique; other valid boundary conditions may also be chosen. See Appendix \ref{sec:appendix Merton portfolio optimization} for detailed derivation and explanation. 

\subsubsection{Volatility as an input variable}
\LW{In Equation~\eqref{eq:HJB_PDE}, the volatility $\sigma$ is fixed, because the equation represents a single Merton portfolio problem with a fixed market model. In practice, one often needs to evaluate the model under different volatility values. Hence, we also consider a version in which the volatility $\sigma$ is treated as an input variable and ranges over an interval. This increases the input dimension from $2$ to $3$, changing the model input from $(t,x)$ to $(t,x,\sigma)$. It represents a family of HJB PDEs with different fixed volatility values and allows us to examine how the solution changes as $\sigma$ changes. This setting can be viewed as a parametric PDE, where $\sigma$ acts as a parameter and the model learns the corresponding solution map over the parameter range.}

\LW{For each $\sigma\in[\sigma_{\min},\sigma_{\max}]$ and $(t,x)\in[0,T]\times[0,+\infty)$, we write this HJB PDE as
\begin{equation}
\label{eq:parametric_volatility_HJB}
\begin{aligned}
\partial_t v(t, x, \sigma)&\partial^2_x v(t, x, \sigma)+\partial_x v(t, x, \sigma) \partial^2_x v(t, x, \sigma) r x\\
-\frac{1}{2} &\left(\frac{\mu-r}{\sigma}\right)^2 \partial_x v(t, x, \sigma)^2=0,\\[1pt]
v(T,x,\sigma)&=\frac{x^\gamma}{\gamma},\\
v(t,1,\sigma)&=\exp (-k(\sigma)(T-t)) \frac{1}{\gamma}.
\end{aligned}
\end{equation}
where $k(\sigma)=
\frac{1}{2}\frac{\gamma}{\gamma-1}
\left(\frac{\mu-r}{\sigma}\right)^2
-r\gamma$. The corresponding analytical solution is
\begin{equation}
\label{eq:parametric_volatility_solution}
v(t, x, \sigma)=
\exp (-k(\sigma)(T-t))\frac{x^\gamma}{\gamma},
\end{equation}
The corresponding optimal investment fraction is given by $\hat{\alpha}(\sigma)=\frac{1}{1-\gamma} \frac{\mu-r}{\sigma^2}$. For each fixed value of $\sigma$, Equation~\eqref{eq:parametric_volatility_HJB} defines one HJB PDE, with the solution given by Equation~\eqref{eq:parametric_volatility_solution}. The notation $v(t, x, \sigma)$ simply collects these independent solutions $v(t, x)$ in Equation~\eqref{eq:analytical_solution} as $\sigma$ varies. Instead of solving Equation~\eqref{eq:HJB_PDE} separately for many values of $\sigma$, we train one model for the map $(t,x,\sigma)\mapsto v(t, x, \sigma)$ to approximate this family over $[\sigma_{\min},\sigma_{\max}]$. This makes it convenient to evaluate the solution and the corresponding optimal investment fraction for different volatility values, which is useful in practical portfolio problems.}

The portfolio optimization problem can alternatively adopt machine learning based approaches that directly learn the optimal fraction through empirical utility maximization, without deriving the HJB PDE as shown by \citet{Kopeliovich_2024}. In this work, we focus on the HJB PDE method. The following section introduces the QPINN framework for addressing general PDEs, with a specific application to the HJB PDE.

\subsection{Quantum/Classical PINN framework}
\label{sec:QPINN}
A general PDE can typically be expressed in the following form:
\begin{equation}
\label{eq:PDE}
\begin{aligned}
D_{x} (u;\lambda ) & =h(x),\ \ \forall x\in {\Omega}\subset \mathbb{R}^{d}\\
B_{k} (u) & =g_{k} (x),\ \ \forall x\in \partial {\Omega}\subset \mathbb{R}^{d} ,\ \ k=1,2,\dotsc ,n_{b}
\end{aligned}
\end{equation}
where 
\begin{itemize}
\item ${\Omega}$ is a domain with boundary $\partial {\Omega}$.
\item $u:{\Omega}\cup \partial \Omega \rightarrow \mathbb{R}$ is the unknown solution.
\item $D_{x} (\cdot ;\lambda )$ is a (possibly parameterized) differential operator with parameters $\lambda $.
\item $h:{\Omega}\rightarrow \mathbb{R}$ is a given source function.
\item $B_{k} (\cdot )$ is the $k^{\text{th }}$ boundary condition.
\item $g_{k} :\partial {\Omega}\rightarrow \mathbb{R}$ are given boundary functions.
\item $n_b$ is the number of given boundary functions.
\end{itemize}

Given a general PDE defined in Equation~\eqref{eq:PDE}, for all $k$, let $\left\{x_{b}^{(i)} ,g_{k}\big( x_{b}^{(i)}\big)\right\}_{i\in[1,N_{b}]}$ and $\left\{x_{d}^{(i)}\right\}_{i=1}^{N_{d}}$ be the sets of randomly selected boundary points and residual points for training, respectively. These points are usually drawn from an unknown distribution. Consider a computational model $\mathcal{Q}(\boldsymbol{\theta})$ that is parameterized by $\boldsymbol{\theta}\in \mathbb{R}^{m}$, typically represented by either a quantum or a classical neural network, where the quantum implementation is referred to as a quantum model. We define the hypothesis space:
$$
\mathcal{F}_\mathcal{Q}=\left\{f_{\mathcal{Q}(\boldsymbol{\theta})} :{\Omega}\cup \partial {\Omega}\rightarrow \mathbb{R}\right\}_{\boldsymbol{\theta} \in \mathbb{R}^{m}},
$$
where $f_{\mathcal{Q}(\boldsymbol{\theta})}$ denotes the function induced by the computational model $\mathcal{Q}(\boldsymbol{\theta})$, referred to as the hypothesis function that approximates the true solution $u$. The hypothesis function $f_{\mathcal{Q}(\boldsymbol{\theta})}$ represents the output of the computational model $\mathcal{Q}(\boldsymbol{\theta})$ as a function of the input domain $\Omega \cup \partial \Omega$ and the trainable parameters $\boldsymbol{\theta}$.

Then we define the loss function as
\begin{equation}
\label{eq:loss_function}
\mathcal{L}_{\text{total}}(\boldsymbol{\theta})=\sum_{k=1}^{n_d}\mathcal{L}_{k,b}(\boldsymbol{\theta})+\mathcal{L}_{d}(\boldsymbol{\theta}),
\end{equation}
where boundary loss values $\mathcal{L}_{k,b} (\boldsymbol{\theta} )$ and PDE residual loss value $\mathcal{L}_{d} (\boldsymbol{\theta})$ are defined as
\begin{align}
\label{eq:loss_details}
\mathcal{L}_{k,b} (\boldsymbol{\theta} )=\frac{W_{k}}{N_{b}}\sum _{i=1}^{N_{b}}\left| g_{k}( x_{b}^{(i)}) -f_{\mathcal{Q}(\boldsymbol{\theta})}\left( x_{b}^{(i)}\right)\right|^2 \notag \\
\mathcal{L}_{d} (\boldsymbol{\theta})=\frac{W_{d}}{N_{d}}\sum _{i=1}^{N_{d}}\left| D_{x}\left( f_{\mathcal{Q}(\boldsymbol{\theta})}\left( x_{d}^{(i)}\right) ;\lambda \right) -h\left( x_{d}^{(i)}\right)\right|^2,
\end{align}
with loss weight parameters $W_{k} ,W_{d} >0,\forall k\in[1,n_b]$. Specifically, the loss function for the HJB PDE in Equation  \ref{eq:HJB_PDE} is defined as:
\begin{equation}
\label{eq:HJB_loss}
\mathcal{L}_{\text{total}} (\boldsymbol{\theta})=\mathcal{L}_{1,b}(\boldsymbol{\theta})+\mathcal{L}_{2,b}(\boldsymbol{\theta})+\mathcal{L}_{d}(\boldsymbol{\theta}),
\end{equation}
For $W_d,W_1,W_2>0$, we have
\begin{equation}
\label{eq:HJB_loss_details}
\begin{aligned}
\mathcal{L}_{d}(\boldsymbol{\theta})&=\frac{W_{d}}{N_{d}}\sum_{i=1}^{N_{d}}
\Bigl(\partial_t v(t,x)\partial_x^2 v(t,x)+ \\
\partial_x v(t,x)&\partial_x^2 v(t,x)\,rx
-\frac{1}{2}\left(\frac{\mu-r}{\sigma}\right)^2\partial_x v(t,x)^2
\Bigr)^2,\\
\mathcal{L}_{1,b}(\boldsymbol{\theta})&=\frac{W_{1}}{N_{b}}\sum_{i=1}^{N_{b}}
\left(
f_{\mathcal Q}(T,x;\boldsymbol{\theta})-\frac{x^\gamma}{\gamma}
\right)^2,\\
\mathcal{L}_{2,b}(\boldsymbol{\theta})&=\frac{W_{2}}{N_{b}}\sum_{i=1}^{N_{b}}
\left(
f_{\mathcal Q}(t,1;\boldsymbol{\theta})-e^{-k(T-t)}\frac{1}{\gamma}
\right)^2,
\end{aligned}
\end{equation}
where $ k=-r\gamma+\frac{1}{2}\frac{\gamma}{\gamma-1}
\left(\frac{\mu-r}{\sigma}\right)^2$. With this in mind, we define QPINNs as follows:

\begin{definition}[Quantum Physics-Informed Neural Networks]
Given a PDE, we consider a quantum model $\mathcal{Q}(\boldsymbol{\theta})$ defined on the PDE domain, whose hypothesis function is denoted by $f_{\mathcal{Q}(\boldsymbol{\theta})}$, and define $\mathcal{F}_\mathcal{Q}$ as the corresponding hypothesis space. A Quantum Physics-Informed Neural Network (QPINN) seeks an approximation $f_{\mathcal{Q}(\boldsymbol{\theta}^*)}\in \mathcal{F}_\mathcal{Q}$ of the true solution $u$ by solving the optimization problem
\begin{equation}
\label{eq:PINN optimization}
\boldsymbol{\theta}^*=\arg\min_{\boldsymbol{\theta}\in\mathbb{R}^m}\ \mathcal{L}_{\text{total}}(\boldsymbol{\theta}),
\end{equation}
where $\mathcal{L}_{\text{total}}(\boldsymbol{\theta})$ is the loss function defined in Equations \ref{eq:loss_function} and \ref{eq:loss_details}. We refer to this framework as QPINNs.
\end{definition}
Replacing the quantum model with a classical one recovers the conventional PINN framework. In this study, we focus primarily on quantum models. We can define quantum models $\mathcal{Q}(\boldsymbol{\theta})$ on $n$ qubits as 
$$\mathcal{Q}(\boldsymbol{\theta}) = \bigl(U(\boldsymbol{x}, \boldsymbol{\theta}),O\bigr),$$
where the $2^n$-dimensional unitary $U$ is denoted as a parameterized quantum circuit (PQC), with the vector of trainable parameters $\boldsymbol{\theta} \in \mathbb{R}^{m}$, the classical data vector $\boldsymbol{x}=\left(x_1, \ldots, x_D\right) \in X \subset \mathbb{R}^D$, and $O$ is a Hermitian observable. We have the hypothesis function generated by $\mathcal{Q}(\boldsymbol{\theta})$ as
$$
 f_{\mathcal{Q}(\boldsymbol{\theta})}(\boldsymbol{x})=\langle 0| U(\boldsymbol{x}, \boldsymbol{\theta})^{\dagger} O U(\boldsymbol{x}, \boldsymbol{\theta})|0\rangle.
$$

\begin{figure}
    \centering
    \includegraphics[width=1\linewidth]{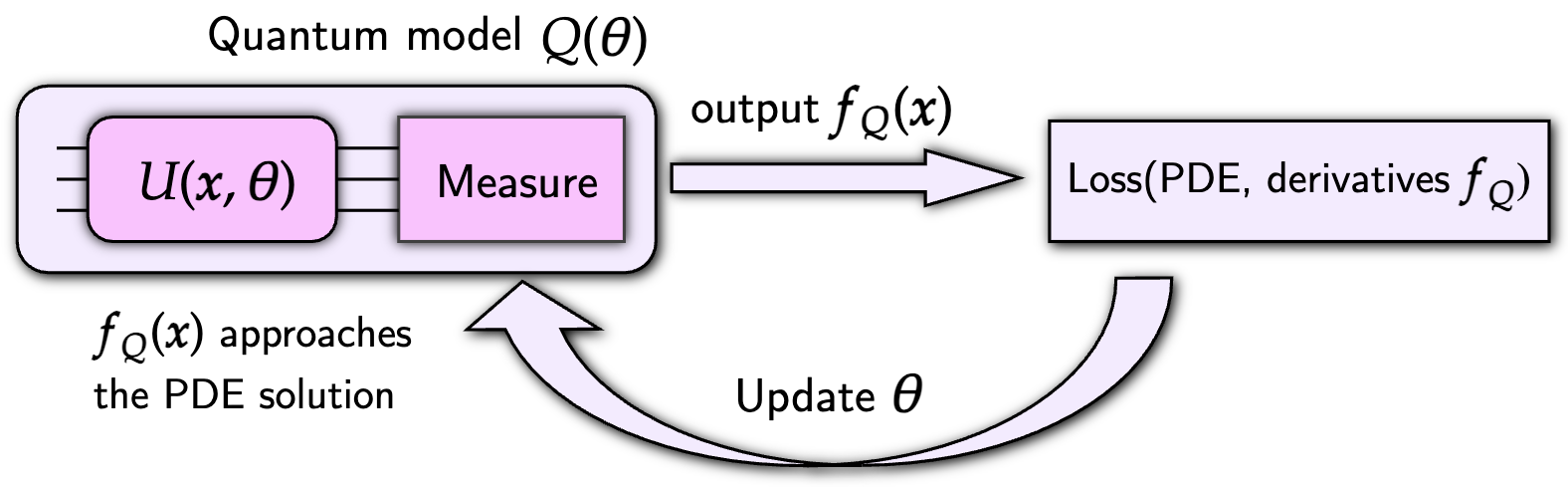}
    \caption{Workflow of Quantum Physics-Informed Neural Networks (QPINNs).}
    \label{fig:workflow_QPINNs}
\end{figure}
The workflow of QPINNs is shown in Figure \ref{fig:workflow_QPINNs}: we begin by choosing a quantum model $\mathcal{Q}(\boldsymbol{\theta})$ consisting of a PQC $U$ and an observable $O$, which together define the hypothesis function $f_{\mathcal{Q}(\boldsymbol{\theta})}$. Using the parameter shift rule \citep{Mitarai_2018}, we obtain estimated derivatives of $f_{\mathcal{Q}(\boldsymbol{\theta})}$. From these derivatives and guided by the PDE we want to solve, we evaluate a loss function defined in Equations~\ref{eq:loss_function} and \ref{eq:loss_details}, then apply gradient descent or other algorithms to update the circuit parameters $\boldsymbol{\theta}$. In principle, each update brings the model's output closer to the true solution of the PDE.

The Weierstrass approximation theorem guarantees that any smooth function can be approximated by a polynomial, and the approximation of multivariate functions is ensured by its multivariate extension~\citep{Nikolskii1991}. \citet{yu2024nonasymptotic} showed that with suitable choices of $\mathcal{Q}(\boldsymbol{\theta})$ and under ideal (noise-free and sufficiently large) conditions, any multivariate polynomial can be implemented. This implies that any continuous target function can be approximated to arbitrary accuracy by a quantum model employing polynomial hypothesis functions, or by a model whose hypothesis space contains arbitrary polynomials. However, implementing general polynomials requires exponentially deep circuits \citep{yu2024nonasymptotic}, and running such circuits on current noisy intermediate-scale quantum (NISQ) hardware is infeasible due to limited coherence times, gate fidelities, and qubit counts \citep{Preskill_2018}. Hence, we seek an efficient method to express suitable polynomials within a quantum model. In the following section, we present several classes of polynomials, including the tensor-decomposed polynomials, and describe how they can be implemented using quantum models.

\subsection{Polynomial variants}
\label{sec:polynomials summary}
We introduce two forms of polynomials: the general polynomial and its tensor-decomposed variant, highlighting how the latter enables efficient quantum implementation.

\begin{definition}[Polynomial]
\label{def:multivariate polynomial}
A real univariate polynomial with degree $L$, coefficient $c_{n}\in\mathbb{R}$, index $n\in[L]$ is defined as $p\left(x\right)=\sum_{n}c_{n}x^{n}$. A real multivariate polynomial $p(\boldsymbol{x})$ with $D$ variables and degree at most $L$ in each variable is defined as
\begin{equation}
\label{eq:polynomial}
p(\boldsymbol{x}):=\sum_{\boldsymbol{n} } c_{\boldsymbol{n}} \boldsymbol{x}^{\boldsymbol{n}},
\end{equation}
where $\boldsymbol{x}=\left(x_1, \ldots, x_D\right) \in \mathbb{R}^D, \boldsymbol{n}=\left(n_1, \ldots, n_D\right) \in [L]^D, c_{\boldsymbol{n}} \in \mathbb{R}$ and $\boldsymbol{x}^{\boldsymbol{n}}=x_1^{n_1} x_2^{n_2} \cdots x_D^{n_D}$.
\end{definition}
We define polynomials with degree at most $L$ in each variable as $\|\boldsymbol{n}\|_\infty=\max _i n_i \leq L$, polynomials with total degree at most $L$ as $\|\boldsymbol{n}\|_1=\sum_i n_i \leq L$, and $[N]:=\{0,\cdots,N\}$ for $N\in \mathbb{Z^+}$. 
In addition to this ordinary form, we can apply the \textit{tensor rank decomposition} \citep{kolda2009tensor} to the polynomial coefficients such that
\begin{equation}
\label{eq:tensor_general_c}
c_{\boldsymbol{n}} = c_{n_1,\dots,n_D}
= \sum_{r=1}^R
\lambda_r c_{r,n_1} c_{r,n_2} \cdots c_{r,n_D},
\end{equation}
where $c_{r,n_j}\in\mathbb{R}$, $n_j\in[L]$, $\forall r\in[R], j\in[D]$, and the tensor rank $R\in[1,(L+1)^D]$. By combining Equations \ref{eq:polynomial} and \ref{eq:tensor_general_c}, we obtain the following definition:

\begin{definition}[Tensor-Decomposed Polynomial]
\label{def:TD poly}
Let $p(\boldsymbol{x})$ be a real multivariate polynomial with $D$ variables and degree at most $L$ in each variable, it can also be written as
\begin{equation}
\label{eq:tensor_poly}
p(\boldsymbol{x})=\sum_{r=1}^R \lambda_r \prod_{j=1}^D p_{r,j}(x_j),
\end{equation}
where each $p_{r, j}\left(x_j\right)=\sum_{n_j}c_{r,n_j}x_j^{n_j}$ is a univariate polynomial, the tensor rank is $R\in[1,(L+1)^D]$, $\lambda_r\in\mathbb{R}$, $c_{r,n_j}\in\mathbb{R}$, and $n_j\in[L],\forall r\in[R],j\in[D]$. We call such $p(\boldsymbol{x})$ a tensor-decomposed polynomial (TD polynomial).
\end{definition}
See details in Appendix \ref{sec: TD_polynomials}. \LW{In structured settings, a PDE solution may be well-approximated by a hypothesis of relatively low rank $R$}. One may fix $R$ in advance and then train to obtain a solution within an acceptable error, which corresponds to the notion of canonical polyadic (CP) decomposition.

\section{Main results}
Having defined the polynomial and its tensor-decomposed form, we discuss the quantum circuit implementation of them. We first present a summary table outlining our main theoretical results, and then provide detailed derivations and explanations in the subsequent sections.

\subsection{Polynomial resource summary}
\label{sec: summary}

\setlength{\intextsep}{10pt} 
\begin{table*}[ht]
    \centering
    {\huge  
    \renewcommand{\arraystretch}{1.8}
    \begin{adjustbox}{max width=\textwidth} 
        \begin{tabular}{cllll} \specialrule{2.0pt}{0pt}{0pt}
      Polynomial form &Width& Depth&Number of parameters  &Reference\\ \specialrule{1.6pt}{0pt}{0pt}
 \makecell{Univariate polynomial \\ with $ |n| \le L $}& $3$& $O(L)$& $2L+1$&Our Proposition \ref{prop:univariat_poly}\\\hline
 \makecell{Multivariate polynomial \\ with $\|\boldsymbol{n}\|_1 \le L$}&$O(D+\log L+L \log D)$&  $O\left(L^2 D^L(\log L+L \log D)\right)$& $O\left(L D^L(L+D)\right)$&\citet[Theorem 1]{yu2024nonasymptotic}\\\hline
 \makecell{Multivariate polynomial \\ with $\|\boldsymbol{n}\|_{\infty} \le L$}& $O(D \log L)$& $O\left(D^2 L^{D+1} \log L\right)$& $O(D L^{D+1})$&Our Theorem \ref{theorem:multivariate_uniform}\\\hline
 \makecell{TD polynomial \\ with $R\in[1,(L+1)^D]$}& $2D+\lceil\log R\rceil+1$&  $O(RDL \log R)$&  $O\left(RDL\right)$&Our Theorem \ref{theorem:tensor_poly}\\\hline
        \end{tabular}
    \end{adjustbox}
    }
    \caption{Summary of the quantum model resources required for implementing polynomials presented in Definitions \ref{def:multivariate polynomial} and  \ref{def:TD poly}.}
    \label{table:comparison}
\end{table*}

Table \ref{table:comparison} summarizes both known complexity bounds \citep{yu2024nonasymptotic} and our own results (marked as propositions/theorems in later sections). Specifically, “Width” refers to the number of qubits required (i.e., the quantum circuit register size); “Depth” denotes the circuit depth under the native gate set, which includes single-qubit gates and CNOT gates; “Number of parameters” indicates the total number of trainable circuit parameters. The quantum model output is always real and its absolute value is bounded by $1$, since it is defined through the expectation values of Hermitian observables whose eigenvalues lie in $[-1,1]$. A scaling factor is typically applied to ensure accurate approximation of target functions. See the references in the table for details. 

It should be noted that the required resource complexity for different quantum computer platforms could be different. For some types of quantum computer, such as those based on the neutral atom platform, multi‐controlled rotation gates can be implemented directly and relatively efficiently \citep{PhysRevA.103.062607}. In this case, they are considered as “native gates” for neutral atom quantum computers. In contrast, for platforms such as superconducting quantum computers, multi‐controlled rotation gates must be decomposed into a large number of simpler "native gates" such as CNOT and single-qubit gates. It is important to note that the cost of implementing the same quantum gate can vary significantly across different quantum computing platforms, therefore we analyze the implementation cost complexity of quantum models from the perspective of different native gate sets. In this work, we consider cases where multi-controlled gates are treated as native gates only when explicitly stated. Otherwise, the native gate set consists of single-qubit gates and CNOT gates (e.g., Table \ref{table:comparison}).

The quantum model resources needed depend on different polynomial forms according to their algebraic structure (ordinary or tensor-decomposed form) and their norm constraints ($\|\boldsymbol{n}\|_1 \leq L$ or $\|\boldsymbol{n}\|_\infty \leq L$). The choice of the $\|\boldsymbol{n}\|_1$ norm constraint leads to combinatorial growth $O(D^L)$, while the $\|\boldsymbol{n}\|_\infty$ norm constraint produces $O(L^D)$ terms, changing the complexity in both quantum circuit depth and number of parameters. Moreover, for polynomials that can exploit tensor rank decomposition to factorize multivariate coefficients into a sum of products of univariate coefficients in Equation~\eqref{eq:tensor_general_c}, we can significantly optimize the complexity with respect to $D$ compared to \citet[Theorem 1]{yu2024nonasymptotic}, as both the circuit depth and the number of parameters are reduced from exponential to polynomial complexity when the tensor rank is not large. In the following sections \ref{sec:poly} and \ref{sec: TD poly}, we present the detailed descriptions of the results summarized in Table \ref{table:comparison}.

\subsection{Polynomial implementation}
\label{sec:poly}
In this subsection, we present how to implement polynomials within quantum models. 
\subsubsection{Univariate case}
We first introduce the quantum model for implementing real univariate polynomials, based on QSP lemmas from \citet{Gily_n_2019}. Given an integer $L$ and parameter $\boldsymbol{\theta} \in \mathbb{R}^{L+1}$, with input $x\in[-1,1]$, define the parameterized unitary
\begin{equation}
    \label{eq:u_theta_main}
    U_{\boldsymbol{\theta}}(x):=R_z\left(\theta_L\right) \prod_{j=0}^{L-1} R_x(-2 \arccos (x)) R_z\left(\theta_j\right),
\end{equation}
that is shown in Figure \ref{fig:single_qubit_QSP_main}. Here, $R_z(\theta)$ and $R_x(\theta)$ denote single-qubit rotation gates about the Pauli-Z and Pauli-X axes, respectively, each parameterized by the rotation angle $\theta$.
\begin{figure}[H]
    \centering
    \includegraphics[width=1.0\linewidth]{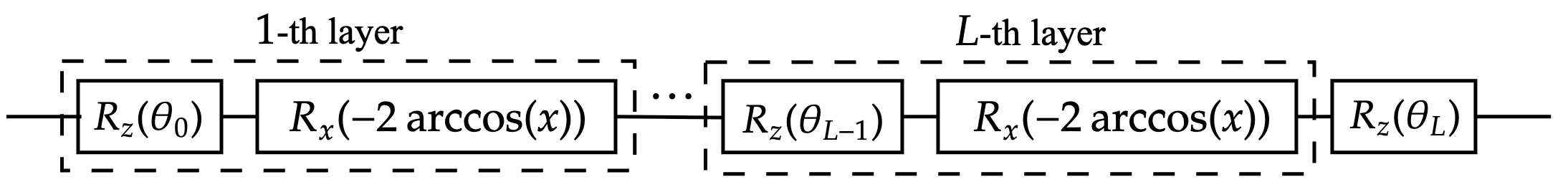}
    \caption{Circuit of $U_{\boldsymbol{\theta}}(x)$. Since the circuit has an alternating structure, we consider it to comprise $L$ layers, each consisting of $R_x(-2 \arccos (x))$ and $R_z\left(\theta_j\right)$, followed by an additional $R_z\left(\theta_L\right)$.}
    \label{fig:single_qubit_QSP_main}
\end{figure}

We then propose the quantum resource bound analysis for the implementation of arbitrary real univariate polynomial under two different quantum native gate sets:
\begin{proposition}[Quantum circuit for univariate polynomial]
\label{prop:univariat_poly}
For any real polynomial $p(x) \in \mathbb{R}[x]$ that satisfies $\operatorname{deg}(p(x)) \leq L$ and $\forall x \in[-1,1],|p(x)| \leq \frac{1}{2}$, there exists a quantum model $\mathcal{Q}$ that consists of a PQC $W_p(\boldsymbol{x})$ and an observable $Z^{(0)}$ such that 
$$
f_{\mathcal{Q}}(x):=\langle 0| W_p^{\dagger}(x) Z^{(0)} W_p(\boldsymbol{x})|0\rangle=p(x),
$$
where $Z^{(0)}$ is the Pauli $Z$ observable on the first qubit. The width of the PQC is at most $3$, the number of parameters is at most $2L+1$. The PQC $W_p(\boldsymbol{x})$ can be expressed as at most $4L$ double‐controlled rotation gates and $4$ Hadamard gates, with depth $4L+2$. Alternatively, it can be expressed using $36L$ single‐qubit gates and $32L$ CNOT gates, with depth $60L-5$.
\end{proposition}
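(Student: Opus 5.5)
The plan is to combine the single-qubit QSP circuit $U_{\boldsymbol{\theta}}(x)$ of Equation~\eqref{eq:u_theta_main} with a linear-combination-of-unitaries (LCU) step that isolates the real part of its top-left entry. Then we convert the LCU into a Hadamard-test-type expectation value of $Z^{(0)}$ and count gates.

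\textbf{Step 1 (QSP representation).} First rescale: $q(x):=2p(x)$ is a real polynomial with $\deg q\le L$ and $|q(x)|\le 1$ on $[-1,1]$. By the real-polynomial QSP lemma of \citet{Gily_n_2019}, there is a phase vector $\boldsymbol{\theta}\in\mathbb{R}^{L+1}$ such that
\begin{equation*}
\langle 0|U_{\boldsymbol{\theta}}(x)|0\rangle=P(x),\qquad \operatorname{Re}P(x)=q(x).
\end{equation*}
That lemma requires the target to be bounded by $1$ and of parity-free degree at most $L$, which holds after the rescaling. We may take $P$ to have no parity constraint by splitting into even and odd parts, or by using the ``real part'' version of the lemma. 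This is the source of the $L+1$ phases. The remaining $L$ parameters (giving $2L+1$ in total) will come from a second copy, $U_{\boldsymbol{\theta}'}$ or $U_{\boldsymbol{\theta}}^{\dagger}$-type, used to produce the complex conjugate. Concretely, $\operatorname{Re}P=\tfrac12(P+\bar P)$, and $\bar P$ is realized by negating the phases. Since both copies share the signal rotations, the parameter count is at most $2L+1$.

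\textbf{Step 2 (LCU/Hadamard test).} Use one ancilla qubit (qubit $0$), the system qubit, and one extra qubit selecting between the two QSP branches; this gives width $3$. Apply a Hadamard to the ancilla, then controlled-$U$ and anti-controlled-$U'$, then a Hadamard. The expectation of $Z^{(0)}$ equals $\operatorname{Re}\langle 0|U^\dagger_{\rm anti}U_{\rm ctrl}|0\rangle$, which evaluates to $\operatorname{Re}\bigl(\tfrac12(P+\bar P)\bigr)$ up to the factor $\tfrac12$. The prefactor is exactly why the hypothesis $|p|\le\tfrac12$ appears, and with it $f_{\mathcal Q}(x)=p(x)$. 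I would check the normalization carefully here, since this is where the factor $1/2$ must come out exactly.

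\textbf{Step 3 (resource counting).} Each of the $L$ layers contributes a controlled $R_x$ and a controlled $R_z$ in each of the two branches. The branch selection and the ancilla control make these double-controlled rotations, giving $4L$ such gates plus $4$ Hadamards (two on the ancilla, two on the selector). Parallelizing the Hadamards yields depth $4L+2$. For the native set, decompose each double-controlled rotation via the standard construction into $9$ single-qubit gates and $8$ CNOTs, giving $36L$ and $32L$. Adjacent single-qubit gates merge across layer boundaries, which gives a depth of $15$ per double-controlled rotation minus boundary savings, i.e.\ $60L-5$. \textbf{Main obstacle:} getting the exact constants. This includes the precise QSP variant that gives $2L+1$ parameters rather than $2L+2$, and the exact CNOT decomposition that reproduces $60L-5$. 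I would fix a specific decomposition of $C^2R$ (e.g.\ $CR$ via two CNOTs nested in Toffoli-like structure) and count explicitly.
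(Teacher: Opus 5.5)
Your Step~1 has a genuine gap, and it is exactly the point the proof must handle. The QSP lemma (Lemma~\ref{lem:init_poly}, and also its ``real part'' variant) is not parity-free. For a single sequence $U_{\boldsymbol{\theta}}(x)$ with $L$ signal rotations, $P(x)=\langle 0|U_{\boldsymbol{\theta}}(x)|0\rangle$ has definite parity $L \bmod 2$, and so does $\operatorname{Re}P=\operatorname{Re}\langle +|U_{\boldsymbol{\theta}}(x)|+\rangle$. Hence no $\boldsymbol{\theta}\in\mathbb{R}^{L+1}$ gives $\operatorname{Re}P=2p$ for, say, $p(x)=\tfrac14(1+x)$. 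Your second branch does not repair this. Negating the phases does produce $\overline{P}$, but $\overline{P}$ has the same parity as $P$, so $\tfrac12(P+\overline{P})=\operatorname{Re}P$ is still purely even or purely odd. Three further problems follow. The branch is redundant, since the Hadamard test already returns a real part. The normalization does not close: with $q=2p$ you would output $\operatorname{Re}P=2p$, not $p$. And a controlled-$U$/anti-controlled-$U'$ interferometer on one ancilla measures $\operatorname{Re}\langle 0|U'^{\dagger}U|0\rangle$, which is not $\tfrac12(P+\overline{P})$.

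The missing idea is the one you mention only in passing: the two branches must carry the two parity components, as in Corollary~\ref{coro:univariate_poly_coro}, on which the paper's proof rests. Write $p=\tfrac12 p_{\mathrm{even}}+\tfrac12 p_{\mathrm{odd}}$ with $p_{\mathrm{even}}(x)=p(x)+p(-x)$ and $p_{\mathrm{odd}}(x)=p(x)-p(-x)$. The hypothesis $|p|\le\tfrac12$ gives $|p_{\mathrm{even}}|,|p_{\mathrm{odd}}|\le 1$. The component whose parity matches $L$ has degree at most $L$ and is realized by a length-$L$ sequence ($L+1$ phases). The other has degree at most $L-1$ and is realized by a length-$(L-1)$ sequence ($L$ phases). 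A selector qubit in $|+\rangle$ then gives $\langle +|^{\otimes 2}\bigl(|0\rangle\langle 0|\otimes U_{\boldsymbol{\theta}_1}+|1\rangle\langle 1|\otimes U_{\boldsymbol{\theta}_2}\bigr)|+\rangle^{\otimes 2}=\tfrac12\langle +|U_{\boldsymbol{\theta}_1}|+\rangle+\tfrac12\langle +|U_{\boldsymbol{\theta}_2}|+\rangle=p$. This is where the factor $\tfrac12$ really comes from, and the Hadamard test reads it out with width $3$.

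The unequal branch lengths also produce your constants. The parameter count is $(L+1)+L=2L+1$; equal-length copies give $L+1$ free parameters if the phases are tied, or $2L+2$ if not. The double-controlled rotation count is $(2L+1)+(2L-1)=4L$; two length-$L$ branches would give $4L+2$. For the native count the paper does not use a uniform cost per gate. From Barenco et al.\ (Lemmas 5.4 and 6.1) it takes a double-controlled $R_z$ as $6$ single-qubit gates and $8$ CNOTs (depth $12$), and a double-controlled $R_x$ as $12$ and $8$ (depth $18$). With $2L+1$ of the former, $2L-1$ of the latter, plus $4$ Hadamards and $2$ Pauli-$X$ gates, this gives $36L$ single-qubit gates and $32L$ CNOTs. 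The depth is $18(2L-1)+12(2L+1)=60L-6$ before the outer layers, which the paper reports as $60L-5$.
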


The circuit of $W_p(\boldsymbol{x})$ is shown in Figure \ref{fig:single_qubit_poly_main}.
\begin{figure}[H]
    \centering
    \includegraphics[width=0.6\linewidth]{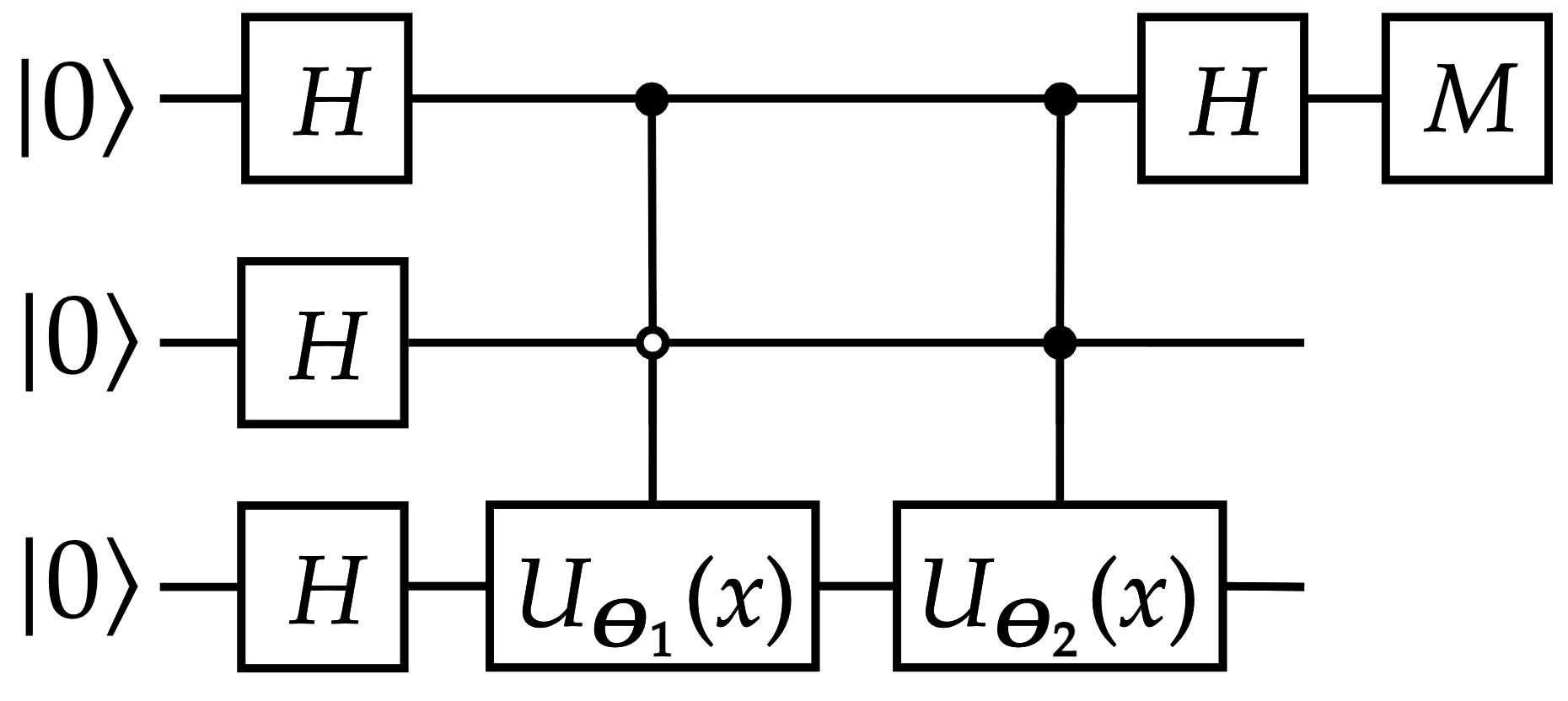}
    \caption{The quantum model of Proposition \ref{prop:univariat_poly} can be considered as the Hadamard test to estimate $p(x)=\langle+|^{\otimes2} (|0\rangle\langle0|\otimes U_{\boldsymbol{\theta_{1}}}(x)+|1\rangle\langle1|\otimes U_{\boldsymbol{\theta_{2}}}(x))|+\rangle^{\otimes2}$ where  $U_{\boldsymbol{\theta_{1}}}(x),U_{\boldsymbol{\theta_{2}}}(x)$ are defined in Equation~\eqref{eq:u_theta_main} and shown in Figure \ref{fig:single_qubit_QSP_main}. $H$ and $M$ denote the Hadamard gate and measurement operation, respectively; this notation applies throughout this work. The hollow/open circle denotes a negative control, meaning that the controlled unitary is applied when the control qubit is $|0\rangle$. By contrast, the filled/solid black circle denotes the usual control on  $|1\rangle$, as in a standard controlled gate.}
    \label{fig:single_qubit_poly_main}
\end{figure}

The proof is provided in Appendix \ref{sec:univariat_poly}. As discussed in Section \ref{sec: summary}, the required quantum resource complexity can vary across different quantum computing platforms. The resource analysis with double-controlled rotation gates as native gates corresponds to neutral-atom quantum platforms, whereas the one considering only single-qubit and CNOT gates as native gates corresponds to superconducting platforms.

\subsubsection{Multivariate case}
We consider the implementation of multivariate polynomials. Given an integer $L$ and parameter $\boldsymbol{\theta} \in \mathbb{R}^{L+1}$, with input $x\in[-1,1]$, define the parameterized unitary
\begin{equation}
\begin{aligned}
\label{eq:monomial_main}
&U^{p}(\boldsymbol{x})=\\
&\bigotimes_{i=1}^D \left(R_z\left(\theta^{i}_L\right) \prod_{j=0}^{L-1} R_x(-2 \arccos (x)) R_z\left(\theta^{i}_j\right)\right),
\end{aligned}
\end{equation}

that is shown in Figure \ref{fig:U^p_main}.
\begin{figure}[H]
    \centering
    \includegraphics[width=1.0\linewidth]{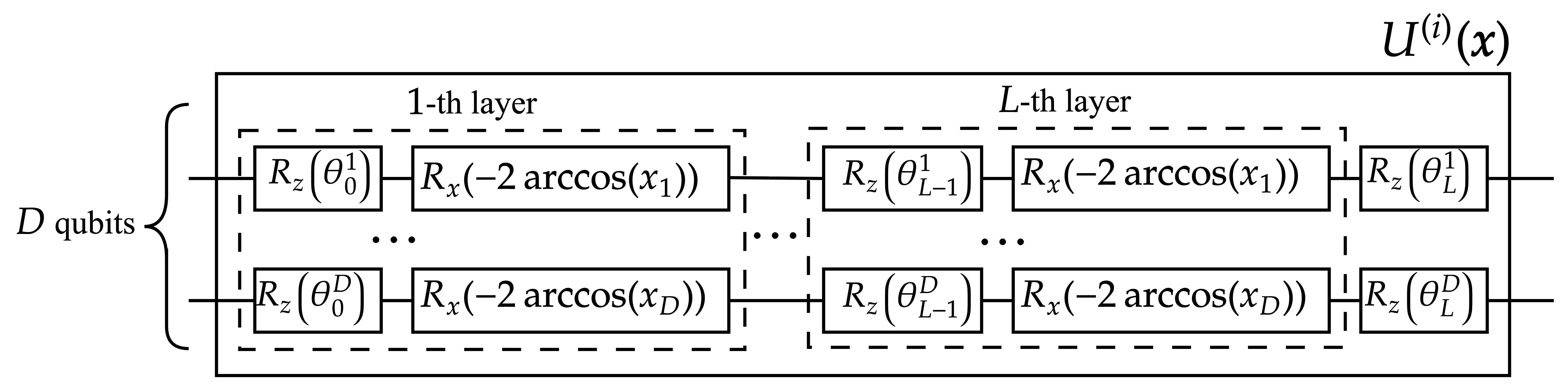}
    \caption{The circuit $U^p(\boldsymbol{x})$ retains exactly the same structure as $U_{\boldsymbol{\theta}}(x)$ in Figure \ref{fig:single_qubit_QSP_main}, except that it acts on multiple qubits via tensor products.}
    \label{fig:U^p_main}
\end{figure}
We can then propose the quantum resource complexity analysis for implementing any real multivariate polynomial:

\begin{theorem}[Quantum circuit for multivariate polynomial]
\label{theorem:multivariate_uniform}
Let  $p(\boldsymbol{x})=\sum_{\boldsymbol{n} } c_{\boldsymbol{n}} \boldsymbol{x}^{\boldsymbol{n}}$ be any real multivariate polynomial with $D$ variables and degree at most $L$ in each variable such that $\boldsymbol{x} \in[-1,1]^D$, $c_{\boldsymbol{n}}\in \mathbb R$, $\boldsymbol{n} \in[L]^D$. Then there exists a quantum model $\mathcal{Q}$ that consists of a PQC $W_p(\boldsymbol{x})$ and an observable $Z^{(0)}$ with the scaling factor $\Lambda$ such that
$$
f_{\mathcal{Q}}(\boldsymbol{x}):=\langle 0| W_p^{\dagger}(\boldsymbol{x}) Z^{(0)} W_p(\boldsymbol{x})|0\rangle=p(\boldsymbol{x})/\Lambda,
$$
where $\Lambda=|c_{\boldsymbol{n}}|_\infty  (L+1)^D$ and $Z^{(0)}$ is the Pauli $Z$ observable on the first qubit. The width of the PQC is $O(D \log L)$, the depth is $O\left(D^2 L^{D+1} \log L\right)$, and the number of parameters is $O\left(D L^{D+1}\right)$.
\end{theorem}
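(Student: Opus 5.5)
The plan is to write $p$ as a linear combination of monomials and realize that combination with a linear-combination-of-unitaries (LCU) plus Hadamard-test construction, in the spirit of Proposition \ref{prop:univariat_poly}.

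\textbf{Step 1: one unitary per monomial.} For each multi-index $\boldsymbol{n}\in[L]^D$ we build a product of single-qubit QSP circuits $\bigotimes_{i=1}^D U_{\boldsymbol{\theta}^{i}}(x_i)$, with the same structure as $U^p(\boldsymbol{x})$ in Equation~\eqref{eq:monomial_main}. Each factor is chosen so that a fixed matrix element, or a Hadamard-test combination as in Proposition \ref{prop:univariat_poly}, equals the Chebyshev polynomial $T_{n_i}(x_i)$.

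\begin{itemize}
\item Choosing all phases $\theta^i_j=0$ gives $\langle 0|R_x(-2\arccos x_i)^{n_i}|0\rangle=\cos(n_i\arccos x_i)=T_{n_i}(x_i)$. This uses only $n_i\le L$ layers, and the remaining layers are padded by identity-phase pairs.
\item The tensor product therefore has a $\langle 0|^{\otimes D}\cdot|0\rangle^{\otimes D}$ block equal to $T_{\boldsymbol{n}}(\boldsymbol{x})=\prod_i T_{n_i}(x_i)$.
\item Since $\{T_{\boldsymbol{n}}\}$ is a basis of polynomials with $\|\boldsymbol{n}\|_\infty\le L$, we rewrite $p=\sum_{\boldsymbol{n}} a_{\boldsymbol{n}}T_{\boldsymbol{n}}$.
\item Alternatively, to match the stated scaling $\Lambda=|c_{\boldsymbol{n}}|_\infty(L+1)^D$ directly in the monomial basis, we invoke Proposition \ref{prop:univariat_poly} with the target $x^{n_i}/2$, which satisfies $|x^{n_i}/2|\le 1/2$. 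This gives a Hadamard-test-type block-encoding of $x_i^{n_i}$ with constant overhead, with a factor-$2$ normalization absorbed into constants.
\end{itemize}

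\textbf{Step 2: LCU over the index register.} We introduce an index register of $D\lceil\log(L+1)\rceil=O(D\log L)$ qubits, together with the $O(D)$ data and ancilla qubits; this accounts for the width.

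\begin{itemize}
\item Prepare the index register in the uniform superposition with Hadamards, which supplies the factor $(L+1)^{-D}$.
\item Apply $\sum_{\boldsymbol{n}}|\boldsymbol{n}\rangle\langle\boldsymbol{n}|\otimes V_{\boldsymbol{n}}(\boldsymbol{x})$, where $V_{\boldsymbol{n}}$ block-encodes $x^{\boldsymbol{n}}$.
\item Absorb the coefficient $c_{\boldsymbol{n}}/|c_{\boldsymbol{n}}|_\infty\in[-1,1]$ through a parameterized controlled rotation on a flag qubit. This is one trainable angle $\arccos$ of the ratio, and its sign can be handled by adding a $Z$ phase.
\item Measure $Z^{(0)}$ on the Hadamard-test qubit. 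The expectation is $\mathrm{Re}\sum_{\boldsymbol{n}}(L+1)^{-D}\,(c_{\boldsymbol{n}}/|c|_\infty)\,\boldsymbol{x}^{\boldsymbol{n}}=p(\boldsymbol{x})/\Lambda$.
\item Realness is inherited from Proposition \ref{prop:univariat_poly}, because each univariate block yields a real polynomial.
\end{itemize}

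\textbf{Step 3: counting.} Implement the select operator as a sequence over the $(L+1)^D$ indices.

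\begin{itemize}
\item For each index, the $D$ univariate QSP blocks, each with $O(L)$ layers, are multi-controlled on the index register's value $|\boldsymbol{n}\rangle$. Each multi-controlled rotation on $O(D\log L)$ controls decomposes into $O(D\log L)$ native CNOT and single-qubit gates using one ancilla.
\item Depth is therefore $(L+1)^D\cdot D\cdot O(L)\cdot O(D\log L)=O(D^2L^{D+1}\log L)$.
\item Parameters are $(L+1)^D\cdot D\cdot O(L)=O(DL^{D+1})$ phase angles, plus the coefficient angles.
\end{itemize}

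\textbf{Main obstacle.} The delicate step is making the combination of the multi-controlled QSP blocks and the Hadamard test produce exactly $p/\Lambda$ as a real expectation, with no stray cross terms. The approach I would try first is to reuse the two-branch trick of Proposition \ref{prop:univariat_poly}, sandwiching the select operator between the same Hadamard/prepare circuit, so that only the $|0\cdots0\rangle$ amplitude survives and it equals the uniform average of the encoded monomials. The other point to check is the per-monomial step: that the tensor product of univariate encodings gives the product polynomial in the $|0\rangle^{\otimes D}$ block, which follows directly from the tensor structure.
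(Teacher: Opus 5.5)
Your architecture is the paper's: one tensor-product QSP circuit per monomial (Corollary \ref{coro:monomial}), a select operator over an index register in uniform superposition, a Hadamard-test readout (Proposition \ref{prop:multivariate_uniform} with $T=(L+1)^D$), and the same gate counting.

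\textbf{The step that fails.} The monomial encoding you settle on in order to match $\Lambda$ does not give the stated $\Lambda$. You apply Proposition \ref{prop:univariat_poly} (really the matrix-element identity of Corollary \ref{coro:univariate_poly_coro}) to the target $x_i^{n_i}/2$. Then each univariate block contributes $x_i^{n_i}/2$, and the $D$-fold tensor product contributes $2^{-D}\boldsymbol{x}^{\boldsymbol{n}}$. That factor is not a constant. It also cannot be compensated, because your coefficient angles already use the full range $|c_{\boldsymbol n}|/|c_{\boldsymbol n}|_\infty\le 1$. The construction as written therefore outputs $p(\boldsymbol x)/(2^D\Lambda)$.

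Nor can you ask the two-branch construction for $x^{n}$ itself. Its branches have opposite parity, so $p_{\mathrm{even}}$ (for even $n$) or $p_{\mathrm{odd}}$ (for odd $n$) would have to equal $2x^{n}$. That violates Equation \eqref{eq:true_bound}.

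\textbf{The fix.} This is what the paper does. The parity-removal trick is unnecessary for monomials, because $x^{n}$ already has parity $n \bmod 2$ and sup-norm $1$ on $[-1,1]$. Lemma \ref{lem:init_poly} with $n_i$ layers then gives $\langle +|U_{\boldsymbol\theta}(x_i)|+\rangle = x_i^{n_i}$ exactly, on a single qubit per variable. Use $n_i$ layers rather than padding, since your cancelling pairs only pad by even amounts. The rescaled coefficient $(L+1)^D c_{\boldsymbol n}/\Lambda\in[-1,1]$ can then be folded into one of these factors without changing its parity or breaking the bound; this is Corollary \ref{coro:monomial}. Your flag-qubit rotation is a valid alternative, at the cost of an extra qubit and an extra multi-controlled gate per index. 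The Chebyshev variant is correct but, as you noticed, yields a scaling factor in terms of Chebyshev coefficients rather than $|c_{\boldsymbol n}|_\infty$.

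\textbf{Two smaller remarks.} First, what you flag as the main obstacle is routine. The select operator is block-diagonal in the index basis, so for a product input state $|u\rangle|\phi\rangle$ you get $\sum_{\boldsymbol n}|\langle\boldsymbol n|u\rangle|^2\langle\phi|V_{\boldsymbol n}|\phi\rangle$, with no cross terms. Each summand is real by Lemma \ref{lem:init_poly}; this is the few-line computation in the proof of Proposition \ref{prop:multivariate_uniform}.

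Second, Hadamards on $\lceil\log(L+1)\rceil$ qubits per variable give weights $2^{-D\lceil\log(L+1)\rceil}$. These equal $(L+1)^{-D}$ only when $L+1$ is a power of two. To obtain the stated $\Lambda$ exactly, you need an exact uniform-superposition preparation over $L+1$ values, or you must accept a somewhat larger $\Lambda$. The paper's proof passes over the same point when it assigns amplitude $1/\sqrt{T}$ to Hadamards on $\lceil\log T\rceil$ qubits.
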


The circuit diagram of $W_p(\boldsymbol{x})$ is shown in Figure \ref{fig:Circuit_diagram_withoutT_main}.

\begin{figure}[H]
    \centering
    \includegraphics[width=1\linewidth]{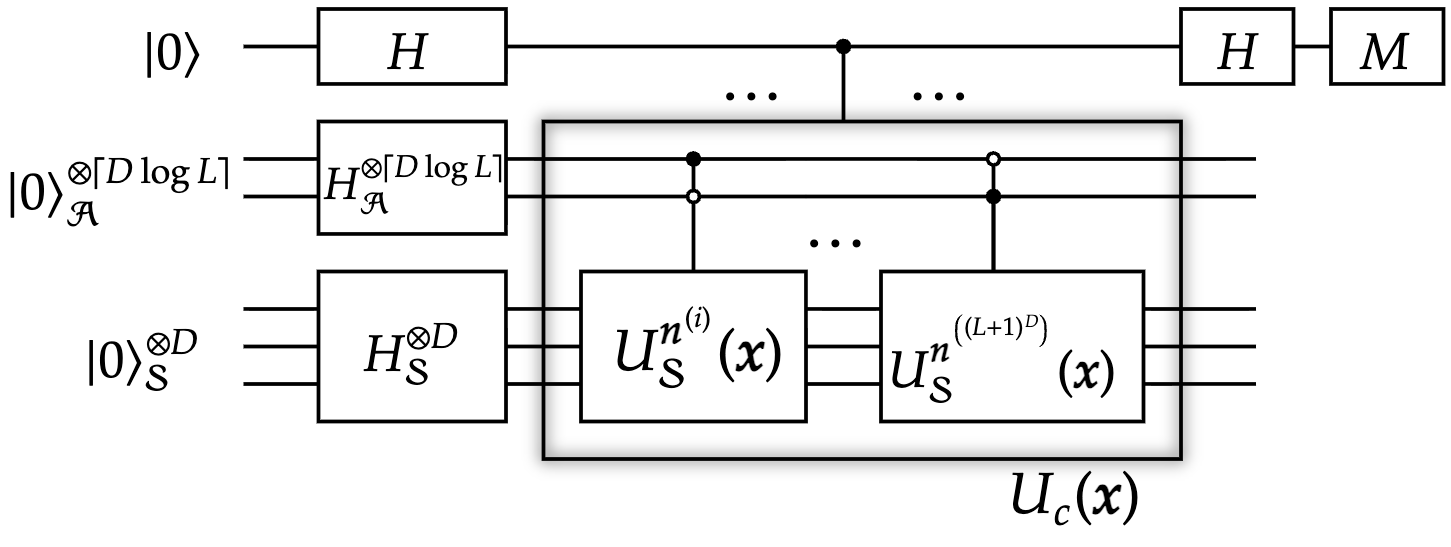}
    \caption{The circuit of $ W_p(\boldsymbol{x})$ consists of a single-qubit system, a $\lceil D\log L\rceil$-qubit system $\mathcal{A}$, and a $D$-qubit system $\mathcal{S}$. This quantum model can be viewed as a Hadamard test to estimate $p(\boldsymbol x)=\langle+|^{\otimes \lceil D\log L\rceil}_{\mathcal{A}}
 \langle+|^{\otimes D}_{\mathcal{S}}
 U_c(\boldsymbol{x})
 |+\rangle^{\otimes \lceil D\log L\rceil}_{\mathcal{A}}
 |+\rangle^{\otimes D}_{\mathcal{S}}$ such that $U_c(\boldsymbol{x})=\sum_{i=1}^{(L+1)^D}\ket{i}\bra{i}_\mathcal{A}\otimes U^{\boldsymbol{n}^{(i)}}_\mathcal{S}(\boldsymbol{x})$ where $U^{\boldsymbol{n}^{(i)}}_\mathcal{S}(\boldsymbol{x})$ is defined in Equation~\eqref{eq:monomial_main} and shown in Figure \ref{fig:U^p_main}. }
    \label{fig:Circuit_diagram_withoutT_main}
\end{figure}
The proof is provided in Appendix \ref{sec:multivariate_uniform}. Theorem \ref{theorem:multivariate_uniform} is closely related to \citet[Theorem 1]{yu2024nonasymptotic}, as both establish circuit complexity bounds for real multivariate polynomials. However, the exact complexities differ, and we additionally provide an explicit bound on the scaling factor for general real multivariate polynomials. Despite this, both their circuit depth and number of parameters grow exponentially, making them prohibitively expensive in practice. In the next part, we propose a new theorem within the tensor decomposition framework, showing that under suitable constraints, some circuit resources (width, depth, number of parameters) exhibit only polynomial complexity, rendering the approach far more practical for real-world problems.

\subsection{Tensor-decomposed polynomial implementation}
\label{sec: TD poly}

In this subsection, we present the quantum models designed to implement tensor-decomposed polynomials and analyze their corresponding quantum resource complexity.
\begin{theorem}[Quantum circuit for tensor-decomposed polynomial]
\label{theorem:tensor_poly}
Let $p(\boldsymbol{x})$ be any real multivariate polynomial with $D$ variables and degree at most $L$ in each variable such that 
$$
p(\boldsymbol{x})=\sum_{i=1}^R \lambda_i \prod_{j=1}^D p_{i, j}\left(x_j\right),
$$
where $R\in[1,(L+1)^D]$, $\sum_{i=1}^R |\lambda_i|=\Lambda\in\mathbb{R}$, and each $p_{i,j}(x_j)$ is a univariate polynomial of degree $L$ satisfying $|p_{i,j}(x_j)| \leq 1/2$  for $x_j \in[-1,1],\forall i\in[R],j\in[D]$. Then, there exists a quantum model $\mathcal{Q}$ that consists of a PQC $W_p(\boldsymbol{x})$ and an observable $Z^{(0)}$ such that
$$
f_{\mathcal{Q}}(\boldsymbol{x}):=\langle 0| W_p^{\dagger}(\boldsymbol{x}) Z^{(0)} W_p(\boldsymbol{x})|0\rangle=p(\boldsymbol{x})/\Lambda,
$$
where $Z^{(0)}$ is the Pauli $Z$ observable on the first qubit. The width of the PQC is at most $2D+\lceil\log R\rceil+1$, the depth is $O(RDL \log R)$, and the number of parameters is $O\left(RDL\right)$.
\end{theorem}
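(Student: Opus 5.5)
We build the circuit as a linear combination of unitaries (LCU) combined with a Hadamard test, using the univariate construction of Proposition~\ref{prop:univariat_poly} as the basic block.

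\textbf{Step 1: one product term.} Fix $i$ and $j$. Since $|p_{i,j}|\le 1/2$ on $[-1,1]$ and $p_{i,j}$ is real of degree $L$, the QSP lemma underlying Proposition~\ref{prop:univariat_poly} gives two phase vectors $\boldsymbol{\theta}_{i,j,1},\boldsymbol{\theta}_{i,j,2}\in\mathbb{R}^{L+1}$. With these,
\[
p_{i,j}(x_j)=\langle +|\bigl(|0\rangle\langle 0|\otimes U_{\boldsymbol{\theta}_{i,j,1}}(x_j)+|1\rangle\langle 1|\otimes U_{\boldsymbol{\theta}_{i,j,2}}(x_j)\bigr)|+\rangle\otimes|+\rangle
\]
in the form of Figure~\ref{fig:single_qubit_poly_main}. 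Equivalently, $p_{i,j}(x_j)=\langle\psi|V_{i,j}(x_j)|\psi\rangle$, where $V_{i,j}$ is a two-qubit unitary (one ancilla, one signal qubit) and $|\psi\rangle=|+\rangle|+\rangle$. We use $D$ such pairs, giving $2D$ qubits in a register $\mathcal{S}$. The tensor product $V_i(\boldsymbol{x})=\bigotimes_{j=1}^D V_{i,j}(x_j)$ then satisfies
\[
\langle\psi|^{\otimes D}V_i(\boldsymbol{x})|\psi\rangle^{\otimes D}=\prod_{j} p_{i,j}(x_j).
\]
This step is what makes the width linear in $D$ rather than exponential.

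\textbf{Step 2: selecting the rank index.} Add an index register $\mathcal{A}$ of $\lceil\log R\rceil$ qubits, prepared by a unitary $P$ with $P|0\rangle=\sum_i\sqrt{|\lambda_i|/\Lambda}\,|i\rangle$. Absorb the signs of $\lambda_i$ into $V_i$; for example, a controlled global phase $-1$ can be applied when $\lambda_i<0$. Define the select operator $U_c=\sum_i |i\rangle\langle i|_{\mathcal{A}}\otimes \operatorname{sgn}(\lambda_i)V_i(\boldsymbol{x})$. Then
\[
\langle 0|P^\dagger\,\langle\psi|\,U_c\,|\psi\rangle P|0\rangle=\sum_i \frac{\lambda_i}{\Lambda}\prod_j p_{i,j}(x_j)=p(\boldsymbol{x})/\Lambda .
\]
The real part of this overlap is read out with a Hadamard test: one extra control qubit, controlled-$U_c$, and measurement of $Z^{(0)}$. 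The quantity is already real, so this gives exactly $p(\boldsymbol{x})/\Lambda$. The width is $1+\lceil\log R\rceil+2D$, as claimed. The state preparations ($P$ and the Hadamards) are fixed and independent of $\boldsymbol{x}$. The $\lambda_i$ enter only through $P$, which has $O(R)$ angles.

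\textbf{Step 3: resource counting.} $U_c$ consists of $R$ blocks. Each block is a controlled-$V_i$, conditioned on $\mathcal{A}$ being in state $|i\rangle$ and on the Hadamard-test qubit. Each $V_i$ contains $D$ copies of the circuit from Proposition~\ref{prop:univariat_poly}, i.e. $O(DL)$ single- and double-controlled rotations. Adding $O(\log R)$ further controls to each rotation, and decomposing multi-controlled rotations into CNOTs and single-qubit gates with $O(\log R)$ overhead, gives depth $O(DL\log R)$ per block. Summing over the $R$ blocks sequentially gives depth $O(RDL\log R)$. The parameter count is $R\cdot D\cdot 2(L+1)+O(R)=O(RDL)$.

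\textbf{Main obstacle.} The step I expect to be delicate is the $O(\log R)$ cost per multi-controlled rotation. The standard linear-depth decompositions of $(\log R)$-controlled gates give this only with ancillas, or with a careful ancilla-free construction, while the width budget leaves no spare qubits. My first attempt would be to compute the Gray-code / AND-condition on $\mathcal{A}$ once per block: conjugate by $X$ gates to map $|i\rangle$ to the all-ones pattern, then use relative-phase Toffoli ladders that borrow qubits from the idle part of $\mathcal{S}$ as dirty ancillas, which is legitimate because they are restored. Failing that, I would accept that the multi-control overhead is absorbed into the stated big-$O$ with $\log R$, treating multi-controlled rotations as linear in the number of controls.
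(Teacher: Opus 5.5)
Your proposal is correct and is essentially the paper's proof. It uses Corollary~\ref{coro:univariate_poly_coro} for each $p_{i,j}$, a tensor product of the $D$ two-qubit blocks, an LCU over a $\lceil\log R\rceil$-qubit index register prepared with amplitudes $\sqrt{|\lambda_i|/\Lambda}$, a Hadamard test, and the same block-by-block resource count. The differences are cosmetic. The paper absorbs $\operatorname{sgn}(\lambda_i)$ into one factor $p_{i,j}$ at no cost, since $-p_{i,j}$ still satisfies the $1/2$ bound, where you use a controlled phase. One of your two phase vectors should lie in $\mathbb{R}^{L}$ rather than $\mathbb{R}^{L+1}$, because the opposite-parity component has degree at most $L-1$.

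The obstacle you flag is settled exactly as in your fallback. The paper invokes the ancilla-free linear-depth decomposition of multi-controlled single-qubit gates already cited in Proposition~\ref{prop:multivariate_uniform}, so each multi-controlled rotation costs depth $O(\log R)$ with no spare qubits needed.
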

The circuit diagram of $W_p(\boldsymbol{x})$ is shown in Figure \ref{fig:theorem2_circuit_main} and  Figure \ref{fig:V_circuit_main}. 

\begin{figure}[H]
    \centering
    \includegraphics[width=1\linewidth]{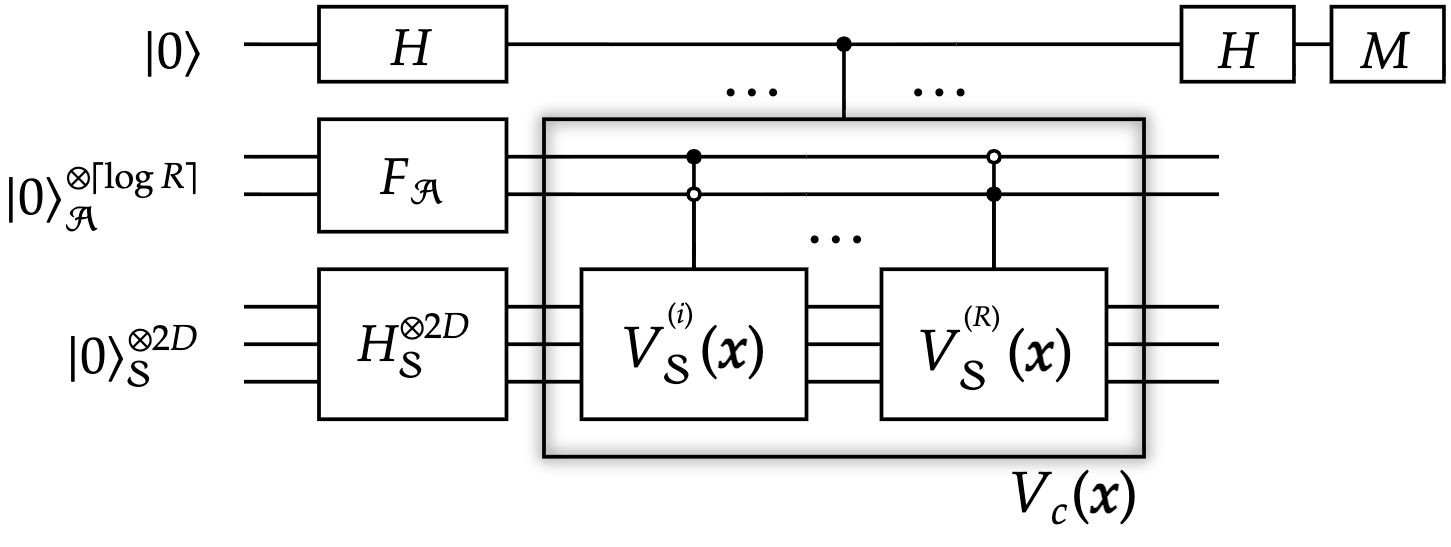}
    \caption{The circuit of $ W_p(\boldsymbol{x})$ consists of a single-qubit system, a $\lceil \log R\rceil$-qubits system $\mathcal{A}$, and a $2D$-qubits system $\mathcal{S}$. We have $F_\mathcal{A}\,\ket{0}^{^{\otimes\lceil\log R\rceil}}_\mathcal{A}
=\sum_{i=1}^R\sqrt{\frac{\lvert \lambda_i\rvert}{\Lambda}}|i\rangle_\mathcal{A}$, and this quantum model can be considered as the Hadamard test circuit to estimate $p(\boldsymbol x)/\Lambda=\left\langle v_p\right| V_c(\boldsymbol{x})\left|v_p\right\rangle$ such that $\left|v_p\right\rangle=\sum_{i=1}^R \sqrt{\frac{\lvert \lambda_i\rvert}{\Lambda}}|i\rangle_{\mathcal{A}} \otimes|+\rangle_{\mathcal{S}}^{\otimes 2 D}$, and $V_c(\boldsymbol{x})
=\sum_{i=1}^R \ket{i}\bra{i}_\mathcal{A}\;\otimes\;V^{(i)}_\mathcal{S}(\boldsymbol{x})$ where $V^{(i)}_\mathcal{S}(\boldsymbol{x})$ is shown in Figure \ref{fig:V_circuit_main}.}
    \label{fig:theorem2_circuit_main}
\end{figure}

\begin{figure}[H]
    \centering
    \includegraphics[width=0.85\linewidth]{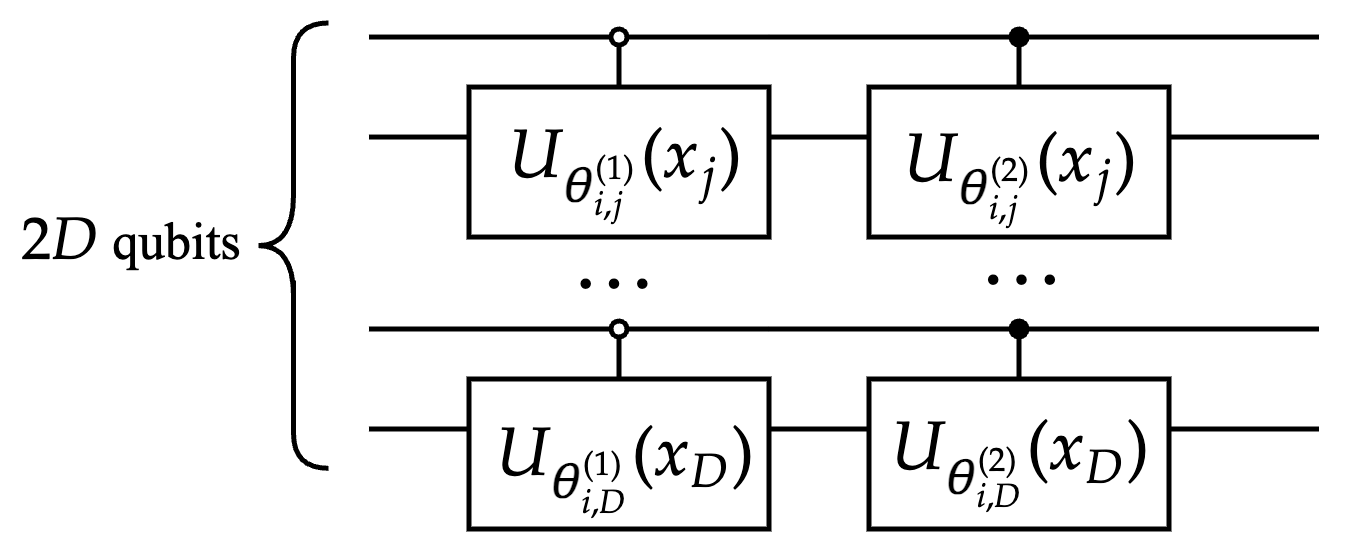}
    \caption{Circuit of $V^{(i)}_\mathcal{S}(\boldsymbol{x})
=\bigotimes_{j=1}^D
\Bigl(\ket{0}\!\bra{0}\otimes U_{\boldsymbol{\theta}_{i,j}^{(1)}}(x_j)
+\ket{1}\!\bra{1}\otimes U_{\boldsymbol{\theta}_{i,j}^{(2)}}(x_j)\Bigr)$ where $U_{\boldsymbol{\theta}_{i,j}^{(1)}}(x_j),U_{\boldsymbol{\theta}_{i,j}^{(2)}}(x_j)$ are defined in Equation~\eqref{eq:u_theta_main} and shown in Figure \ref{fig:single_qubit_QSP_main}. }
    \label{fig:V_circuit_main}
\end{figure}

The proof is provided in Appendix \ref{sec_app: TD poly}, and we refer to this quantum model as the \textit{tensor-decomposed model}. We can observe that the circuit depth and the number of parameters grow polynomially rather than exponentially when the tensor rank $R$ is not very large.

\LW{To validate the constructive implementation in Theorem \ref{theorem:tensor_poly}, we consider the following explicit two-dimensional tensor-decomposed polynomial:
\begin{equation}
\label{eq:theorem3_td_qnn_target}
\begin{aligned}
p(x_1,x_2)=&
0.35\,(x_1^2-\tfrac12)(2x_2^3-\tfrac32x_2)\\
&+0.25\,(2x_1^3-\tfrac32x_1)(\tfrac12x_2)\\
&+0.25\,(\tfrac12)(x_2^2-\tfrac12)\\
&+0.15\,(\tfrac12x_1)(2x_2^3-\tfrac32x_2).
\end{aligned}
\end{equation}
This polynomial has $D=2$, $R=4$, univariate degree bound $L=3$, and $\Lambda=1$. Its univariate factors satisfy the condition $|p_{i,j}(x_j)|\leq 1/2$ on $[-1,1]$. We trained the model from Theorem \ref{theorem:tensor_poly} on 2000 uniformly sampled points in $[-1,1]^2$ using the standard mean squared error (MSE) loss and Adam with learning rate $10^{-1}$ for 2500 epochs. On a $10^3\times10^3$ evaluation grid, the relative $L^2$ error is $1.06\times10^{-5}$ and the maximum absolute error is $9.01\times10^{-6}$. Figure \ref{fig:theorem3_td_qnn_validation} shows that the trained quantum model reproduces the target surface, which provides a direct validation of the constructive statement in Theorem \ref{theorem:tensor_poly}.}

\begin{figure}[th]
    \centering
    \includegraphics[width=1\linewidth]{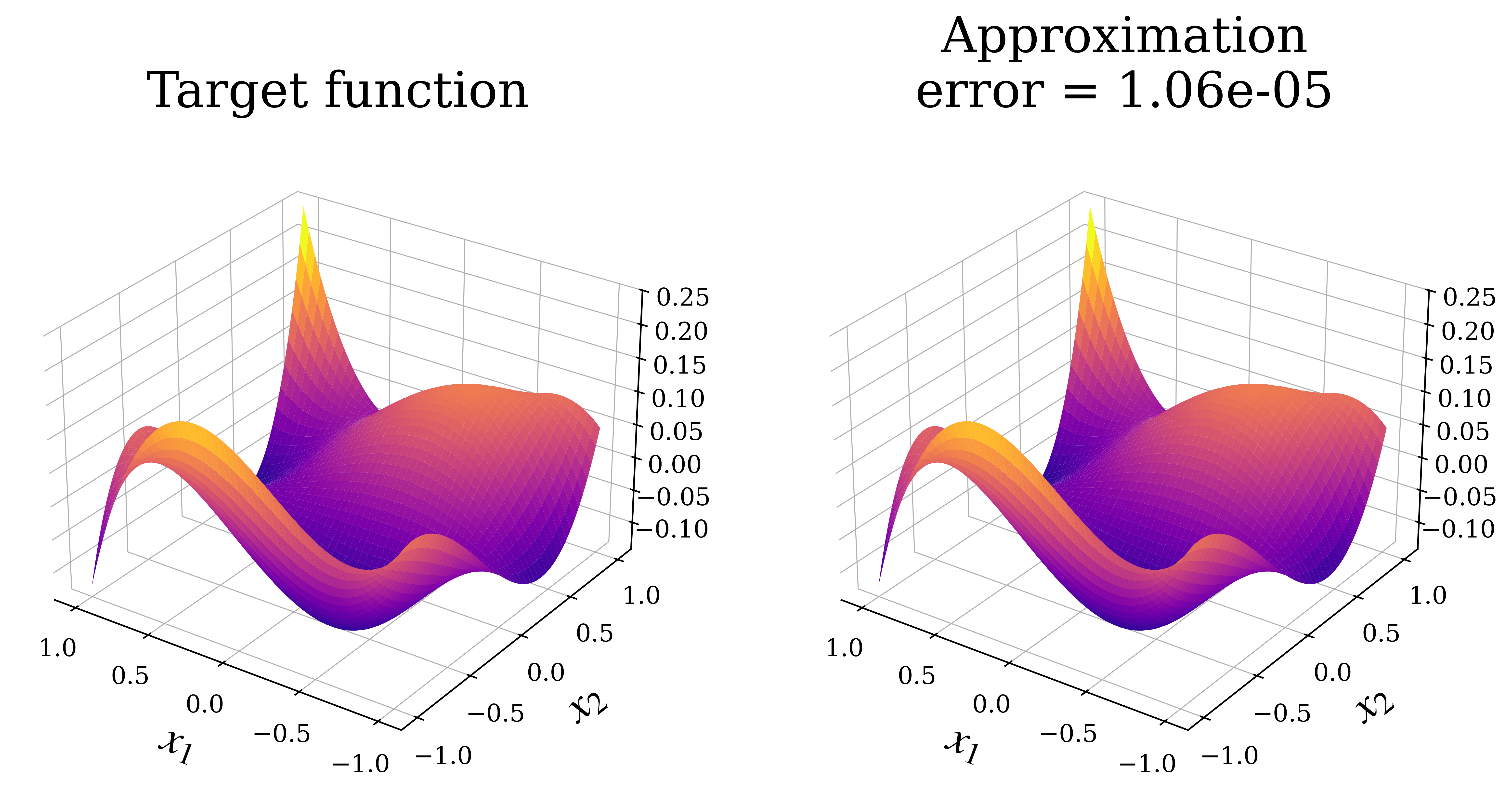}
    \caption{Numerical validation of Theorem \ref{theorem:tensor_poly} on an explicit TD polynomial with $D=2$, $R=4$, $L=3$, and $\Lambda=1$. The left panel shows the target function which is defined in Equation~\eqref{eq:theorem3_td_qnn_target}, and the right panel shows the output of the trained model from Theorem \ref{theorem:tensor_poly}. The relative $L^2$ error on the evaluation grid is $1.06\times10^{-5}$.}
    \label{fig:theorem3_td_qnn_validation}
\end{figure}

\LW{Here $L=3$ bounds the degree of each univariate factor in Theorem \ref{theorem:tensor_poly}, whereas the degree parameter in \citet[Theorem 1]{yu2024nonasymptotic} bounds the total degree through $\|\boldsymbol{n}\|_1$. Using the resource bounds in Table \ref{table:comparison}, Theorem \ref{theorem:tensor_poly} gives a tensor-decomposed model with a qubit bound of $7$, a leading depth quantity of $48$, and a leading parameter quantity of $24$. After expansion, the polynomial in Equation~\eqref{eq:theorem3_td_qnn_target} has maximum total degree $L_{\mathrm{tot}}=\|\boldsymbol{n}\|_1=5$. Applying \citet[Theorem 1]{yu2024nonasymptotic} to the same polynomial gives a qubit bound of $10$, a leading depth quantity of $6400$, and a leading parameter quantity of $1120$. This comparison shows that our tensor-decomposed model leads to a significant reduction in the required quantum resources, especially in circuit depth and trainable parameters.}

\LW{Although Theorem \ref{theorem:tensor_poly} is stated with a single degree parameter $L$, this parameter should be read as a common upper bound on the degrees assigned to individual variables. Since each univariate factor $p_{i,j}(x_j)$ is implemented by its own block in Figure~\ref{fig:V_circuit_main}, the construction can use different degrees $L_j$ for different variables, such as degree $2$ for one variable and degree $3$ for another. This degree choice concerns the univariate factors, while the tensor rank $R$ controls the number of product terms.} We now state the rank-one case of Theorem \ref{theorem:tensor_poly}.
\begin{corollary}[Quantum circuit for rank-1 tensor-decomposed polynomial]
\label{coro:rank1_poly}
Let  $p(\boldsymbol{x})=\prod_{j=1}^D p_{ j}\left(x_j\right)$ be any real multivariate polynomial such that $\forall j\in[D],x_j\in[-1,1]$, each $p_{i,j}(x_j)$ is a univariate polynomial of degree $L$ satisfying $|p_{i,j}(x_j)| \leq 1/2$. Then there exists a quantum model $\mathcal{Q}$ that consists of a PQC $W_p(\boldsymbol{x})$ and an observable $Z^{(0)}$ such that
$$
f_{\mathcal{Q}}(\boldsymbol{x}):=\langle 0| W_{p}^{\dagger}(\boldsymbol{x}) Z^{(0)}W_{p}(\boldsymbol{x})|0\rangle=p(\boldsymbol{x}),
$$
where $Z^{(0)}$ is the Pauli $Z$ observable on the first qubit. The width of the PQC is at most $2D+1$, the depth is at most $4LD+2$ by allowing double-controlled rotation gates to be implemented natively, and the number of parameters is at most $(2L+1)D$.
\end{corollary}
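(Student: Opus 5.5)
The plan is to specialize the construction of Theorem \ref{theorem:tensor_poly} to $R=1$ and $\lambda_1=1$, so that $\Lambda=1$. The ancilla register $\mathcal{A}$ then has $\lceil\log 1\rceil=0$ qubits and the state-preparation unitary $F_\mathcal{A}$ is trivial. Also, $V_c(\boldsymbol{x})$ collapses to the single product unitary
\[
V_\mathcal{S}(\boldsymbol{x})=\bigotimes_{j=1}^D\Bigl(\ket{0}\!\bra{0}\otimes U_{\boldsymbol{\theta}_{j}^{(1)}}(x_j)+\ket{1}\!\bra{1}\otimes U_{\boldsymbol{\theta}_{j}^{(2)}}(x_j)\Bigr)
\]
acting on $2D$ qubits. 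The circuit $W_p(\boldsymbol{x})$ is the Hadamard test on this unitary: one extra control qubit, Hadamards on the control and on the $D$ "selector" qubits of $\mathcal{S}$, a controlled $V_\mathcal{S}$, and a final Hadamard on the control. This gives width $2D+1$ immediately.

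For correctness, I would first invoke Proposition \ref{prop:univariat_poly} for each factor $p_j$. Since $\deg p_j\le L$ and $|p_j|\le 1/2$ on $[-1,1]$, that proposition (via the QSP lemmas of \citet{Gily_n_2019}) supplies angles $\boldsymbol{\theta}_j^{(1)},\boldsymbol{\theta}_j^{(2)}$ such that
\[
\langle+|^{\otimes 2}\bigl(\ket{0}\!\bra{0}\otimes U_{\boldsymbol{\theta}_j^{(1)}}(x_j)+\ket{1}\!\bra{1}\otimes U_{\boldsymbol{\theta}_j^{(2)}}(x_j)\bigr)|+\rangle^{\otimes 2}=p_j(x_j),
\]
which is real. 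Because both the state $|+\rangle^{\otimes 2D}$ and $V_\mathcal{S}$ factorize over $j$, we get $\langle +|^{\otimes 2D}V_\mathcal{S}|+\rangle^{\otimes 2D}=\prod_j p_j(x_j)$. This product is real, so the Hadamard test gives $\langle Z^{(0)}\rangle=\mathrm{Re}\,\langle v|V_\mathcal{S}|v\rangle=p(\boldsymbol{x})$.

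For the resource counts, each $j$ contributes $2L+1$ parameters by Proposition \ref{prop:univariat_poly}, so the total is $(2L+1)D$. For depth, I would count the gates of the controlled $V_\mathcal{S}$. Each controlled-controlled $U_{\boldsymbol\theta}$ becomes a sequence of $4L$ double-controlled rotations (control qubit plus selector qubit), exactly as in Proposition \ref{prop:univariat_poly}.

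The main obstacle is that the $D$ blocks all share the single Hadamard-test control qubit, so they cannot be run in parallel. Executed sequentially, they give depth $4LD$. Adding two Hadamard layers (the initial Hadamards on all control/selector qubits can be applied simultaneously, and likewise the final ones) gives $4LD+2$, matching the claimed bound. The care needed is in matching the convention used in the proof of Proposition \ref{prop:univariat_poly}: there, the $R_z$/$R_x$ rotations of the two branches are realized by positive- and negative-controlled rotations whose counts total $4L$ per variable, with the final $R_z(\theta_L)$ absorbed. I would reuse that accounting verbatim per variable and simply concatenate.
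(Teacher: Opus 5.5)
Your proposal is correct and follows the paper's proof: specialize Theorem~\ref{theorem:tensor_poly} to $R=1$, reuse the per-variable count from Proposition~\ref{prop:univariat_poly} ($4L$ double-controlled rotations and $2L+1$ parameters), run the $D$ blocks sequentially, and add two Hadamard layers to get depth $4LD+2$, width $2D+1$ and $(2L+1)D$ parameters. Two small fixes, neither of which changes the bounds: the initial Hadamard layer must act on all $2D$ qubits of $\mathcal{S}$, data qubits included, to prepare the $|+\rangle^{\otimes 2D}$ your correctness argument uses; and the $4L$ per variable comes from $(2L-1)+(2L+1)$ gates in the two parity branches (consistent with your $L+(L+1)$ parameter count), not from absorbing $R_z(\theta_L)$.
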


The proof is provided in Appendix \ref{section:rank1_poly}. Actually, the scaling bound for the univariate polynomial $|p_{i,j}(x_j)| \leq 1/2$ in Theorem \ref{theorem:tensor_poly} and Corollary \ref{coro:rank1_poly} is a sufficient but not a necessary condition for the existence of the quantum model $\mathcal{Q}$. The scaling bound for both results is derived from Corollary \ref{coro:univariate_poly_coro} in Appendix, where this condition was originally introduced. In fact, the necessary condition is given in Equation  \ref{eq:true_bound}. 

We apply the model based on Corollary \ref{coro:rank1_poly} to solve the Merton portfolio optimization problem, see Section \ref{sec:experiment} for details.

\subsection{Our QPINNs}
\label{sec:our_QPINN}
\LW{The tensor-decomposed model in Theorem~\ref{theorem:tensor_poly} explicitly implements tensor-decomposed polynomials, which are useful for PDE settings whose solutions admit exact or low-rank decomposable representations.} In this subsection, we first introduce a QPINN based on this tensor-decomposed model with an additional entangling layer. It is also useful to consider the QPINN without an entangling layer, which we refer to as the quantum-inspired PINN, meaning that it can be efficiently simulated classically. The general QPINN framework is defined in Section \ref{sec:QPINN}.


\subsubsection{QPINNs with entanglement}
\label{sec:our_realQPINN}
If we directly adapt the tensor-decomposed model within the QPINN framework, the corresponding hypothesis space contains tensor-decomposed polynomials with tensor rank $R$. However, since the quantum resource cost in Theorem \ref{theorem:tensor_poly} grows linearly with $R$, only relatively small values of $R$ are feasible on current quantum hardware. Consequently, restricting the model to a fixed tensor rank $R$ limits its expressivity and may lead to unacceptable approximation errors in complex cases.

To obtain a QPINN with better expressivity, we expand each block $V_c(\boldsymbol{x})$ defined in the circuit diagram of Theorem \ref{theorem:tensor_poly} by attaching an entangling unitary
\begin{equation}
\label{eq:add_layer}
V_c(\boldsymbol{x})
\;\longrightarrow\;
V_c(\boldsymbol{x})\, e^{-iH(\boldsymbol{\lambda})},
\end{equation}
where \(H(\boldsymbol{\lambda})\) is the \textit{entangling Hamiltonian}, parameterized by $\boldsymbol{\lambda}$ and acting on the \(2D\) variable qubits, satisfying
\begin{equation}
\label{eq:add_hamiltonian}
e^{-iH(\boldsymbol{\lambda}_0)}=I
\end{equation}
for a particular parameter choice $\boldsymbol{\lambda}_0$, where $D$ is the number of variables and $I$ is the identity matrix. This guarantees that the original tensor-decomposed hypothesis space is entirely contained within the enlarged hypothesis space with $e^{-iH(\boldsymbol{\lambda})}$, so the enlarged hypothesis space contains the original one and can be larger. Because of the circuit architecture shown in Figure \ref{fig:theorem2_circuit_main}, the entangling unitary $e^{-i H(\boldsymbol{\lambda})}$ is implemented in the overall circuit as a controlled $e^{-i H(\boldsymbol{\lambda})}$ operation with a single control qubit (typically the first qubit), in the same sense as the control structure used in CNOT gates. We refer to this controlled $e^{-iH(\boldsymbol{\lambda})}$ as the \emph{added entangling layer} in the QPINN. Figure~\ref{fig:circuit_added_entangle} shows its general implementation.

\begin{figure}[h]
    \centering
    \includegraphics[width=1\linewidth]{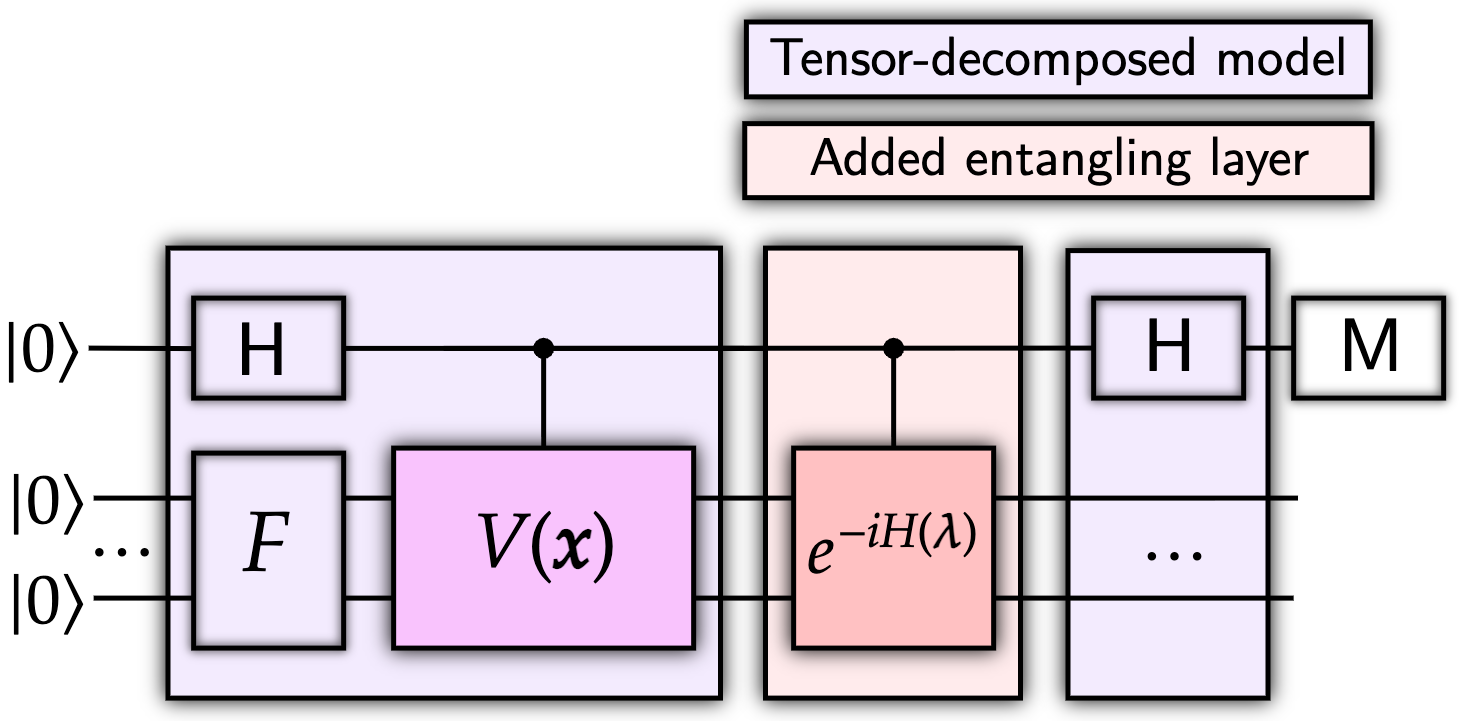}
    \caption{When the added entangling layer is the identity, this circuit is equivalent to the Hadamard test and recovers the tensor-decomposed model in Theorem \ref{theorem:tensor_poly}. $\langle v|\,V(\boldsymbol{x})\,|v\rangle=p(\boldsymbol{x})/\Lambda$ where $F|0\rangle=|v\rangle$.}
    \label{fig:circuit_added_entangle}
\end{figure}

The added entangling layer can introduce non-separable multi-qubit correlations, expanding the hypothesis space beyond tensor-decomposed structures and \LW{making the model no longer efficiently simulable by the factorized classical procedure used for the Quantum-inspired PINN.} This enlarged hypothesis space can provide additional expressivity and may help the QPINN approximate more complex PDE solutions. \LW{At the same time, an unsuitable added entangling layer may weaken the useful inductive bias inherited from the tensor-decomposed structure and lead to worse empirical performance.} However, increased expressivity may come at the cost of reduced trainability, particularly through the barren plateau (BP) phenomenon, which refers to the exponential vanishing of gradient variance during training \citep{McClean_2018}. Appropriate designs of the entangling Hamiltonian $H$ may help balance this trade-off and avoid the BP phenomenon. Further exploration of such designs is an interesting direction for future work.

\subsubsection{Quantum-inspired PINN}
\label{sec:QIPINN}
Although quantum models are usually intended to run on quantum computers, \citet{cotler2021, Cerezo_2025} have shown that many quantum machine learning algorithms can in fact be efficiently simulated on classical hardware, a phenomenon known as \textit{dequantization} \citep{Chia_2020}. 

When the entangling Hamiltonian $H(\boldsymbol{\lambda})$ defined in Equation~\eqref{eq:add_layer} satisfies Equation~\eqref{eq:add_hamiltonian} for all parameter choices $\boldsymbol{\lambda}$ (rather than only for a particular parameter choice), the corresponding QPINN model reduces to the tensor-decomposed model proposed in Theorem \ref{theorem:tensor_poly} (without the added entangling layer). Its hypothesis function can be written in tensor-decomposed form using one-dimensional components, each of which can be obtained from a single-qubit circuit shown in Equation~\eqref{eq:u_theta_main} whose output can be written explicitly as a univariate function. Since single-qubit circuits are classically simulable with negligible cost, this tensor-decomposed model can be evaluated efficiently by computing these univariate functions and their tensor-decomposed product directly on a classical computer, without executing the underlying quantum circuit. In this case, it loses its expressivity advantage but becomes classically simulable, and should therefore be considered as a quantum-inspired PINN rather than a pure QPINN.

We emphasize that classical simulability is not necessarily a negative feature. On the contrary, it enables practical deployment on today’s hardware. If the quantum-inspired PINN shows clear theoretical or experimental advantages, it may be more practically useful than QPINNs relying on current quantum hardware.

\section{Experiments}
\label{sec:experiment}
In this section, we compare our QPINN and Quantum-inspired PINN presented in Section \ref{sec:our_QPINN} with two classical PINNs in two HJB PDE experiments for the Merton portfolio optimization problem. We first consider the HJB PDE defined in Equation~\eqref{eq:HJB_PDE}, with the parameters $r=0.02,T = 1.0, \gamma = 0.95, \mu = 0.0219, \sigma = 0.2$. These parameters define the market environment and investor preferences for solving the HJB PDE (see  Section~\ref{sec: merton portfolio optimization} for further details). \LW{We then consider the second HJB PDE defined in Equation~\eqref{eq:parametric_volatility_HJB}, where the volatility $\sigma$ is treated as an input variable.}

\subsection{Configuration}
\label{sec:exp_config}
Across both experiments, we compare four models: QPINN, Quantum-inspired PINN, Counterpart PINN, and FC PINN. The QPINN uses a tensor-decomposed model with an added entangling layer, while the Quantum-inspired PINN uses the same tensor-decomposed model without the added entangling layer. The Counterpart PINN is a classical direct parameterization that shares a similar tensor-decomposed inductive bias with the Quantum-inspired PINN. The FC PINN is a fully connected neural network with $5$ hidden layers, $10$ neurons in each hidden layer, and Tanh activations. The input dimension, tensor rank, added entangling layer, and number of trainable parameters are specified separately for each HJB PDE below.

\LW{All four models are trained with the same optimization protocol in both experiments. We use the LAMB optimizer introduced by \citet{you2020largebatchoptimizationdeep} with \texttt{weight\_decay=0} and \texttt{betas=(0.0, 0.0)}. The learning rate is initialized at $5\times 10^{-3}$ and is multiplied by $0.995$ after every optimizer step. Each training run consists of $2000$ epochs. The phase parameters of the QPINN and the Quantum-inspired PINN are initialized randomly, and the parameters of the added entangling layer are also initialized randomly when present. To reduce sensitivity to initialization, we use a multi-start selection strategy in which each model is first trained from $8$ independent candidates for $70$ epochs. We then select the candidate with the smallest total HJB loss and continue training this candidate for the remaining epochs. For simplicity, we use automatic differentiation for all derivative computations. A common output scaling factor of $5$ is applied to all models, with an additional output shift of $1$ in the second experiment. All model evaluations are carried out in PyTorch using the Apple M2 Pro CPU with the Metal Performance Shaders (MPS) backend throughout the experiments.}

\subsection{HJB PDE with fixed volatility}
\label{sec:fixed_volatility_experiment}
This experiment studies the simple case in which the analytical solution is separable in $t$ and $x$. We first specify how the four models are instantiated for this HJB PDE and then compare their training losses and learned value functions.

\subsubsection{Model configuration}
\bmhead{QPINN}
The QPINN is based on the tensor-decomposed model in Corollary \ref{coro:rank1_poly} (Theorem \ref{theorem:tensor_poly} when tensor rank $R=1$) with an added entangling layer. We take the controlled-$e^{-iH(\lambda)}$ as the added entangling layer with the 4-qubit entangling Hamiltonian $H(\lambda) = \frac{\lambda}{2}\,(I \otimes Z \otimes Z \otimes I)$ where $\lambda \in \mathbb{R}$ is a trainable parameter, as defined in Equations~\ref{eq:add_layer} and \ref{eq:add_hamiltonian}. The circuit diagram of our QPINN is shown in Figure~\ref{fig:QPINN_HJB_circuit}. The tensor-decomposed model follows Corollary \ref{coro:rank1_poly} with polynomial degree $L=1$ and number of variables $D=2$. \LW{As in the circuit block of Figure~\ref{fig:V_circuit_main}, each input variable is represented by a two-qubit register. In Figure~\ref{fig:QPINN_HJB_circuit}, qubit 1 is the Hadamard-test control, while qubits 2 and 3 form one input register and qubits 4 and 5 form the other input register.} The added entangling layer controlled-$e^{-iH(\lambda)}$ is implemented by an \texttt{IsingZZ} gate on qubits 3 and 4, with identity matrices on qubits 2 and 5, under the control of qubit 1. \LW{Thus, this \texttt{IsingZZ} gate introduces weak entanglement between the two input variables.} The QPINN therefore has 6 parameters from the tensor-decomposed model and 1 parameter from the added entangling layer, which means there are 7 trainable parameters in total. 
\begin{figure}[h]
    \centering
    \includegraphics[width=1\linewidth]{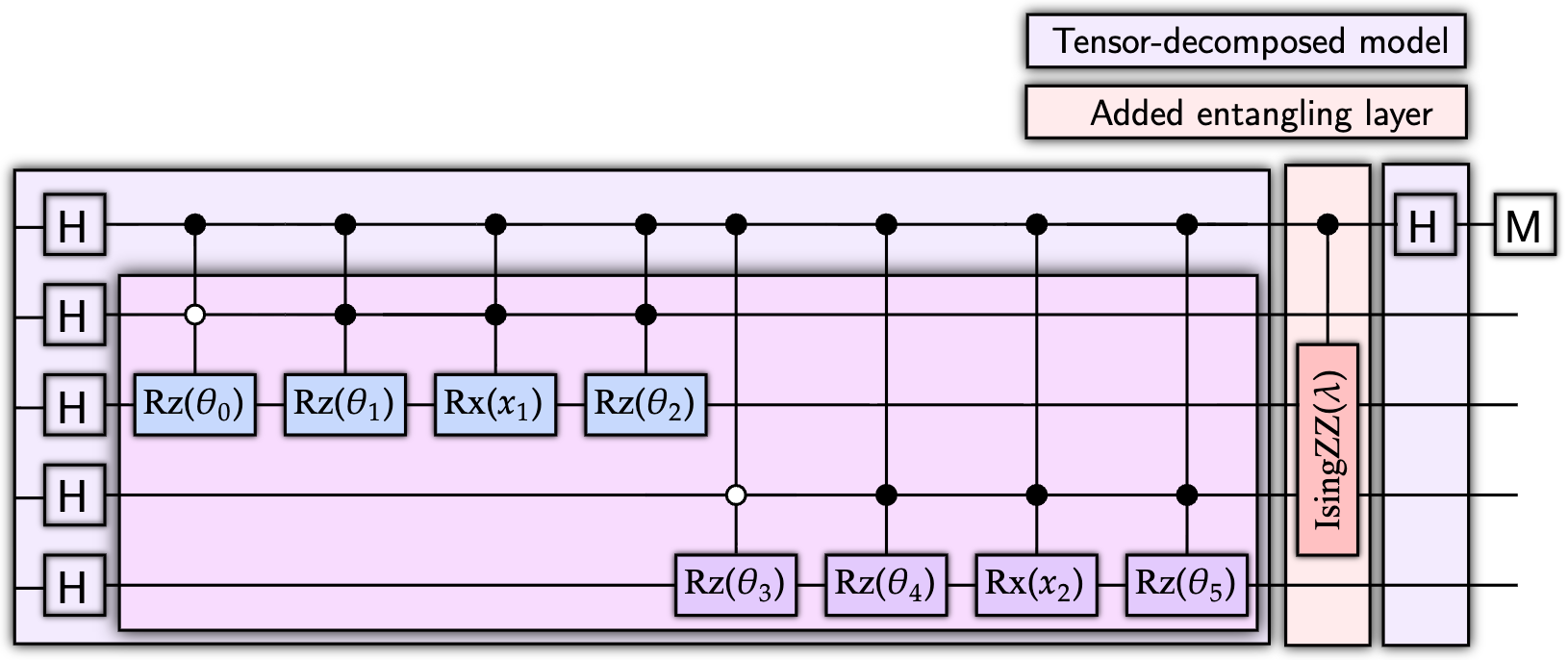}
    \caption{Circuit of QPINN for solving HJB PDE in the Merton portfolio optimization. It combines a tensor-decomposed model with an added entangling layer. The \texttt{IsingZZ} gate denotes the two-qubit operation $e^{-i \frac{\theta}{2} Z \otimes Z}$, and \texttt{Rz}, \texttt{Rx} denote single-qubit rotation gates about the Pauli-Z, Pauli-X axes, respectively. }
    \label{fig:QPINN_HJB_circuit}
\end{figure}

When $\lambda \in \mathbb{R}$ in general, we cannot express the hypothesis function explicitly, but the hypothesis space also necessarily contains functions of the tensor-decomposed structure $\{p_1(x) p_2(t)\}$. Although the added entangling layer enlarges the hypothesis space of the QPINN beyond tensor-decomposed polynomials, the part of the circuit preceding this layer still follows the tensor-decomposed structure. In fact, the actual analytical solution of this HJB PDE is $v(t, x)=\exp (-k(T-t)) \frac{x^\gamma}{\gamma}$, which is also a product of two univariate functions and is similar to the tensor-decomposed structure $p_1(x)p_2(t)$. Hence, this tensor-decomposed structure contains useful information for training and is referred to as an inductive bias. The QPINN retains a similar inductive bias to the Quantum-inspired PINN since the added entangling layer weakens it but does not eliminate it. The analytical solution is used only to interpret this benchmark, since real-world PDE problems often have no analytic solution and our proposed frameworks can also be applied to more complex cases where analytical solutions are unavailable.

\bmhead{Quantum-inspired PINN}
When the QPINN has parameter $\lambda=0$, the added entangling layer reduces to an identity matrix. The Quantum-inspired PINN is obtained by fixing $\lambda=0$ throughout training, so it uses the same tensor-decomposed model without the added entangling layer. In the present setting, the polynomial degree is $L=1$, the number of variables is $D=2$, and the tensor rank is $R=1$. Therefore, the hypothesis space contains product-form functions $p_1(x)p_2(t)$, where $p_1$ and $p_2$ are degree-$1$ univariate polynomials. This gives a rank-$1$ tensor-decomposed structure. A larger value of $L$ would enlarge the univariate polynomial factors and improve expressivity, but would also increase the required resource costs.

\bmhead{Counterpart PINN}
We compare our QPINN and Quantum-inspired PINN with their classical counterpart, the \textit{Counterpart PINN}, which shares a similar tensor-decomposed inductive bias with the Quantum-inspired PINN. The trainable parameters of the Counterpart PINN are the coefficients of two univariate polynomials $p_1$ and $p_2$ with degree $2$, giving a total of $6$ parameters.

\bmhead{FC PINN}
We also apply a commonly used fully connected PINN, the \textit{FC PINN}, which consists of $5$ hidden layers with $10$ neurons each and Tanh activations, and contains $481$ trainable parameters in total.

\subsubsection{Numerical results}
\LW{We perform $30$ independent runs for the HJB PDE with fixed volatility. The seed of run index $i=0,\ldots,29$ is set to $51+100i$. The loss function is given in Equations~\ref{eq:HJB_loss} and~\ref{eq:HJB_loss_details} with the loss weights $W_d=W_1=1$ and $W_2=5$. The residual term is evaluated on $50$ paired, uniformly spaced values of $x$ and $t$ in $[0.01,0.99]$, while the boundary terms use the corresponding grids at $t=0.99$ and $x=0.99$.} The loss value comparison of QPINN, Quantum-inspired PINN, and two PINNs for solving the Merton Portfolio Optimization HJB PDE defined in Equation~\eqref{eq:HJB_PDE} is shown in Figure \ref{fig:loss_comparison}. 

\begin{figure}[t]
    \centering
    \includegraphics[width=1\linewidth]{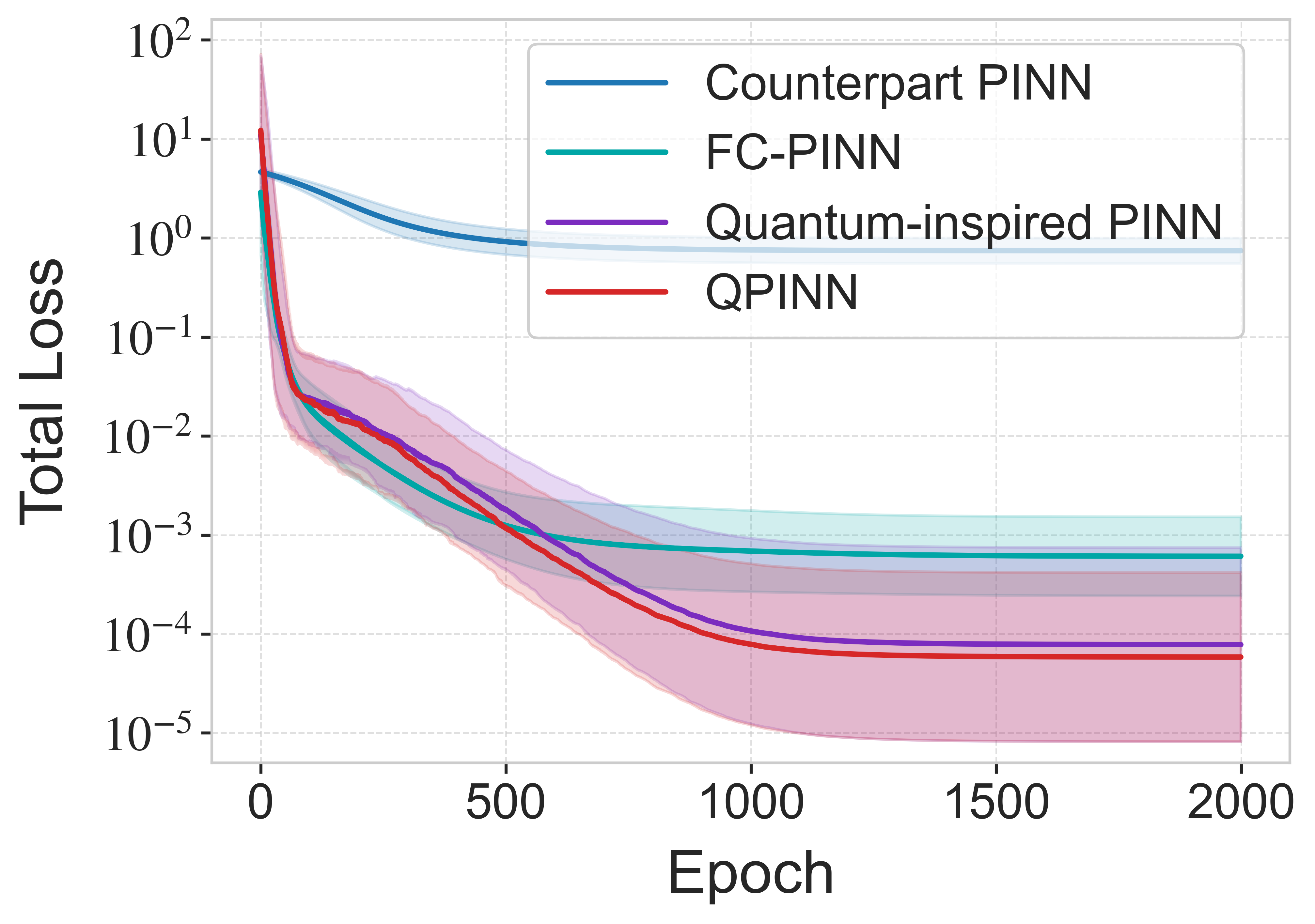}
    \caption{Comparison of loss values for solving the HJB PDE in the Merton portfolio optimization, based on 30 independent runs of the QPINN, Quantum-inspired PINN, Counterpart PINN and FC PINN, each trained for 2000 epochs. Curves represent the geometric mean loss, and shaded regions denote ±1 geometric standard deviation.}
    \label{fig:loss_comparison}
\end{figure}

\LW{In Figure \ref{fig:loss_comparison}, the QPINN attains the lowest final loss, followed by the Quantum-inspired PINN, the FC PINN, and the Counterpart PINN, even though the FC PINN has more than 80 times as many trainable parameters as the Quantum-inspired PINN. The gap between the QPINN and the Quantum-inspired PINN is visible but relatively moderate. This behavior supports the view that entanglement can enlarge the hypothesis space, while a larger empirical advantage may require a more suitable and problem-oriented design of the added entangling layer. The lower losses of the QPINN and Quantum-inspired PINN compared with the FC PINN suggest that the tensor-decomposed inductive bias is useful for this HJB PDE.}

To facilitate a more direct comparison, Figure \ref{fig:3D plot comparison} illustrates the 3D surfaces of the analytical solution and the approximations by different models obtained after 2000 training epochs in the first run. The analytical solution represents the maximum expected utility of wealth presented in Equation~\eqref{eq:analytical_solution}. The QPINN, Quantum-inspired PINN, FC PINN, and Counterpart PINN plots show how each model approximates this analytical benchmark. \LW{Visually, the QPINN and Quantum-inspired PINN surfaces are closest to the analytical solution, while the FC PINN and Counterpart PINN show larger deviations. Although no explicit boundary loss is imposed at $x=0$, the FC PINN and the QPINN/Quantum-inspired PINN recover the natural zero wealth trace $v(t, 0) \approx 0$. The annotated errors in Figure \ref{fig:3D plot comparison} follow the same order as the final losses in Figure \ref{fig:loss_comparison}: the QPINN has the smallest error, followed by the Quantum-inspired PINN, the FC PINN, and the Counterpart PINN.}
\begin{figure}[th]
    \centering
    \includegraphics[width=1\linewidth]{figures/Comparison_plot.png}
    \caption{3D comparison of the analytical solution and the approximated solutions of the HJB PDE in the Merton portfolio optimization after 2000 training epochs, based on the first run of the QPINN, Quantum-inspired PINN, Counterpart PINN, and FC PINN. The analytical solution is $v(t, x)=\exp (-0.019857375(1-t)) \frac{x^{0.95}}{0.95}$.}
    \label{fig:3D plot comparison}
\end{figure}

\begin{figure}[ht]
    \centering
    \includegraphics[width=1\linewidth]{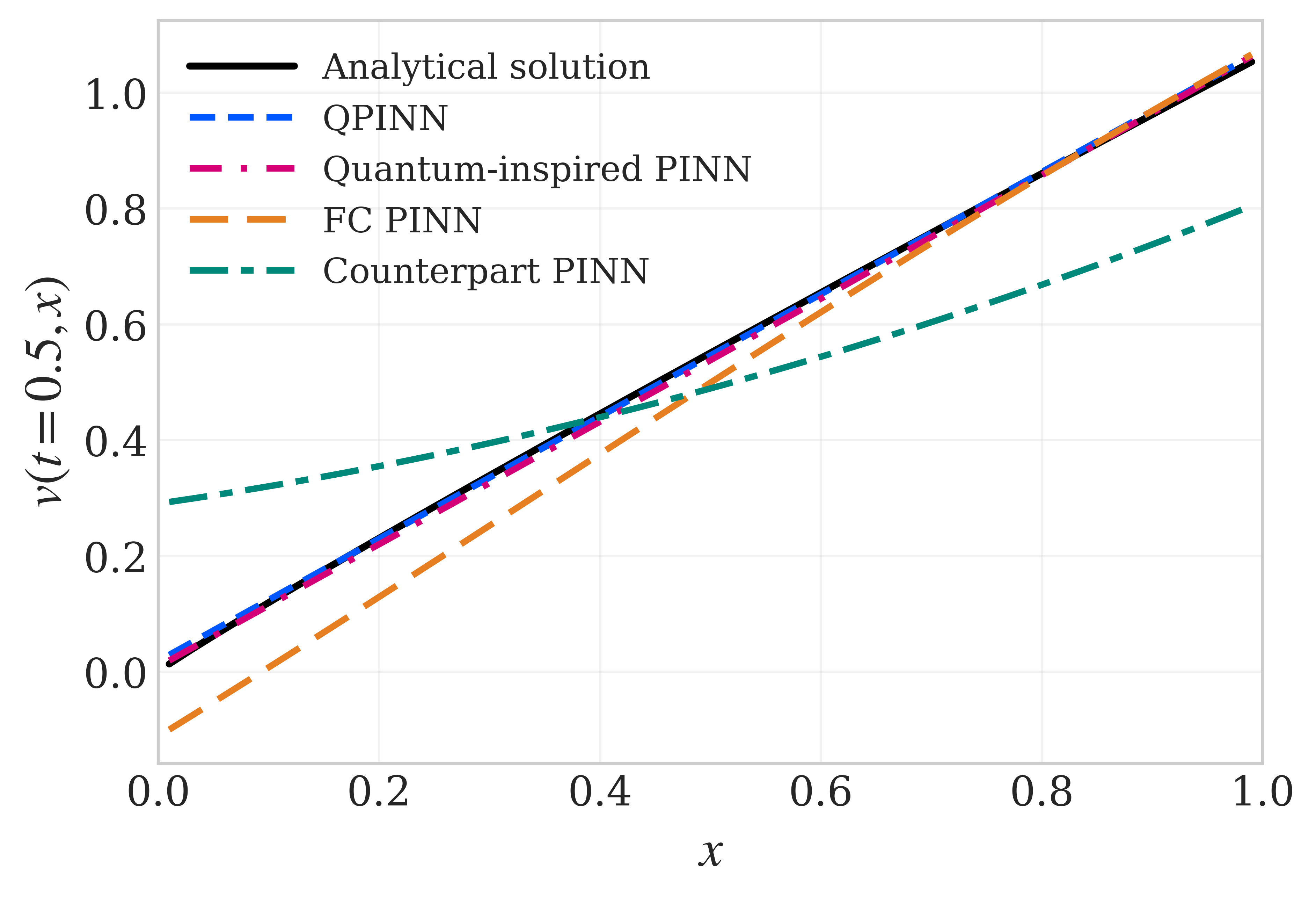}
    \caption{2D comparison of the analytical solution and the approximated solutions of the HJB PDE in the Merton portfolio optimization after 2000 training epochs, based on the first run of the QPINN, Quantum-inspired PINN, Counterpart PINN, and FC PINN at time $t=0.5$.}
    \label{fig:2D_plot_comparison}
\end{figure}

Figure \ref{fig:2D_plot_comparison} shows the fixed-time slice $v(t=0.5, x)$ for the QPINN, Quantum-inspired PINN, Counterpart PINN, and FC PINN. This view provides a clearer understanding of how each model approximates the analytical solution along the spatial dimension. The function $v(t=0.5, x)$ represents the expected utility of wealth evaluated at the mid-horizon time $t=0.5$. As observed, the QPINN approximation closely follows the analytical curve across the entire domain, achieving high accuracy both near the boundaries and within the interior region. \LW{The Quantum-inspired PINN also follows the analytical curve closely, but small deviations remain in the interior region.} In addition, the FC PINN and the Counterpart PINN exhibit even larger deviations from the analytical solution, despite the FC PINN using 481 trainable parameters, while the QPINN, Quantum-inspired PINN, and Counterpart PINN use 7, 6, and 6 trainable parameters, respectively.

\LW{Overall, these plots suggest that the tensor-decomposed structure is beneficial for this portfolio HJB PDE experiment. The QPINN gives the best loss and approximation error in the reported setting, but its advantage over the Quantum-inspired PINN is not large. This is consistent with the fact that we use only a simple representative added entangling layer, whose entangling effect is limited. The results therefore support a cautious interpretation: the added entangling layer improves performance in this case, while the main gain appears to come from using a hypothesis function matched to the separable structure of the solution.}

\subsection{HJB PDE with volatility as an input variable}
\label{sec:parametric_volatility_experiment}
\LW{This experiment uses the same four models and training protocol as before, but the input domain is enlarged from $(t,x)$ to $(t,x,\sigma)$. With a non-separable analytical solution and a higher-dimensional input domain ($D=3$), this setting is designed to evaluate whether the resource and accuracy advantages of our models also appear beyond the most favorable separable instance. We first describe the changed input structure and the added entangling layer, and then compare the training losses and learned value functions.}

\subsubsection{Model configuration}
\LW{The preceding HJB PDE experiment considered a simple benchmark whose analytical solution is separable. In that case, the input dimension is $D=2$, and our quantum models are taken with tensor rank $R=1$. We now consider a more complex benchmark with a higher-dimensional input domain and a non-separable analytical solution. The input dimension is $D=3$, and our quantum models are taken with tensor rank $R=2$. The two benchmarks also differ in the role of the added entangling layer. In the simple benchmark, the added entangling layer induces only weak entanglement with a single \texttt{IsingZZ} gate, whereas in the present benchmark it is chosen to induce much stronger entanglement with four \texttt{IsingZZ} gates to better represent this non-separable structure.}

\LW{We then apply the four models to the HJB PDE defined in Equation~\eqref{eq:parametric_volatility_HJB}, treating the volatility $\sigma\in[0.15,0.25]$ as an input variable. The analytical solution in Equation~\eqref{eq:parametric_volatility_solution} contains the factor $\exp(-k(\sigma)(T-t))$. Since $k(\sigma)$ depends on $\sigma$, the variables $t$ and $\sigma$ enter this factor jointly, and the solution is not separable with respect to these two variables. This benchmark therefore no longer provides the purely separable structure that most directly matches the tensor-decomposed inductive bias.}

\LW{By Theorem~\ref{theorem:tensor_poly} and the circuit block shown in Figure~\ref{fig:V_circuit_main}, the tensor-decomposed model represents each input variable by a two-qubit register. In the $D=3$ experiment with input variables $x,t,\sigma$, we denote these registers by $\mathcal{S}_x$, $\mathcal{S}_t$, and $\mathcal{S}_\sigma$, so the three input variables are encoded on six variable qubits in total.   This allocation reflects the different levels of expressivity needed to approximate the dependence of the analytical solution along each variable. The analytical solution in Equation~\eqref{eq:parametric_volatility_solution} factors into $x^\gamma/\gamma$ and $\exp(-k(\sigma)(T-t))$. Hence the dependence on $x$ remains separable from the pair $(t,\sigma)$, while $t$ and $\sigma$ are coupled through $k(\sigma)(T-t)$. For this reason, the added entangling layer leaves the $x$ register unchanged and introduces entanglement only between the two-qubit $t$ register and the two-qubit $\sigma$ register. This added entanglement provides extra expressivity for representing the non-separable dependence between $t$ and $\sigma$. To reduce computational cost, the added entangling layer parameters are optimized only from epoch $500$ onward, since the early training stage is less likely to be limited by expressivity. We label the Pauli-$Z$ operators on the two qubits of $\mathcal{S}_t$ by $Z_1,Z_2$, and the Pauli-$Z$ operators on the two qubits of $\mathcal{S}_\sigma$ by $Z_3,Z_4$. The added entangling layer on the six variable qubits is $I_x\otimes e^{-iH_{t\sigma}(\boldsymbol{\lambda})}$, where the entangling Hamiltonian is
\begin{equation}
\label{eq:volatility_input_added_hamiltonian}
\begin{aligned}
H_{t\sigma}(\boldsymbol{\lambda})
&=\frac{1}{2}\left(
\lambda_{13} Z_1 Z_3
+\lambda_{14} Z_1 Z_4 \right.\\
&\quad\left.
+\lambda_{23} Z_2 Z_3
+\lambda_{24} Z_2 Z_4
\right).
\end{aligned}
\end{equation}
where $\boldsymbol{\lambda}=(\lambda_{13},\lambda_{14},\lambda_{23},\lambda_{24})\in\mathbb{R}^4$ are trainable parameters. Here $I_x$ is the identity on the two-qubit $x$ register. The four $Z_iZ_j$ terms couple each $t$-qubit with each $\sigma$-qubit, and therefore define a direct entangling layer between $\mathcal{S}_t$ and $\mathcal{S}_\sigma$. When $\boldsymbol{\lambda}=\boldsymbol{0}$, this layer reduces to the identity, so the construction follows Equations~\ref{eq:add_layer} and \ref{eq:add_hamiltonian}. The complete circuit and its register-level interpretation are provided in Appendix~\ref{sec:volatility_input_qpinn_circuit}.}

\LW{For the QPINN and the Quantum-inspired PINN, the tensor-decomposed model has tensor rank $R=2$ over the three input variables $(t,x,\sigma)$. The QPINN uses this tensor-decomposed model together with the added entangling layer defined in Equation~\eqref{eq:volatility_input_added_hamiltonian}. In this setting, the Quantum-inspired PINN and the Counterpart PINN each contain $30$ trainable parameters, while the QPINN contains $34$ trainable parameters because of the additional entangling parameters. The FC PINN has the same fully connected structure as before, with input dimension $3$, and contains $491$ trainable parameters.}

\subsubsection{Numerical results}
\LW{Figure~\ref{fig:loss_comparison_volatility_input} shows the loss curves for this experiment with $\sigma$ as an input variable. The QPINN obtains the lowest final loss, followed by the Quantum-inspired PINN and the FC PINN, while the Counterpart PINN remains at a significantly larger loss. Compared with the fixed-volatility experiment in Figure~\ref{fig:loss_comparison}, the gap between the QPINN and the Quantum-inspired PINN becomes larger. This is consistent with our design choice: the added entangling layer is stronger in this experiment and is introduced to increase expressivity for the non-separable dependence between $t$ and $\sigma$. At the same time, the Quantum-inspired PINN still outperforms the FC PINN because the solution remains separable between $x$ and the pair $(t,\sigma)$, and the tensor rank is increased to $R=2$. The parameter gap relative to the FC PINN is nevertheless much smaller than in the fixed-volatility experiment, decreasing from about $80$ times to about $15$ times, which reflects the reduced advantage of the tensor-decomposed inductive bias in this less favorable setting. These results show that the resource and accuracy advantages also appear beyond the most favorable separable instance.}

\begin{figure}[t]
    \centering
    \includegraphics[width=1\linewidth]{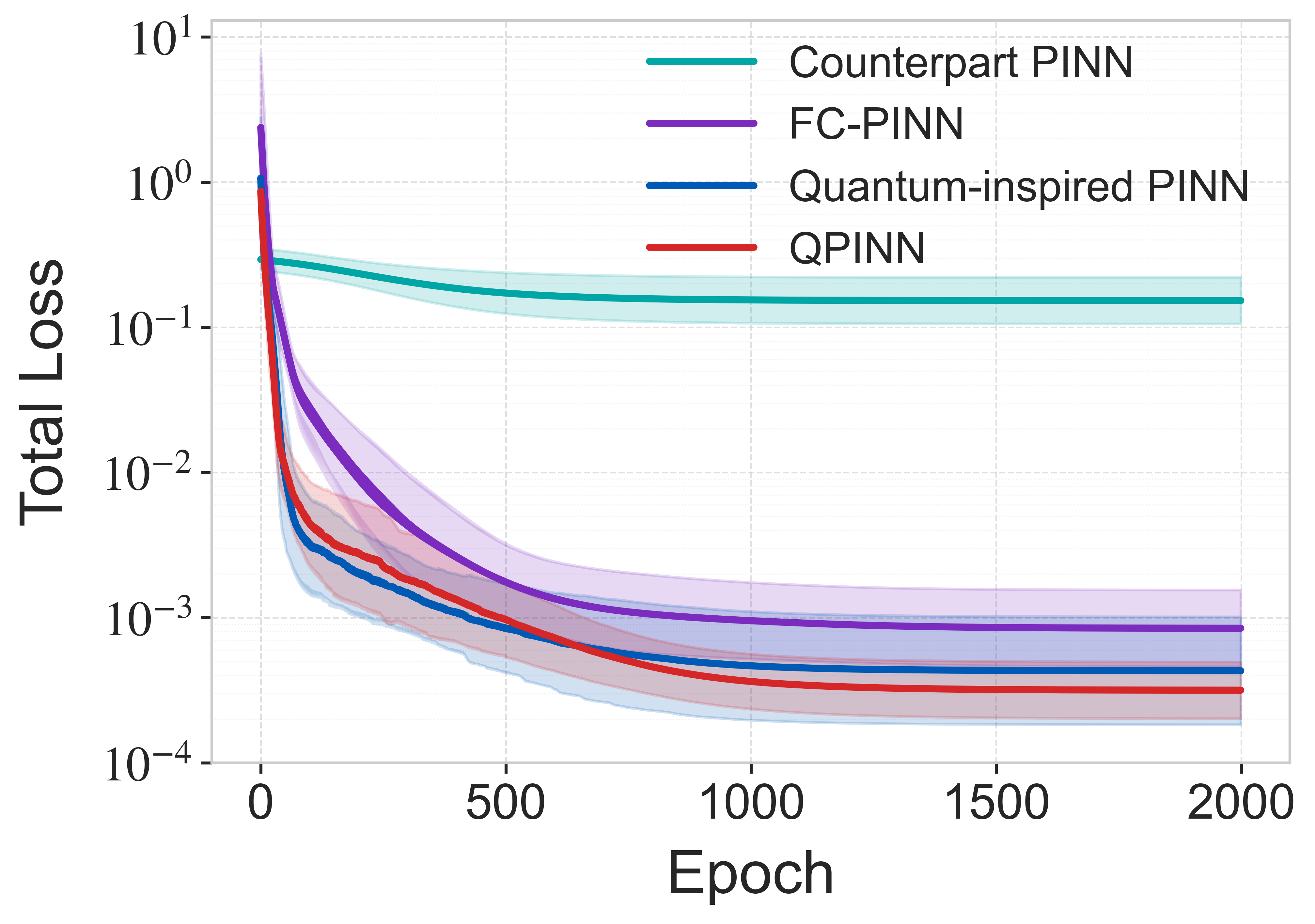}
    \caption{Comparison of loss values for solving the HJB PDE in Equation~\eqref{eq:parametric_volatility_HJB}, with volatility $\sigma$ treated as an input variable. The results are based on $10$ independent runs of the QPINN, Quantum-inspired PINN, Counterpart PINN, and FC PINN, each trained for $2000$ epochs. Curves represent the geometric mean loss, and shaded regions denote $\pm 1$ geometric standard deviation.}
    \label{fig:loss_comparison_volatility_input}
\end{figure}

\LW{To compare the learned value functions directly, Figure~\ref{fig:3D_comparison_volatility_input} reports the analytical solution and the four learned approximations after fixing $\sigma=0.15$ for visualization. The learned approximations are functions of the full three-dimensional input $(t,x,\sigma)$, and the figure shows their two-dimensional slices over $(t,x)$ after $2000$ training epochs in the first run. The relative errors shown in the figure follow the same qualitative order as the loss curves: the QPINN is closest to the analytical solution, followed by the Quantum-inspired PINN and the FC PINN, while the Counterpart PINN gives the largest deviation. These results are consistent with the interpretation that the added \texttt{IsingZZ} entangling layer helps the QPINN capture the coupled dependence on $(t,\sigma)$ beyond the separable structure used by the tensor-decomposed inductive bias.}

\begin{figure}[h]
    \centering
    \includegraphics[width=1\linewidth]{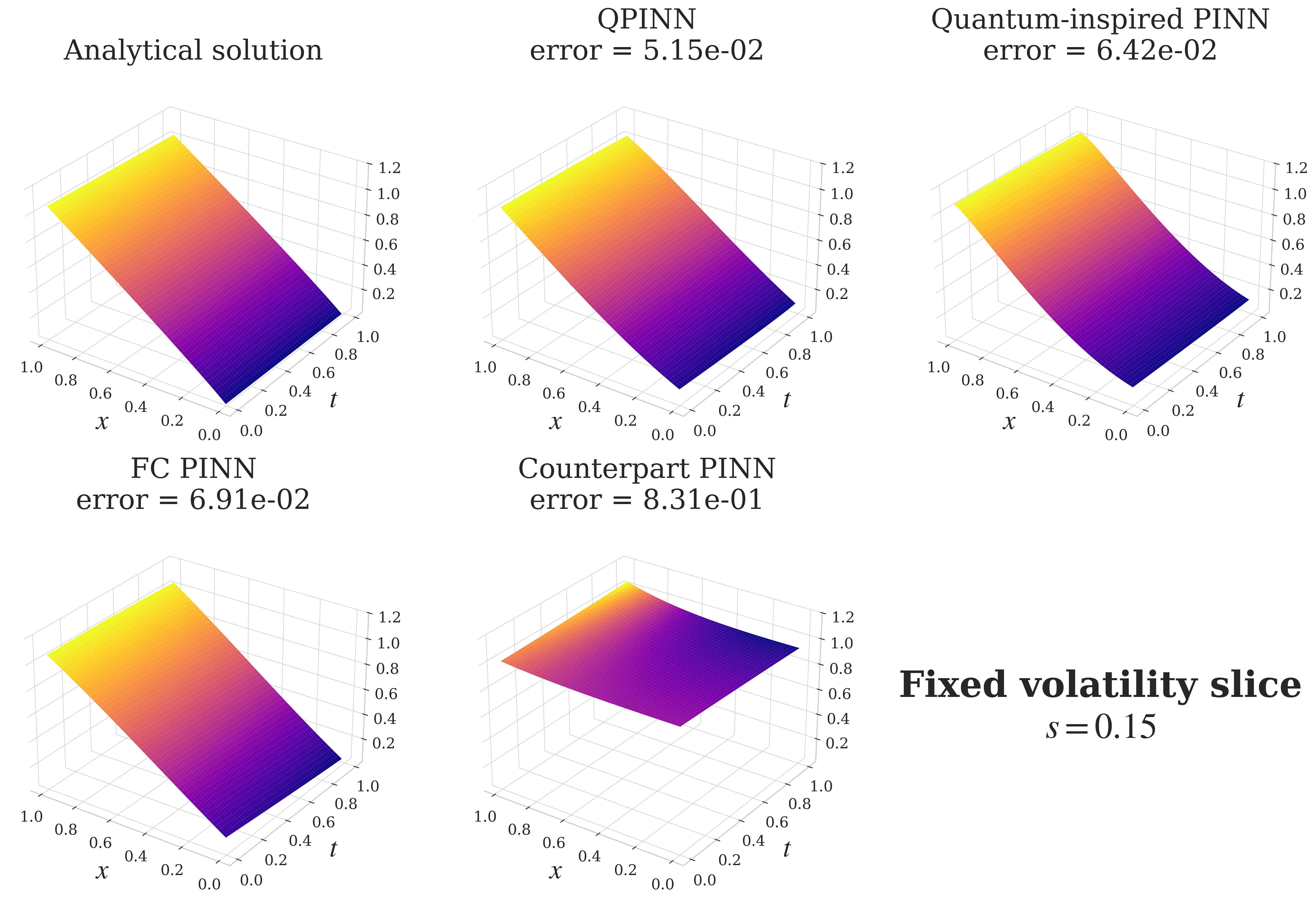}
    \caption{3D comparison of the analytical solution and the approximated solutions of the HJB PDE in Equation~\eqref{eq:parametric_volatility_HJB} after $2000$ training epochs, based on the first run. The learned approximations are functions of $(t,x,\sigma)$, and the figure shows the fixed-volatility slice $\sigma=0.15$. The annotated errors are relative $\ell_2$ errors against the analytical solution on the plotted grid.}
    \label{fig:3D_comparison_volatility_input}
\end{figure}

\subsection{Financial interpretation}
To illustrate the financial interpretation, we focus on the fixed-volatility HJB PDE in Equation~\eqref{eq:HJB_PDE}, with parameters $r=0.02$, $T=1.0$, $\gamma=0.95$, $\mu=0.0219$, and $\sigma=0.2$. Here, $r$ denotes the risk-free interest rate, $T$ is the investment horizon, and $\mu$ and $\sigma$ represent the expected return and volatility of the risky asset, respectively. The parameter $\gamma$ characterizes the investor's attitude toward risk. In this setting, the analytical solution to the HJB PDE is $v(t, x)=\exp (-0.019857375(1-t)) \frac{x^{0.95}}{0.95}$, which represents the maximum expected utility of wealth. The optimal investment fraction in the risky asset is constant and equal to $\hat{\alpha}=0.95$, meaning that 95\% of the portfolio should be allocated to the risky asset and the remaining 5\% to the risk-free asset. The QPINN and PINN models can be used to approximate the value function $v(t, x)$. Once the hypothesis functions are obtained from the approximation shown in Figure \ref{fig:3D plot comparison}, the optimal control $\hat{\alpha}(t)$ can be directly derived from the HJB PDE through Equations~\ref{eq:init_HJB} and \ref{eq:general_optimal_control}, allowing an investor to achieve the maximum expected wealth. 

\section{Conclusion}
\label{sec:conclusion}
In this work, we have introduced a family of quantum models capable of implementing different forms of polynomials, including their univariate, multivariate, and tensor-decomposed variants for solving PDEs in the framework of the PINN model. It is noted that our results reduce the required quantum circuit resources from exponential to polynomial complexity for implementing tensor-decomposed polynomials when the tensor rank is not too large. Hence, it can be seen as an extension to the results presented in \citet{yu2024nonasymptotic}, although the underlying assumptions and complexity are different (see Table \ref{table:comparison}). \LW{From a broader perspective, our results indicate that tensor-decomposed models are useful for structured PDE settings where the solution has an exact or low-rank decomposable form, rather than for PDEs in general.} \LW{The two HJB PDE experiments further show that this advantage is not limited to the most favorable separable instance: when volatility is treated as an input variable, the Quantum-inspired PINN still outperforms the FC PINN with fewer parameters, although the parameter gap becomes smaller.} We further find that such quantum tensor-decomposed models can be simulated efficiently on classical hardware, which implies that \LW{the tensor-decomposed inductive bias can be realized within PINN architectures implemented entirely on classical computers.}

\LW{In contrast, the added entangling layer yields a QPINN that is no longer efficiently simulable on classical hardware.} While many quantum machine learning approaches offer limited guarantees regarding the existence of suitable solutions within their hypothesis spaces, our QPINN retains the potential expressivity benefits of quantum entanglement while ensuring that the hypothesis space contains at least one suitable solution. Although our work does not provide an explicit error bound for this approximation, such bounds can be inferred from classical results on polynomial approximations of (continuous) target functions. \LW{The fixed-volatility experiment uses a simple added entangling layer and shows only a moderate gap between the QPINN and the Quantum-inspired PINN. In the volatility-input experiment, the QPINN uses a stronger added entangling layer, and the larger gap between the QPINN and the Quantum-inspired PINN indicates that a problem-oriented entangling design can provide useful additional expressivity for non-separable dependence.}

Our study primarily addresses expressivity, leaving the equally crucial question of theoretical trainability open for future work. Since trainability is widely believed to be an important source of potential quantum neural network advantages \citep{Schreiber2023Classical}, future work should integrate expressivity analysis with a systematic investigation of optimization landscapes and gradient behavior. Moreover, a rigorous quantification of approximation errors between computational model hypothesis functions and PDE solutions should be essential to establish stronger theoretical guarantees.

Overall, our work develops both a quantum resource analysis for implementing tensor-decomposed polynomials and a hypothesis space characterization of the QPINN and the Quantum-inspired PINN. For both the QPINN and the Quantum-inspired PINN, the hypothesis space contains the entire family of tensor-decomposed polynomials, which guarantees the existence of an approximation to the PDE solution. In the case of the QPINN, the added entangling layer further expands the hypothesis space beyond tensor-decomposed polynomials. \LW{Through a suitable choice of the entangling Hamiltonian, it can also introduce a problem-oriented inductive bias, providing additional expressivity that the Quantum-inspired PINN does not possess. Moreover, because these architectures are built upon an efficient tensor-decomposed structure, they may provide a useful framework for exploring structured PDE solvers on near-term quantum hardware.}
\backmatter


\bmhead{Author Contributions}
L.W. conceived the research idea, developed the theoretical framework, implemented the models, performed the experiments, and wrote the manuscript.
A.L., S.S., and Z.T. supervised the research, provided guidance throughout the project, and contributed to the discussion, revision, and improvement of the manuscript.
All authors reviewed and approved the final version of the paper.

\bmhead{Funding}
This research was supported by French government under the France 2030 program, reference ANR-11-IDEX-0003 within the OI H-Code.

\bmhead{Data Availability}
The source code implementation is available at [\texttt{https://github.com/Letao-WANG
/QPINNs-for-Portfolio/tree/main}].

\section*{Declarations}

\textbf{Conflicts of interest} The authors declare no competing interests.

\bibliography{sn-bibliography}


\clearpage
\onecolumn
\begin{appendices}

\section{Tensor-decomposed polynomial derivation}
\label{sec: TD_polynomials}
We first recall the notion of \textit{tensor rank decomposition} \citep{kolda2009tensor}. Given an index $n_m\in I_m,\forall m\in[D]$ with $D$ as an integer, let $\mathcal A\in\mathbb F^{I_1\times I_2\times\cdots\times I_D}$ be a $D$-order tensor over a field~$\mathbb F$ with entries $\mathcal A_{n_1,\dots,n_D}$. A tensor rank decomposition is a representation
\begin{equation}
\label{eq:tensor}
\mathcal A
=\sum_{r=1}^R
\lambda_r\,
\bigl(
\mathbf a_r^{(1)}\otimes
\mathbf a_r^{(2)}\otimes
\cdots\otimes
\mathbf a_r^{(D)}
\bigr),
\end{equation}
where $\lambda_r\in\mathbb F $ is a coefficient, $\mathbf a_{r}^{(m)}\in\mathbb F^{I_m}$ is a tensor factor, $R$ denotes the decomposition rank, and $\otimes$ denotes the outer product. For any $D'$-order tensor  $\mathbf{a}\in\mathbb F^{I_1\cdots\times I_{D'}}$, we can consider $\mathbf{a}_{n_1,\dots,n_{D'}}\in\mathbb F$ as the $\left(n_1, \ldots, n_{D'}\right)$ entry of $\mathbf{a}$. By using the entry equation
$$
\bigl(\mathbf a_r^{(1)}\otimes
\mathbf a_r^{(2)}\otimes
\cdots\otimes
\mathbf a_r^{(D)}\bigr)_{n_1,\dots,n_D}
=
\mathbf a_{r,n_1}^{(1)}
\mathbf a_{r,n_2}^{(2)}
\cdots
\mathbf a_{r,n_D}^{(D)}
\equiv
\mathbf a_{r,n_1}
\mathbf a_{r,n_2}
\cdots
\mathbf a_{r,n_D},
$$
we can obtain the entry of tensor $\mathcal A$:
\begin{equation}
\label{eq:tensor_entry}
\mathcal A_{n_1,\dots,n_D}=\sum_{r=1}^R \lambda_r\mathbf{a}_{ r,n_1}\mathbf{a}_{ r,n_2} \cdots \mathbf{a}_{ r,n_D},
\end{equation}
where $\mathcal A_{n_1,\dots,n_D}\in\mathbb F,\lambda_r\in\mathbb F$ and $\mathbf{a}_{ r,n_j}\in\mathbb F,\forall j\in[D]$. It is worth noting that there is a variant of the tensor rank decomposition, known as the \textit{CP decomposition} \citep{Hitchcock1927TheEO}. The key difference is that tensor rank decomposition provides an exact decomposition, whereas CP decomposition yields an approximate one, with the rank specified by the user. From Equation \eqref{eq:tensor_entry}, the multivariate polynomial coefficients $c_{\boldsymbol{n}}$ in Definition \ref{def:multivariate polynomial} can be expressed as
\begin{equation}
\label{eq:tensor_general}
c_{\boldsymbol{n}}=c_{n_1,\dots,n_D}
=\sum_{r=1}^R
\lambda_r c_{r,n_1}\;c_{r,n_2}\;\cdots\;c_{r,n_D}.
\end{equation}
Then we define another version of a univariate polynomial with degree $L$ such that
\begin{equation}
\label{eq:another_poly}
p_{r, j}\left(x_j\right)=\sum_{n_j}c_{r,n_j}x_j^{n_j},
\end{equation}
where $c_{r,n_j}\in\mathbb{R},n_j\in[L],\forall r\in[R],j\in[D]$. Combining Definition \ref{def:multivariate polynomial} for real multivariate polynomials with Equations \eqref{eq:tensor_general} and \eqref{eq:another_poly}, and assuming $\lambda_r\in\mathbb{R},\forall r\in[R]$, we obtain
\begin{align*}
\sum_{r=1}^R \lambda_r\prod_{j=1}^{D}p_{r,j}(x_j)
=\sum_{r=1}^R \lambda_r\prod_{j=1}^{D}\sum_{n_j}c_{r,n_j}x_j^{n_j}
=\sum_{r=1}^R \lambda_r\sum_{n_1} \sum_{n_2} \cdots \sum_{n_D}\prod_{j=1}^{D}( c_{r,n_j} x_j^{n_j})\\
=\sum_{n_1} \sum_{n_2} \cdots \sum_{n_D}\left(\sum_{r=1}^R \lambda_rc_{r,n_1} c_{r,n_2} \cdots c_{r,n_D}\right) x_1^{n_1}x_2^{n_2}\cdots x_D^{n_D}
=\sum_{\boldsymbol{n}}c_{\boldsymbol{n}} x^{\boldsymbol{n}}.
\end{align*}

It should be noted that for any real multivariate polynomial $p(\boldsymbol{x})$, there always exists an integer $R$ such that its coefficients admit Equation~\eqref{eq:tensor_general}. In the worst case, one can represent each monomial 
of $p(\boldsymbol{x})$ by univariate polynomials $p_{r,j}(x_j)$, so that $R$ is bounded above by the number of possible multi-indices $\boldsymbol{n}\in[L]^D$. Hence, $1\leq R \leq (L+1)^D$. 

We then give the formal definition: let $p(\boldsymbol{x})$ be a real multivariate polynomial with $D$ variables and degree at most $L$ in each variable in Definition \ref{def:multivariate polynomial}. We can always find coefficients $(c_{\boldsymbol{n}})$ that admit a decomposition of Equation \eqref{eq:tensor_general} with rank $R$, so $p(\boldsymbol{x})$ can be written as
\begin{equation*}
p(\boldsymbol{x})=\sum_{r=1}^R \lambda_r \prod_{j=1}^D p_{r,j}(x_j),
\end{equation*}
where each $p_{r,j}(x_j)$ is a univariate polynomial, $\lambda_r\in\mathbb{R},c_{r,n_j}\in\mathbb{R},n_j\in[L],\forall r\in[R],j\in[D]$. We call such $p(\boldsymbol{x})$ a tensor-decomposed polynomial (TD polynomial).

\section{Merton portfolio optimization derivation}
\label{sec:appendix Merton portfolio optimization}
In this work, we consider only the case without a jump component presented in \citet{jump_diff}. In particular, let $S_0=\left(S_0(t)\right)_{t \in[0, T]}$ and $S_1=\left(S_1(t)\right)_{t \in[0, T]}$ denote the amount of money the investor has in the risk-free asset and risky asset, respectively. The processes evolve according to
\begin{equation}
\label{eq:basic_def}
\left\{\begin{array}{l}
d S_0(t)=r S_0(t) d t \\
d S_1(t)=S_1(t)(\mu(t) d t+\sigma(t) d B(t)) \\
S_0(0)=1, S_1(0)=s
\end{array}\right.
\end{equation}
where $B(t)$ is a standard Brownian motion, $\mu(t)$ is the expected return of the asset at time $t$, and $\sigma(t)$ is the volatility of the asset at time $t$. We assume that $\mu(t)$ and $\sigma(t)$ are both stochastic processes.

Let $\pi=\left(\pi_0(t), \pi_1(t)\right)_{t \in[0, T]}$ denote an investor portfolio where $\pi_0(t)$ and $\pi_1(t)$ represent the proportion of wealth invested in the risk-free asset and in the risky asset, respectively. Therefore, for every $t \in[0, T]$, they satisfy the relation $\pi_0(t)+\pi_1(t)=1$. The wealth at time $t$ of such a portfolio is
\begin{equation}
\label{eq:X(t)}
X(t)=\pi_0(t) S_0(t)+\pi_1(t) S_1(t),
\end{equation}
and represents the total amount of money invested in the market. The portfolio is assumed to be self-financing, that is,
\begin{equation}
\label{eq:d X(t)}
d X(t)=\pi_0(t) d S_0(t)+\pi_1(t) d S_1(t).
\end{equation}
Combining Equation \eqref{eq:basic_def} \eqref{eq:X(t)} \eqref{eq:d X(t)}, the investor's wealth process is given by
\begin{equation}
\label{eq:pre_SDE}
d X(t)=X(t) r d t+\pi_1(t) S_1(t)(\mu(t)-r) d t+\pi_1(t) S_1(t) \sigma(t) d B(t).
\end{equation}
We define $X^{t, x, \alpha}(s)$ as the expected wealth according to Equation \eqref{eq:pre_SDE} starting from $X(s=t)=x$ and evolving until $s=T$, we also have
\begin{equation}
\label{eq:pre_SDE_evolve}
d X^{t, x, \alpha}(s)=X^{t, x, \alpha}(s) r d s+\pi_1(s) S_1(s)(\mu(s)-r) d s+\pi_1(s) S_1(s) \sigma(s) d B(s).
\end{equation}
Let $\alpha(t)$ denote the fraction of the total wealth invested in stocks at time $t$ such that
\begin{equation}
\label{eq:alpha(t)}
\alpha(t)=\frac{\pi_1(t) S_1(t)}{X(t)} \Longleftrightarrow \pi_1(t) S_1(t)=\alpha(t) X(t).
\end{equation}
Using Equation \eqref{eq:alpha(t)} \eqref{eq:pre_SDE_evolve}, one obtains
\begin{equation}
d X^{t, x, \alpha}(s)=X^{t, x, \alpha}(s)[(\alpha(s)(\mu(s)-r)+r) d s+\alpha(s) \sigma(s) d B(s)].
\end{equation}
The investor's objective is to maximize the expected utility over $[0, T]$, which is defined as
$$
\mathcal{J}(t, x, \alpha)=\mathbb{E}\left[U\left(X^{t, x, \alpha}(T)\right)\right],
$$
where $U$ is the investor's utility function. We introduce the following assumptions on the utility function:
\begin{itemize}
    \item $U(x)$ is a continuous, non-decreasing, and concave function on $[0, \infty)$ with $U(0)=0$.
    \item There exists $\gamma > 0$ and a constant $K>0$ such that $U(x) \leq K(1+x)^\gamma$ for all $x \in[0, \infty)$.
\end{itemize}
The Merton portfolio optimization problem is to find a function $v$ and an optimal control $\hat{\alpha}$ such that
\begin{equation}
\label{eq:optimization}
v(t, x)=\sup _{\alpha \in \mathcal{A}} \mathcal{J}(t, x, \alpha)=\mathcal{J}(t, x, \hat{\alpha}), \quad(t, x) \in[0, T] \times[0,+\infty),
\end{equation}
which is an optimal control problem. In fact, the function $v$ also solves its corresponding Hamilton-Jacobi-Bellman (HJB) equation:
\begin{equation}
\label{eq:init_HJB}
\begin{cases}\frac{\partial v}{\partial t}(t, x)+H\left(t, x, \frac{\partial v}{\partial x}(t, x), \frac{\partial^2 v}{\partial x^2}(t, x)\right)=0, & (t, x) \in[0, T) \times[0,+\infty) \\ v(T, x)=U(x), & x \in[0,+\infty)\end{cases}
\end{equation}
where, for $H(t, x, p, q) \in[0, T] \times \mathbb{R} \times \mathbb{R} \times \mathbb{R}$,
$$
H(t, x, p, q)=\sup _{\alpha \in A}\left\{p x r+p x(\mu-r) \alpha+\frac{1}{2} x^2 \sigma^2 q \alpha^2\right\},
$$
is the Hamiltonian of the problem. 

Let $\psi(\alpha):=p x r+p x(\mu-r) \alpha+\frac{1}{2} x^2 \sigma^2 q \alpha^2$. Assuming $\mu(t)$ and $\sigma(t)$ as constant,  $\mu > r$ and $U(x)=\frac{x^\gamma}{\gamma}$ with risk $\gamma$, we have
$$
\psi^{\prime}(\alpha)=p x(\mu-r)+q \sigma^2 x^2 \alpha.
$$
Let $\hat{\alpha}$ be such that $\psi^{\prime}(\hat{\alpha})=0$, which means
\begin{equation}
\label{eq:general_optimal_control}
\hat{\alpha}=-\frac{\mu-r}{\sigma^2} \frac{p}{q x}.
\end{equation}
When the investor is risk-averse, i.e., $\gamma\in(0,1)$, the point $\hat{\alpha}$ corresponds to the maximum of $\psi(\alpha)$, so we obtain
\begin{equation}
\label{eq:Hamilton}
H(t, x, p, q)=p r x-\frac{1}{2} \frac{p^2}{q}\left(\frac{\mu-r}{\sigma}\right)^2.
\end{equation}
The derivation of Equation \eqref{eq:Hamilton} is presented in \citet{jump_diff}. Hence, we have the corresponding HJB PDE from Equation \eqref{eq:init_HJB} and \eqref{eq:Hamilton} on the domain $(t, x) \in[0, T] \times[0,+\infty)$: 
\begin{equation}
\label{eq:PDE_appendix}
\begin{cases}
\partial_t v(t, x)\partial^2_x v(t, x)+\partial_x v(t, x) \partial^2_x v(t, x) r x-\frac{1}{2} \left(\frac{\mu-r}{\sigma}\right)^2 \partial_x v(t, x)^2=0,\\[1pt]
v(T,x)=\frac{x^\gamma}{\gamma},\\[1pt]
v(t,1)=\exp (-k(T-t))\frac{1}{\gamma}, 
\end{cases}
\end{equation}
with the analytical solution $v(t, x)=\exp (-k(T-t)) \frac{x^\gamma}{\gamma}$ where $k=\frac{1}{2} \frac{\gamma}{\gamma-1}\left(\frac{\mu-r}{\sigma}\right)^2-r \gamma$, which can be interpreted as the maximum expected utility of wealth. The derivation of the analytical solution and the optimal control $\hat{\alpha}=\frac{1}{1-\gamma} \frac{\mu-r}{\sigma^2}$ is also shown in \citet{jump_diff}. The solution $v(t,x)$ of the HJB PDE also solves Equation \eqref{eq:optimization}. We can then apply QPINNs to solve the HJB PDE, then calculate the optimal control based on Equation \eqref{eq:general_optimal_control}, and thus obtain the Merton portfolio problem solution. 

Note that the simplifying assumptions made here are just for illustration; in general, solving the associated HJB PDE is a standard approach for addressing optimal control problems, which can then be solved numerically using QPINNs. These two boundary conditions $v(T,x)=\frac{x^\gamma}{\gamma},v(t,1)=\exp (-k(T-t))\frac{1}{\gamma}$ in Equation \eqref{eq:PDE_appendix} are not unique, and other admissible choices can equally be considered, for example $v(t,0)=0,t\in[T]$.

\section{Implementing polynomials derivation}

\subsection{Polynomial implementation}

In this subsection, we present existing results from the QSP framework and explain how they are used to implement polynomial functions within quantum models, along with an analysis of the corresponding quantum resource complexity.

\subsubsection{Univariate case}
\label{sec:univariat_poly}
We first introduce the quantum model for implementing real univariate polynomials, based on QSP lemmas from \citet{Gily_n_2019}. Define the encoding operator $S(x)$ such that
$$
S(x):=R_x(-2 \arccos (x))=\left(\begin{array}{cc}x & i \sqrt{1-x^2} \\ i \sqrt{1-x^2} & x\end{array}\right),
$$
where $x\in[-1,1]$ and
\begin{equation}
    \label{eq:u_theta}
    U_{\boldsymbol{\theta}}(x):=R_z\left(\theta_L\right) \prod_{j=0}^{L-1} S(x) R_z\left(\theta_j\right).
\end{equation}
\begin{lemma}[\citealp{Gily_n_2019}]
\label{lem:init_poly}
There exists $\boldsymbol{\theta} \in \mathbb{R}^{L+1}$ such that
$$
p(x)=\langle+| U_{\boldsymbol{\theta}}(x)|+\rangle
$$
if and only if the real polynomial $p(x) \in \mathbb{R}[x]$ satisfies
\begin{enumerate}
    \item  $\operatorname{deg}(p(x)) \leq L$,
    \item $p(x)$ has parity $L \bmod 2$ (i.e. if $L$ is even/odd then $p(x)$ must be an even/odd function),
    \item $\forall x \in[-1,1],|p(x)| \leq 1$.
\end{enumerate}
\end{lemma}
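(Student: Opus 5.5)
The plan is to go through the standard QSP characterization of the full unitary $U_{\boldsymbol{\theta}}(x)$ and then project onto $\ket{+}$. In the first step I would show, by induction on $L$, that for any $\boldsymbol{\theta}\in\mathbb{R}^{L+1}$
$$
U_{\boldsymbol{\theta}}(x)=\begin{pmatrix} P(x) & iQ(x)\sqrt{1-x^2}\\ iQ^*(x)\sqrt{1-x^2} & P^*(x)\end{pmatrix}.
$$
Here $P,Q\in\mathbb{C}[x]$ satisfy $\deg P\le L$ and $\deg Q\le L-1$. Their parities are $L \bmod 2$ and $(L-1)\bmod 2$ respectively. They also satisfy $|P|^2+(1-x^2)|Q|^2=1$ on $[-1,1]$.

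The induction multiplies by $S(x)R_z(\theta_j)$. The phases $R_z(\theta_j)$ only multiply the entries by $e^{\pm i\theta_j/2}$. The factor $S(x)$ raises degrees by one and flips parity, because $x\cdot x$ and $(1-x^2)$ are both even. Unitarity then gives the norm identity.

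Conversely, I would prove that every such pair $(P,Q)$ is realized by some $\boldsymbol{\theta}$. The argument peels off one layer: comparing leading coefficients in the norm identity forces a specific phase $\theta$. With that phase, multiplying by $(S(x)R_z(\theta))^{-1}$ lowers both degrees, and induction finishes.

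In the second step I compute $\langle+|U_{\boldsymbol{\theta}}(x)|+\rangle$ directly from the matrix form. This gives $\operatorname{Re}P(x)+i\sqrt{1-x^2}\operatorname{Re}Q(x)$. The ``only if'' direction then follows. If this equals a real polynomial $p$, then $p=\operatorname{Re}P$, which has degree $\le L$, parity $L\bmod 2$, and $|p|\le|P|\le 1$.

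The ``if'' direction is the main obstacle: a completion problem. Given $p$, I must find a real polynomial $s$ and a real polynomial $q$ with the following properties. First, $\deg s\le L$ with parity $L$, and $\deg q\le L-1$ with parity $L-1$. Second, setting $P=p+is$ and $Q=iq$ (so $\operatorname{Re}Q=0$ and the output is real) must satisfy
$$
1-p(x)^2=s(x)^2+(1-x^2)q(x)^2 .
$$
The left side is an even polynomial of degree $\le 2L$ that is nonnegative on $[-1,1]$. By the Lukács/Fejér–Riesz theorem it can be written as $A^2+(1-x^2)B^2$ with real $A,B$ of the right degrees. My first attempt would be the substitution $x=\cos\phi$. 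There, $1-p^2$ becomes a nonnegative trigonometric polynomial, Fejér–Riesz gives $|h(e^{i\phi})|^2$, and $A$ and $B$ are the Chebyshev ``cosine'' and ``sine'' parts of $h$.

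The delicate part is enforcing parity. I would symmetrize under $x\mapsto -x$: the even and odd parts of $h$ split cleanly because $1-p^2$ is even. If that fails, I would instead factor the roots of $1-p^2$ directly, pairing complex-conjugate and $\pm$ roots so that each factor keeps its parity.

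With $s=A$ and $q=B$ in hand, the first step's converse realizes $(P,Q)$ by some $\boldsymbol{\theta}$, which completes the proof.
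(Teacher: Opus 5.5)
The paper does not prove this lemma; it imports it from Gily\'en et al. Your outline is the standard argument behind that citation: characterize the full unitary $U_{\boldsymbol{\theta}}(x)$, read off $\langle+|U_{\boldsymbol{\theta}}(x)|+\rangle=\operatorname{Re}P+i\sqrt{1-x^2}\,\operatorname{Re}Q$, then complete $p$ to a pair $(P,Q)$ with $\operatorname{Re}Q=0$. Your first step and the ``only if'' direction are fine. One small correction: in the peeling step the phase is forced only when the leading coefficients are nonzero; if both vanish, any phase lowers the degrees.

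The gap is the point you flag and then leave open. You need $s$ of parity $L$ and $q$ of parity $L-1$ with $1-p^2=s^2+(1-x^2)q^2$, and your first remedy does not deliver this as stated. Write a Fej\'er--Riesz factor as $h=h_e+h_o$. Evenness of $1-p^2$ only kills the cross term, giving $1-p^2=|h_e|^2+|h_o|^2$. That is a sum of two forms $a^2+(1-x^2)b^2$ with opposite parity patterns, and it does not by itself yield a single pair $(s,q)$. Separately, the cosine and sine parts of an uncentered factor $h$ of degree $2L$ have degrees $2L$ and $2L-1$, not $L$ and $L-1$.

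The missing idea is to choose the spectral factor compatibly with the symmetry.
\begin{itemize}
\item Since $p$ has parity $L$, $F(\phi)=1-p(\cos\phi)^2$ contains only even frequencies. So $F(\phi)=G(2\phi)$, where $G\ge 0$ is a cosine polynomial of degree at most $L$.
\item Apply Fej\'er--Riesz to $G$: $G(\psi)=|g(e^{i\psi})|^2$ with $g$ real of degree at most $L$.
\item Set $H(\phi)=e^{-iL\phi}g(e^{2i\phi})=\sum_k c_k e^{ik\phi}$. Its coefficients $c_k$ are real and supported on $k\in\{-L,-L+2,\dots,L\}$.
\item With Chebyshev polynomials $T_k$ (first kind) and $U_{k-1}$ (second kind), define
\[
s=c_0+\sum_{k\ge 1}(c_k+c_{-k})T_k,\qquad q=\sum_{k\ge 1}(c_k-c_{-k})U_{k-1}.
\]
\item Then $H(\phi)=s(\cos\phi)+i\sin\phi\,q(\cos\phi)$, hence $1-p^2=s^2+(1-x^2)q^2$.
\item The degree bounds and the parities $L$ and $L-1$ are automatic, because $T_k$ and $U_{k-1}$ have parities $k$ and $k-1$.
\end{itemize}

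Your root-factoring fallback can also be completed, but only with the multiplicativity identity
$(a_1^2+(1-x^2)b_1^2)(a_2^2+(1-x^2)b_2^2)=(a_1a_2+(1-x^2)b_1b_2)^2+(1-x^2)(a_1b_2-a_2b_1)^2$, which preserves the (degree, parity) pattern. You would also need explicit representations of the elementary factors of $1-p^2$ in $y=x^2$, and padding by $x^2+(1-x^2)=1$. Grouping roots by parity alone does not give one pair $(s,q)$.

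With $(s,q)$ in hand, set $P=p+is$ and $Q=iq$; your converse of step 1 then finishes the proof.
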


The circuit diagram of $U_{\boldsymbol{\theta}}(x)$ is shown in Figure \ref{fig:single_qubit_QSP_main}. We have several different methods to estimate $\langle+| U_{\boldsymbol{\theta}}(x)|+\rangle$ to obtain $p(x)$. In this work, we consider only the Hadamard test method, whose circuit diagram is shown in Figure \ref{fig:Hadamard_test}.
\begin{figure}[H]
    \centering
    \includegraphics[width=0.3\linewidth]{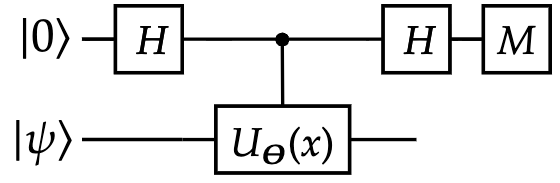}
    \caption{Hadamard test to estimate the real part of $\langle \psi| U_{\boldsymbol{\theta}}(x)|\psi\rangle$.}
    \label{fig:Hadamard_test}
\end{figure}
For the Hadamard test method, regardless of the estimated term, it is necessary to measure the first qubit of the circuit $O\left(1 / \varepsilon^2\right)$ times in order to estimate the expected value within an error bound $\varepsilon$. In general, $\langle\psi| U|\psi\rangle$ is complex for an arbitrary unitary $U$ and quantum state $|\psi\rangle$. The circuit shown in Figure \ref{fig:Hadamard_test} can only access the real part of $\langle\psi| U|\psi\rangle$, and therefore cannot recover the full complex value. However, when $\langle+| U_\theta(x)|+\rangle$ is real, it allows us to use the circuit in Figure \ref{fig:Hadamard_test} to exactly recover $p(x)$. The same idea applies to other results.

Although Lemma \ref{lem:init_poly} relies on a counter-intuitive parity constraint, this condition can be relaxed. In the following theorem, we strengthen Lemma \ref{lem:init_poly} by removing its parity constraint—at the cost of introducing a $1/2$ normalizing factor, which in practice causes no significant difficulties.

\begin{corollary}
\label{coro:univariate_poly_coro}
There exists $\boldsymbol{\theta}_1 \in \mathbb{R}^{L},\boldsymbol{\theta}_2 \in \mathbb{R}^{L+1}$ such that
$$
p(x)=\langle+|^{\otimes2} (|0\rangle\langle0|\otimes U_{\boldsymbol{\theta_{1}}}(x)+|1\rangle\langle1|\otimes U_{\boldsymbol{\theta_{2}}}(x))|+\rangle^{\otimes2},
$$
where $U_{\boldsymbol{\theta_{1}}}(x),U_{\boldsymbol{\theta_{2}}}(x)$ are defined as Equation \eqref{eq:u_theta} if the real polynomial $p(x) \in \mathbb{R}[x]$ satisfies
\begin{enumerate}
    \item  $\operatorname{deg}(p(x)) \leq L$,
    \item $\forall x \in[-1,1],|p(x)| \leq \frac{1}{2}$.
\end{enumerate}
\end{corollary}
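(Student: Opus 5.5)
We split $p$ into parity parts and apply Lemma~\ref{lem:init_poly} to each. Write $p=p_{\mathrm{e}}+p_{\mathrm{o}}$, where $p_{\mathrm{e}}(x)=\tfrac12(p(x)+p(-x))$ and $p_{\mathrm{o}}(x)=\tfrac12(p(x)-p(-x))$. Both parts have degree at most $L$. For $x\in[-1,1]$ we have $|p_{\mathrm{e}}(x)|\le \tfrac12(|p(x)|+|p(-x)|)\le \tfrac12$, and likewise $|p_{\mathrm{o}}(x)|\le\tfrac12$.

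Scale each part by $2$ so that $|2p_{\mathrm{e}}|\le 1$ and $|2p_{\mathrm{o}}|\le 1$ on $[-1,1]$. Then assign each part to a QSP sequence of matching parity.
\begin{itemize}
\item If $L$ is even, $2p_{\mathrm{e}}$ has degree at most $L$ and parity $L \bmod 2$. Lemma~\ref{lem:init_poly} gives $\boldsymbol{\theta}_2\in\mathbb{R}^{L+1}$ with $\langle+|U_{\boldsymbol{\theta}_2}(x)|+\rangle=2p_{\mathrm{e}}(x)$.
\item Still with $L$ even, $2p_{\mathrm{o}}$ is odd of degree at most $L-1$, so Lemma~\ref{lem:init_poly} with $L-1$ layers gives $\boldsymbol{\theta}_1\in\mathbb{R}^{L}$ with $\langle+|U_{\boldsymbol{\theta}_1}(x)|+\rangle=2p_{\mathrm{o}}(x)$.
\item If $L$ is odd, the roles swap: the $(L+1)$-parameter sequence encodes $2p_{\mathrm{o}}$ and the $L$-parameter sequence encodes $2p_{\mathrm{e}}$.
\end{itemize}
This accounts for $\boldsymbol{\theta}_1\in\mathbb{R}^L$ and $\boldsymbol{\theta}_2\in\mathbb{R}^{L+1}$ in the statement. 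It also gives the total parameter count $2L+1$ used in Proposition~\ref{prop:univariat_poly}.

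Next we evaluate the controlled combination. Write $|+\rangle^{\otimes 2}=|+\rangle_{\mathrm{c}}\otimes|+\rangle$, where the first factor is the control qubit. Since $\langle+|0\rangle\langle 0|+\rangle=\langle+|1\rangle\langle1|+\rangle=\tfrac12$, the expression in the statement equals
\[
\tfrac12\langle+|U_{\boldsymbol{\theta}_1}(x)|+\rangle+\tfrac12\langle+|U_{\boldsymbol{\theta}_2}(x)|+\rangle .
\]
Substituting the two QSP values gives $\tfrac12\cdot 2p_{\mathrm{o}}+\tfrac12\cdot 2p_{\mathrm{e}}=p$ (with the roles of $p_{\mathrm{e}},p_{\mathrm{o}}$ swapped when $L$ is odd). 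This is also the source of the factor $\tfrac12$ in the hypothesis.

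The main obstacle is an edge case. When $L=0$, or when one parity part must be realised by a zero-length sequence, $U_{\boldsymbol{\theta}_1}$ is a bare $R_z$ rotation with $\langle+|R_z(\theta)|+\rangle=\cos(\theta/2)$. This still represents any even constant in $[-1,1]$, and an odd part of degree at most $0$ is identically zero. So Lemma~\ref{lem:init_poly} still applies with the convention that $L-1\ge 0$ layers is allowed, and the degenerate cases need only a brief check.
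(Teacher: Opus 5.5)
Your proof is correct and follows the paper's argument: your $2p_{\mathrm{e}}$ and $2p_{\mathrm{o}}$ are exactly the paper's $p_{\text{even}}(x)=p(x)+p(-x)$ and $p_{\text{odd}}(x)=p(x)-p(-x)$. Each is bounded by $1$ on the symmetric interval, realised via Lemma~\ref{lem:init_poly} with $L$ or $L-1$ layers, and then averaged by the $|+\rangle$ control qubit. Your explicit role swap for odd $L$ is a small improvement, since it keeps $\boldsymbol{\theta}_1\in\mathbb{R}^L$ and $\boldsymbol{\theta}_2\in\mathbb{R}^{L+1}$ as stated, whereas the paper always puts the odd part on the $|0\rangle$ branch; for $L=0$, however, the statement's $\boldsymbol{\theta}_1\in\mathbb{R}^0$ is simply degenerate (there is no such $U_{\boldsymbol{\theta}_1}$, not a bare $R_z$), an edge case the paper also ignores.
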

\begin{proof}
Let $p(x)=\frac{1}{2}p_{\text{odd}}(x)+\frac{1}{2}p_{\text{even}}(x)$  where $p_{\text {even }}(x)=p(x)+p(-x)$ and $p_{\text {odd }}(x)=p(x)-p(-x)$. We have $\deg(p_{\text {even }}(x))\leq L,\deg(p_{\text {odd}}(x))\leq L-1$ when $L$ is even, and $\deg(p_{\text {even }}(x))\leq L-1,\deg(p_{\text {odd}}(x))\leq L$ when $L$ is odd. We also know that $|p_{\text {odd}}(x)|\leq 1$ and $|p_{\text {even}}(x)|\leq 1$ since $|p(x)|\leq\frac{1}{2}$. Hence from the Lemma \ref{lem:init_poly}, we know there exist  $\boldsymbol{\theta}_{\text{odd}}\in \mathbb{R}^{L},\boldsymbol{\theta}_{\text{even}} \in \mathbb{R}^{L+1}$ ($L$ is even) or $\boldsymbol{\theta}_{\text{odd}}\in \mathbb{R}^{L+1},\boldsymbol{\theta}_{\text{even}} \in \mathbb{R}^{L}$ ($L$ is odd) such that 
$$
p_{\text{odd}}(x)=\langle+| U_{\boldsymbol{\theta_{\text{odd}}}}(x)|+\rangle
$$
and 
$$
p_{\text{even}}(x)=\langle+| U_{\boldsymbol{\theta_{\text{even}}}}(x)|+\rangle
$$
then we can build $|0\rangle\langle0|\otimes U_{\boldsymbol{\theta_{\text{odd}}}}+|1\rangle\langle1|\otimes U_{\boldsymbol{\theta_{\text{even}}}}$ such that
\begin{align*}
\langle+|^{\otimes2} (|0\rangle\langle0|\otimes U_{\boldsymbol{\theta_{\text{odd}}}}+|1\rangle\langle1|\otimes U_{\boldsymbol{\theta_{\text{even}}}})(x)|+\rangle^{\otimes2}\\
=\frac{1}{2}\langle+| U_{\boldsymbol{\theta_{\text{odd}}}}(x)|+\rangle+\frac{1}{2}\langle+| U_{\boldsymbol{\theta_{\text{even}}}}(x)|+\rangle
=\frac{1}{2}p_{\text{odd}}(x)+\frac{1}{2}p_{\text{even}}(x)=p(x).
\end{align*}
\end{proof}

In fact, $\forall x \in [-1,1]$, the condition $|p(x)| \le \tfrac12$ is stronger than necessary to guarantee the existence of a quantum model. All that is truly required is
\begin{equation}
\label{eq:true_bound}
|p_{\text {odd}}(x)|_{\infty},|p_{\text {even}}(x)|_{\infty}\leq 1,
\end{equation}
where $p_{\text {even }}(x)=p(x)+p(-x)$ and $p_{\text {odd }}(x)=p(x)-p(-x)$. However for simplicity, we henceforth replace these two separate bounds by the simple condition $|p(x)| \le \tfrac12$. Then we can do the Hadamard test to estimate  $p(x)=\langle+|^{\otimes2} (|0\rangle\langle0|\otimes U_{\boldsymbol{\theta_{1}}}(x)+|1\rangle\langle1|\otimes U_{\boldsymbol{\theta_{2}}}(x))|+\rangle^{\otimes2}$ and analyze the circuit complexity. 

\begin{restate}{Proposition}{prop:univariat_poly}
For any real polynomial $p(x) \in \mathbb{R}[x]$ that satisfies $\operatorname{deg}(p(x)) \leq L$ and $\forall x \in[-1,1],|p(x)| \leq \frac{1}{2}$, there exists a quantum model $\mathcal{Q}$ that consists of a PQC $W_p(\boldsymbol{x})$ and an observable $Z^{(0)}$ such that 
$$
f_{\mathcal{Q}}(x):=\langle 0| W_p^{\dagger}(x) Z^{(0)} W_p(\boldsymbol{x})|0\rangle=p(x),
$$
where $Z^{(0)}$ is the Pauli $Z$ observable on the first qubit. The width of the PQC is at most $3$, the number of parameters is at most $2L+1$. The PQC $W_p(\boldsymbol{x})$ can be expressed as at most $4L$ double‐controlled rotation gates and $4$ Hadamard gates, with depth $4L+2$. Alternatively,  it can be expressed using $36L$ single‐qubit gates and $32L$ CNOT gates, with depth $60L-5$.
\end{restate}

The circuit diagram of $W_p(\boldsymbol{x})$ is shown in Figure \ref{fig:single_qubit_poly_main}.

\begin{proof}
We use Corollary \ref{coro:univariate_poly_coro} to build the circuit  
$$
W_p(\boldsymbol{x})=
\left(H\otimes H\otimes H\right)
\left(|0\rangle\langle0|\otimes U_{\boldsymbol{\theta_{1}}}(x)+|1\rangle\langle1|\otimes U_{\boldsymbol{\theta_{2}}}(x)\right)
\left(H\otimes I\otimes I\right)
$$
If we consider quantum computers that can implement multi-qubit controlled rotation gates such as neutral atom platforms, the depths of double controlled rotation gates $U_{\boldsymbol{\theta_{1}}},U_{\boldsymbol{\theta_{2}}}$ are $2(L-1)+1,2L+1$, respectively. Hence the PQC $W_p(\boldsymbol{x})$ consists of at most $4L$ 2-qubit controlled rotation gates and $4$ Hadamard gates, and the circuit depth is $4L+2$. The numbers of parameters of $U_{\boldsymbol{\theta_{1}}},U_{\boldsymbol{\theta_{2}}}$ are $(L-1)+1,L+1$ without counting $x$, respectively. So the total number of parameters is $2L+1$.

If we consider quantum computers that cannot implement multi-qubit controlled rotation gates natively, from \citet[Lemma 5.4]{Barenco_1995} we know that for a controlled-$R_p$ gate where $R_p$ is the rotation gate with Pauli matrix $p\in{\{z,y}\}$, controlled-$R_p$ can be implemented by $2$ $R_p$ gates and $2$ CNOT gates, and a controlled-$R_x$ gate can be implemented by a controlled-$R_z$ plus $2$ Hadamard gates. So a controlled-$R_z$ gate consists of $2$ single-qubit gates and $2$ CNOT gates with depth $4$, and a controlled-$R_x$ gate consists of $4$ single-qubit native gates and $2$ CNOT gates with depth $6$ because of $R_x(\theta)=H R_z(\theta) H$ where $H$ is Hadamard gates.

Then from \citet[Lemma 6.1]{Barenco_1995} we also know that a double controlled-$U$ gate can be implemented by $2$ controlled-$V$ (where $V^2=U$), $1$ controlled-$V^\dagger$ and $2$ CNOT gates, so a double controlled-$R_z$ can be implemented by $6$ single-qubit gates and $8$ CNOT gates with depth $12$, and a double controlled-$R_x$ can be implemented by $12$ single-qubit gates and $8$ CNOT gates with depth $18$. The depth arises from the specific circuit layout described in \citet[Lemma 6.1]{Barenco_1995}. 

A double-controlled-$ U_{\boldsymbol{\theta_{1}}}$ can be implemented by $L-1$ double controlled $R_x$ gates and $L$ double controlled $R_z$ gates, which means $12(L-1)+6L$ single-qubit gates and $8(L-1)+8L$ CNOT gates with depth $18(L-1)+12L=30L-18$. And a double-controlled-$ U_{\boldsymbol{\theta_{2}}}$ can be implemented by $L$ $R_x$ gates and $L+1$ $R_z$ gates, which means  $12L+6(L+1)$ single-qubit gates and $8L+8(L+1)$ CNOT gates with depth $18L+12(L+1)=30L+12$. 

Finally, with the extra $4$ Hadamard gates and $2$ Pauli-X gates (for controlled $|0\rangle$ to $|1\rangle$), we can conclude that the circuit requires $36L$ single-qubit gates and $32L$ CNOT gates with depth $60L-5$. Here, the extra depth due to Pauli-X gates can be ignored. It is worth noting that the bound we provide, derived from \citet{Barenco_1995}, is only an old upper bound. Improved results may exist, yielding tighter estimates for both gate count and circuit depth.
\end{proof}

In fact, Proposition \ref{prop:univariat_poly} is similar to the result in \citet[Theorem 56]{Gily_n_2019}, but we do not use their block-encoded method. Moreover, our result provides an extension to this existing work from two perspectives: clear quantum circuit implementation of univariate polynomials and exact quantification of all circuit resources. Actually, we can implement multi-qubit controlled rotation gates on the neutral atom platform quantum computer in an efficient and native way \citep{PhysRevA.103.062607}. We observe that building multi-qubit quantum gates directly in a “native” method is more efficient and valuable than constructing multi-qubit gates through a decomposition of two-qubit and single-qubit gates. 

Lemma \ref{lem:init_poly} and Corollary \ref{coro:univariate_poly_coro} give us quantum circuit expressivity analysis for univariate polynomials. Next we will talk about the multivariate polynomial cases.

\subsubsection{Multivariate case}
\label{sec:multivariate_uniform}
We now turn to discuss how to implement multivariate polynomials, and we define the $L^{\infty}$-norm of a polynomial with input variable $x\in E$ as $\|p\|_\infty:=\sup_{x\in E} |p(x)|$.
\begin{corollary}
\label{coro:constraint_mul_poly}
Given a multivariate polynomial $p(\boldsymbol{x})=\prod_{j=1}^D p_{j}(x_j)$ in $\boldsymbol{x}=(x_1,\dots,x_D) \in[-1,1]^D$ such that $\operatorname{deg}(p_j(x_j)) \leq L$, $p_j(x_j)$ has parity $L \bmod 2$, and $\|p_j\|_\infty \leq1$, there exists a PQC $U^{p}(\boldsymbol{x})$ such that
$$
\langle+|^{\otimes D} U^{p}(\boldsymbol{x})|+\rangle^{\otimes D}=p(\boldsymbol{x}).
$$
The width of the PQC is at most $D$, the depth is at most $L+1$, and the number of parameters is at most $LD$. 
\end{corollary}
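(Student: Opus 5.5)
The plan is to build $U^{p}(\boldsymbol{x})$ as a tensor product of $D$ independent single-qubit QSP circuits, one per variable, exactly as in Equation~\eqref{eq:monomial_main}, and then use the fact that the inner product factorizes over tensor products. Concretely, for each $j\in[D]$ the hypotheses on $p_j$ (degree at most $L$, parity $L \bmod 2$, and $\|p_j\|_\infty\le 1$ on $[-1,1]$) are precisely the three conditions of Lemma~\ref{lem:init_poly}. That lemma therefore yields a phase vector $\boldsymbol{\theta}^{j}\in\mathbb{R}^{L+1}$ such that $\langle+|U_{\boldsymbol{\theta}^{j}}(x_j)|+\rangle=p_j(x_j)$ for all $x_j\in[-1,1]$. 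I then set
\[
U^{p}(\boldsymbol{x})=\bigotimes_{j=1}^{D}U_{\boldsymbol{\theta}^{j}}(x_j),
\]
where the $j$-th factor acts on qubit $j$ and uses the encoding $S(x_j)=R_x(-2\arccos(x_j))$. Since $|+\rangle^{\otimes D}$ is a product state,
\[
\langle+|^{\otimes D}U^{p}(\boldsymbol{x})|+\rangle^{\otimes D}=\prod_{j=1}^{D}\langle+|U_{\boldsymbol{\theta}^{j}}(x_j)|+\rangle=\prod_{j=1}^{D}p_j(x_j)=p(\boldsymbol{x}).
\]
This gives the identity.

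For the resources, the width is $D$ because each factor acts on one qubit and no ancilla is needed. For the depth, all $D$ factors act on disjoint qubits and run in parallel, so the depth equals that of one $U_{\boldsymbol{\theta}}$. Counting, as in Figure~\ref{fig:single_qubit_QSP_main}, a layer as an $S(x)$ together with an $R_z(\theta_j)$, this is $L$ layers plus the final $R_z(\theta_L)$, i.e.\ $L+1$.

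The main obstacle is the parameter count. Lemma~\ref{lem:init_poly} naively gives $L+1$ phases per variable, hence $(L+1)D$, not $LD$. I would remove one phase per factor as follows. The final $R_z(\theta^{j}_L)$ does not commute with $|+\rangle$, so I would not simply drop it. Instead, I would use the standard QSP freedom that a global $Z$-phase can be absorbed, or equivalently that $R_z(\theta_L)$ can be merged with $R_z(\theta_0)$ under the symmetric $\langle+|\cdot|+\rangle$ convention, to fix one phase per variable without changing the achievable real polynomials. If that reduction does not go through cleanly, I would fall back to counting only the $L$ phases interleaved with the encoding gates, treating the last one as a fixed convention. In either case the bound is at most $LD$ trainable parameters, i.e.\ $O(LD)$.
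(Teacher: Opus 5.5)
Your construction is exactly the paper's proof. You take the single-qubit circuits $U_{\boldsymbol{\theta}^{j}}(x_j)$ from Lemma~\ref{lem:init_poly}, tensor them, factor $\langle+|^{\otimes D}U^{p}(\boldsymbol{x})|+\rangle^{\otimes D}$ into $\prod_j p_j(x_j)$, and read off width $D$ and depth $L+1$. The gap is in the parameter count. The paper's proof simply asserts the bound $LD$ without argument. The construction itself carries $L+1$ phases per variable, i.e.\ $(L+1)D$, which is also the count the paper uses later for the same blocks ($TD(L+1)$ in Proposition~\ref{prop:multivariate_uniform}). You correctly spotted this off-by-one, but both of your repairs fail.

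The two end rotations cannot be merged. Writing the middle of the circuit as $M=\bigl(\begin{smallmatrix}P & iQ\sqrt{1-x^2}\\ iQ^{*}\sqrt{1-x^2} & P^{*}\end{smallmatrix}\bigr)$, one has
\[
\langle+|R_z(\theta_L)\,M\,R_z(\theta_0)|+\rangle=\operatorname{Re}\bigl(e^{-i(\theta_0+\theta_L)/2}P\bigr)+i\sqrt{1-x^2}\,\operatorname{Re}\bigl(e^{i(\theta_0-\theta_L)/2}Q\bigr),
\]
which depends on $\theta_0+\theta_L$ and on $\theta_L-\theta_0$ separately. Since $\langle+|\cdot|+\rangle$ is not a trace, there is no cyclicity to exploit. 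Since $|+\rangle$ is not an eigenvector of $R_z$, there is no global $Z$-phase to absorb.

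The fallback of fixing $\theta_L$ fails already at $L=1$. There $\langle+|R_z(\theta_1)S(x)R_z(\theta_0)|+\rangle=x\cos\frac{\theta_0+\theta_1}{2}+i\sqrt{1-x^2}\cos\frac{\theta_1-\theta_0}{2}$, and reality for all $x$ forces $\theta_1\equiv\theta_0+\pi\pmod{2\pi}$. With $\theta_1=c$ held fixed, only the two polynomials $\pm(\sin c)\,x$ are reachable, and only $p\equiv 0$ if the last gate is dropped. Lemma~\ref{lem:init_poly} instead guarantees every $ax$ with $|a|\le 1$.

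What this computation does show is that the last phase can be \emph{tied} to the first one, not fixed or removed: $\theta_1=\theta_0+\pi$ recovers all $ax$ with a single free angle. Turning that into a general bound of $L$ free angles per variable needs an additional QSP existence result. You would have to show that for every admissible $p_j$ and every $L$ there is a phase sequence satisfying such a constraint (for instance a symmetric one). That result is not contained in Lemma~\ref{lem:init_poly} and is not supplied in your proposal. As it stands, this construction honestly gives at most $(L+1)D$ parameters, which is $O(LD)$ for $L\ge 1$. The exact bound $LD$ is not established by your argument, and the paper's proof does not justify it either.
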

The circuit diagram of $U^{p}(\boldsymbol{x})$ is shown in Figure \ref{fig:U^p_main}. 
\begin{proof}
From Lemma \ref{lem:init_poly}, there exist $D$ single-qubit PQCs $U_{\boldsymbol{\theta}_1}^{p_1}\left(x_1\right), U_{\boldsymbol{\theta}_2}^{p_2}\left(x_2\right), \ldots, U_{\boldsymbol{\theta}_D}^{p_D}\left(x_D\right)$ such that
\begin{equation}
\label{eq:wastful_expression}
\begin{aligned}
\langle+| U_{\boldsymbol{\theta}_1}^{n_1}\left(x_1\right)|+\rangle & =p_1(x_1),\\
\langle+| U_{\boldsymbol{\theta}_2}^{n_2}\left(x_2\right)|+\rangle & =p_2(x_2), \\
\cdots & \\
\langle+| U_{\boldsymbol{\theta}_D}^{n_D}\left(x_D\right)|+\rangle & =p_D(x_D),
\end{aligned}
\end{equation}
where each real polynomial $p_j(x_j) \in \mathbb{R}[x_j],\forall j\in[D]$ satisfies $\operatorname{deg}(p_j(x_j)) \leq L$, $p_j(x_j)$ has parity $L \bmod 2$, and $\|p_j\|_\infty \leq1$. Then we can define $p(\boldsymbol{x})=\prod_j^D p_j(x_j)$ and
\begin{equation}
\label{eq:monomial}
U^{p}(\boldsymbol{x})=\bigotimes_{j=1}^D U_{\boldsymbol{\theta}_j}^{p_j}\left(x_j\right),
\end{equation}
which gives
$$
\langle+|^{\otimes D} U^{p}(\boldsymbol{x})|+\rangle^{\otimes D}=\prod_{j=1}^D\langle+| U_{\boldsymbol{\theta}_j}^{p_j}\left(x_j\right)|+\rangle=p(\boldsymbol{x}).
$$
We can conclude that the depth of $U^{p}(\boldsymbol{x})$ is at most $L+1$, and the number of parameters is at most $LD$.
\end{proof}

Based on Corollary \ref{coro:constraint_mul_poly}, we can also consider the univariate constrained polynomial $p_j(x_j)$ as a monomial:

\begin{corollary}
\label{coro:monomial}
Given a monomial $ c_{\boldsymbol{n}} \boldsymbol{x}^{\boldsymbol{n}}=c_{\boldsymbol{n}} x_1^{n_1} x_2^{n_2} \cdots x_D^{n_D}$ in $\boldsymbol{x} \in[-1,1]^D$ such that $\left|c_{\boldsymbol{n}} \right| \leq 1$ and $\|\boldsymbol{n}\|_\infty \leq L$ for $\boldsymbol{n}=\left(n_1, \ldots, n_D\right) \in[L]^D$, there exists a PQC $U^{\boldsymbol{n}}(\boldsymbol{x})$ such that
$$
\langle+|^{\otimes D} U^{\boldsymbol{n}}(\boldsymbol{x})|+\rangle^{\otimes D}=c_{\boldsymbol{n}} \boldsymbol{x}^{\boldsymbol{n}}.
$$
The width of the PQC is at most $D$, the depth is at most $L+1$, and the number of parameters is at most $LD$. 
\end{corollary}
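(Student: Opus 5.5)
The plan is to reduce the monomial case directly to Corollary \ref{coro:constraint_mul_poly}, or more precisely to its proof, which builds a tensor product of single-qubit QSP circuits from Lemma \ref{lem:init_poly}. The monomial factorizes as
\[
c_{\boldsymbol{n}}\boldsymbol{x}^{\boldsymbol{n}}=\prod_{j=1}^D p_j(x_j),\qquad p_1(x_1)=c_{\boldsymbol{n}}x_1^{n_1},\quad p_j(x_j)=x_j^{n_j}\ (j\ge 2).
\]
The coefficient is absorbed into the first factor. Each factor is a real univariate polynomial with $\|p_j\|_\infty\le 1$ on $[-1,1]$, since $|c_{\boldsymbol{n}}|\le 1$ and $|x_j|\le 1$.

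The one point needing care is the parity hypothesis. Corollary \ref{coro:constraint_mul_poly} asks every factor to have parity $L \bmod 2$, but $x_j^{n_j}$ has parity $n_j \bmod 2$, which may differ. To avoid this, I will not use a common length $L$ on every qubit. Instead, on qubit $j$ I apply Lemma \ref{lem:init_poly} with length $n_j$ in place of $L$. Then $p_j$ has degree exactly $n_j$, parity $n_j \bmod 2$, and sup-norm at most $1$, so there is $\boldsymbol{\theta}_j\in\mathbb{R}^{n_j+1}$ with $\langle+|U_{\boldsymbol{\theta}_j}(x_j)|+\rangle=p_j(x_j)$.

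The degenerate case $n_j=0$ also works. For $j\ge2$ the identity circuit ($\theta=0$) gives the factor $1$. For $j=1$ the circuit $R_z(\theta_0)$ gives $\langle+|R_z(\theta_0)|+\rangle=\cos(\theta_0/2)$, which attains any $c_{\boldsymbol{n}}\in[-1,1]$.

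I then set $U^{\boldsymbol{n}}(\boldsymbol{x})=\bigotimes_{j=1}^D U_{\boldsymbol{\theta}_j}(x_j)$. Since $|+\rangle^{\otimes D}$ is a product state, the expectation factorizes exactly as in the proof of Corollary \ref{coro:constraint_mul_poly}:
\[
\langle+|^{\otimes D}U^{\boldsymbol{n}}(\boldsymbol{x})|+\rangle^{\otimes D}=\prod_{j=1}^D p_j(x_j)=c_{\boldsymbol{n}}\boldsymbol{x}^{\boldsymbol{n}}.
\]

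The resource counts follow from the layer structure. The width is $D$, one qubit per variable. Since the single-qubit blocks act in parallel, the depth is $\max_j(n_j+1)\le L+1$ (counting each $R_x$/$R_z$ layer as in the paper). The parameters are $\sum_j(n_j+1)$ phases.

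I expect the only real obstacle to be bookkeeping, in matching this parameter count to the stated bound $LD$. The naive count gives $\sum_j(n_j+1)\le D(L+1)$. To reach $LD$, I would use the fact that $U^{\boldsymbol{n}}(\boldsymbol{x})$ is sandwiched between $|+\rangle$ states, so an outer $R_z$ phase on each qubit can be merged or fixed. Concretely, the standard QSP phases for $x^{n}$ can be chosen with a fixed boundary phase, because $x^n$ is obtained with all interior phases determined up to a global convention. Failing that, I would simply count parameters in the same convention used in Corollary \ref{coro:constraint_mul_poly}, which already states $LD$ for essentially the same circuit, and note that the bound holds up to an additive $D$.
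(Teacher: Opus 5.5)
Your argument is correct and follows the paper's route. The paper's proof is the one-line reduction to Corollary~\ref{coro:constraint_mul_poly}: it regards each factor $p_j$ there as a monomial factor and takes the tensor product of single-qubit QSP circuits. Running Lemma~\ref{lem:init_poly} with length $n_j$ on qubit $j$ is more careful than the paper, which ignores that $x_j^{n_j}$ need not have parity $L \bmod 2$ as Corollary~\ref{coro:constraint_mul_poly} requires. Your treatment of $n_j=0$ via $\langle+|R_z(\theta_0)|+\rangle=\cos(\theta_0/2)$ is also right.

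Drop the attempted rescue of the $LD$ parameter bound, because it does not work. An outer $R_z$ cannot be merged into $|+\rangle$, and boundary phases cannot in general be fixed to zero. Already for $n=1$, $\langle+|R_z(a)S(x)R_z(b)|+\rangle = x\cos\frac{a+b}{2}+i\sqrt{1-x^2}\,\cos\frac{a-b}{2}$. This equals $x$ only if $\cos\frac{a+b}{2}=1$ and $\cos\frac{a-b}{2}=0$, and setting either $a=0$ or $b=0$ makes these two conditions contradictory.

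The honest count for this construction is $\sum_j(n_j+1)\le D(L+1)$, and the paper does no better. Its proof simply inherits the figure $LD$ from Corollary~\ref{coro:constraint_mul_poly}, whose own circuit uses $\boldsymbol\theta_j\in\mathbb{R}^{L+1}$ on each qubit; compare the $TD(L+1)$ count in Proposition~\ref{prop:multivariate_uniform}. So state $D(L+1)$ and note the off-by-one in the statement, rather than appeal to an unjustified phase-fixing argument.
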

\begin{proof}
We can regard the univariate constrained polynomial $p_j(x_j)$ in Corollary \ref{coro:constraint_mul_poly} as a monomial, and the resource required would be the same.
\end{proof}

Corollary \ref{coro:monomial} is almost the same as \citet[Lemma S5]{yu2024nonasymptotic}, but we extend the domain of definition of $x$ from $[0,1]^D$ to $[-1,1]^D$, and adjust the condition from $\|\boldsymbol{n}\|_1 \leq L$ to $\|\boldsymbol{n}\|_\infty \leq L$. Since monomials can be expressed explicitly, any real multivariate polynomial can be written as a sum of monomials using the LCU technique. Hence, we have:

\begin{proposition}
\label{prop:multivariate_uniform}
For any real multivariate polynomial $p(\boldsymbol{x})=\sum_{\boldsymbol{n} } c_{\boldsymbol{n}} \boldsymbol{x}^{\boldsymbol{n}}$ with $D$ variables and degree at most $L$ in each variable such that $\|p\|_\infty\leq1$, $\boldsymbol{x} \in[-1,1]^D$, and the absolute values of all coefficients are smaller than $1/T$ where $T$ is the number of monomials of  $p(\boldsymbol{x})$, there exists a quantum model $\mathcal{Q}$ that consists of a PQC $W_p(\boldsymbol{x})$ and an observable $Z^{(0)}$ such that
$$
f_{\mathcal{Q}}(\boldsymbol{x}):=\langle 0| W_p^{\dagger}(\boldsymbol{x}) Z^{(0)} W_p(\boldsymbol{x})|0\rangle=p(\boldsymbol{x}),
$$
where $Z^{(0)}$ is the Pauli $Z$ observable on the first qubit. The width of the PQC is at most $D+\lceil\log T\rceil+1$, the depth $O(TDL \log T)$, and the number of parameters is at most $TD(L+1)$.
\end{proposition}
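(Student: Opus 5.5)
We construct the model as a linear combination of unitaries (LCU) of the monomial circuits from Corollary \ref{coro:monomial}, read out by a Hadamard test on one extra qubit. Enumerate the $T$ monomials of $p$ as $c_{\boldsymbol{n}^{(i)}}\boldsymbol{x}^{\boldsymbol{n}^{(i)}}$, $i=1,\dots,T$. The hypothesis $|c_{\boldsymbol{n}}|\le 1/T$ lets us write each coefficient as $c_{\boldsymbol{n}^{(i)}}=\tfrac{1}{T}\tilde c_i$ with $|\tilde c_i|\le 1$. Corollary \ref{coro:monomial} then gives, for each $i$, a $D$-qubit circuit $U^{(i)}_{\mathcal{S}}(\boldsymbol{x})$ with depth at most $L+1$, at most $LD$ parameters, and
\[
\langle+|^{\otimes D}U^{(i)}_{\mathcal{S}}(\boldsymbol{x})|+\rangle^{\otimes D}=\tilde c_i\,\boldsymbol{x}^{\boldsymbol{n}^{(i)}}.
\]
This value is real because $\tilde c_i\boldsymbol{x}^{\boldsymbol{n}^{(i)}}$ is a real polynomial, so it can be recovered exactly by a Hadamard test.

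Next we introduce an ancilla register $\mathcal{A}$ of $\lceil\log T\rceil$ qubits. A state-preparation unitary $F_{\mathcal{A}}$ maps $\ket{0}$ to the uniform superposition $\frac{1}{\sqrt{T}}\sum_{i=1}^T\ket{i}$. When $T$ is a power of two this is just Hadamards; otherwise we use a standard amplitude-loading circuit of depth $O(T)$. We then form the select operator $U_c(\boldsymbol{x})=\sum_i\ket{i}\bra{i}_{\mathcal{A}}\otimes U^{(i)}_{\mathcal{S}}(\boldsymbol{x})$, padding any unused indices with identity. With $\ket{v}=F_{\mathcal{A}}\ket{0}\otimes\ket{+}^{\otimes D}$, a direct computation gives
\[
\bra{v}U_c\ket{v}=\tfrac1T\sum_i\tilde c_i\boldsymbol{x}^{\boldsymbol{n}^{(i)}}=p(\boldsymbol{x}).
\]
The Hadamard test adds one control qubit $0$ and applies controlled-$U_c$, so the circuit is
\[
W_p=(H\otimes F_{\mathcal{A}}^\dagger\otimes H^{\otimes D})^{\dagger}\cdots,
\]
in the pattern of Figure \ref{fig:Circuit_diagram_withoutT_main}. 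Since the quantity is real, $\langle Z^{(0)}\rangle=\mathrm{Re}\bra{v}U_c\ket{v}=p(\boldsymbol{x})$. The width is $1+\lceil\log T\rceil+D$. The assumption $\|p\|_\infty\le1$ is consistent with this, and is in fact implied by the coefficient bound.

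Resource counting is the main technical step. Each $U^{(i)}$ contains $O(DL)$ single-qubit rotations. In the select operator, every rotation must be controlled on the $i$-th basis state of $\mathcal{A}$ together with the Hadamard-test qubit, giving a multi-controlled rotation with $\lceil\log T\rceil+1$ controls. With ancilla-free or ancilla-light decompositions, each such gate costs $O(\log T)$ depth in single-qubit and CNOT gates. We apply the $T$ selected blocks sequentially, so the depth is $T\cdot O(DL)\cdot O(\log T)=O(TDL\log T)$, and this dominates the $O(T)$ state-preparation cost. The parameter count is at most $T$ blocks times $D(L+1)$ phases per block, including the final $R_z$ on each qubit, which gives $TD(L+1)$.

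I expect the main obstacle to be justifying the $O(\log T)$ cost per multi-controlled rotation without extra ancilla qubits. The stated width leaves no room for them. I would first cite a linear-depth decomposition of $k$-controlled $SU(2)$ gates, which is $O(k)$ with $k=\lceil\log T\rceil+1$. If that fails, a fallback is to toggle the control pattern with $X$ gates and use a single multi-controlled Toffoli per block instead of one per rotation, still at $O(\log T)$ each. A second subtlety is that the monomial construction of Corollary \ref{coro:monomial} requires the parity condition of Lemma \ref{lem:init_poly}. Each monomial $x_j^{n_j}$ has parity $n_j$, so each qubit's QSP sequence is chosen with length $n_j\le L$, and the scalar $\tilde c_i$ is absorbed into one factor. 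This needs $|\tilde c_i|\le1$, which the $1/T$ hypothesis guarantees.
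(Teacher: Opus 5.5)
Your proposal is correct and follows the paper's proof essentially step for step. Both build an LCU of the $T$ monomial circuits from Corollary~\ref{coro:monomial} with rescaled coefficients $T c_{\boldsymbol n^{(i)}}$, read it out with a Hadamard test, and turn every rotation into a multi-controlled rotation of depth $O(\log T)$ via the ancilla-free linear-depth decomposition the paper also cites, which gives the stated width, depth $O(TDL\log T)$ and $TD(L+1)$ parameters. Your exact uniform-superposition preparation $F_{\mathcal A}$ and your per-qubit QSP lengths $n_j$ for the parity condition tidy up points the paper passes over (its $|+\rangle^{\otimes\lceil\log T\rceil}$ weights each block by $2^{-\lceil\log T\rceil}$ rather than $1/T$ when $T$ is not a power of two), but your fallback with one flagging Toffoli per block would need a clean extra qubit beyond the stated width, so stick with the primary route.
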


The circuit diagram of $W_p(\boldsymbol{x})$ is shown in Figure \ref{fig:theorem1}.
\begin{figure}[H]
    \centering
    \includegraphics[width=0.6\linewidth]{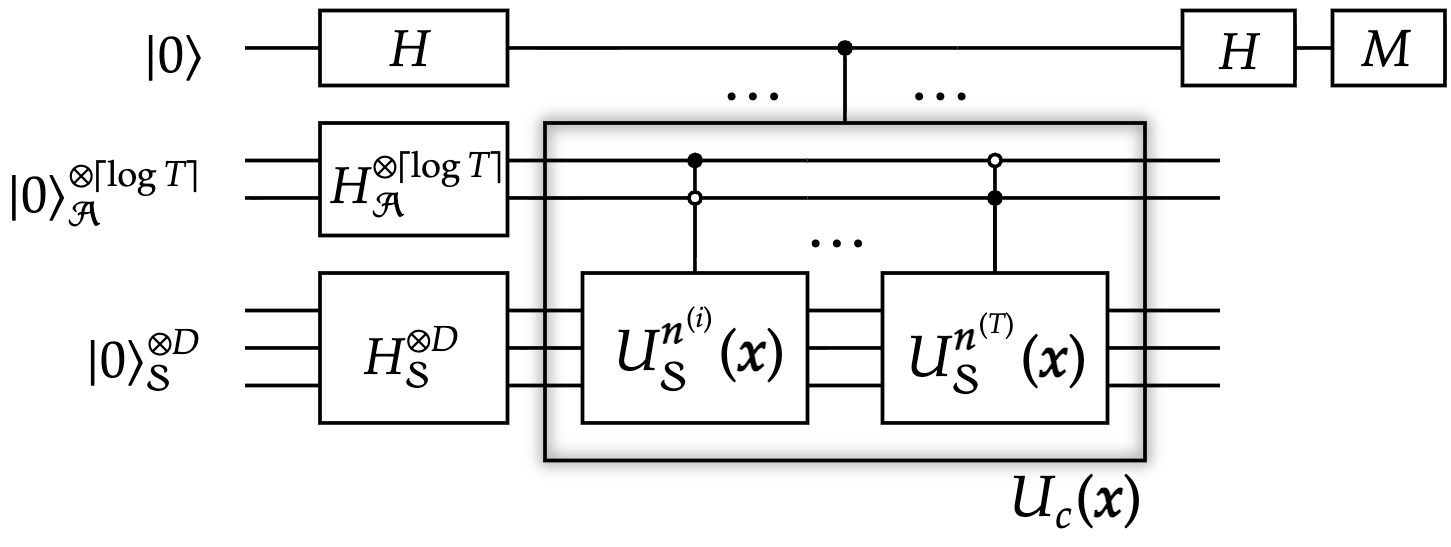}
    \caption{The circuit of $ W_p(\boldsymbol{x})$ consists of a single-qubit system, a $\lceil\log T\rceil$-qubits system $\mathcal{A}$ where $T$ is the number of monomials of  $p(\boldsymbol{x})$, and a $D$-qubits system $\mathcal{S}$. We can consider $ W_p(\boldsymbol{x})$ as the Hadamard test to estimate $p(\boldsymbol x)=\langle+|^{\otimes \lceil\log T\rceil}_{\mathcal{A}}
 \langle+|^{\otimes D}_{\mathcal{S}}
 U_c(\boldsymbol{x})
 |+\rangle^{\otimes \lceil\log T\rceil}_{\mathcal{A}}
 |+\rangle^{\otimes D}_{\mathcal{S}}$ such that $U_c(\boldsymbol{x})=\sum_{i=1}^T\ket{i}\bra{i}_\mathcal{A}\otimes U^{\boldsymbol{n}^{(i)}}_\mathcal{S}(\boldsymbol{x})$ where $U^{\boldsymbol{n}^{(i)}}_\mathcal{S}(\boldsymbol{x})$ is defined in Equation \eqref{eq:monomial} and shown in Figure \ref{fig:U^p_main}. }
    \label{fig:theorem1}
\end{figure}

\begin{proof}
Consider the given real multivariate polynomial $p(\boldsymbol{x})$ as
$$
p(\boldsymbol{x})=
\sum_{i=1}^T c_{\boldsymbol{n}^{(i)}}
   \,\boldsymbol{x}^{\boldsymbol{n}^{(i)}}
\quad\text{with}\;
\lvert c_{\boldsymbol n^{(i)}}\rvert\le\frac1T.
$$
where $\boldsymbol{x}=\left(x_1, \ldots, x_D\right) \in \mathbb{R}^D$, $\boldsymbol{n}=\left(n_1, \ldots, n_D\right) \in [L]^D$, $c_{\boldsymbol{n}} \in \mathbb{R}$, $\boldsymbol{x}^{\boldsymbol{n}}=x_1^{n_1} x_2^{n_2} \cdots x_D^{n_D}$, and $T$ denotes the number of monomials $c_{\boldsymbol{n}^{(i)}}
   \boldsymbol{x}^{\boldsymbol{n}^{(i)}}$, with $T\leq(L+1)^D$. We want to use a quantum circuit to express $p(\boldsymbol{x})$, so we define a controlled unitary $U_c(\boldsymbol{x})$ on the composite system $\mathcal{A}\otimes\mathcal{S}$ consisting of the $\lceil\log T\rceil$-qubit ancilla system $\mathcal{A}$ and the $D$-qubit system $\mathcal{S}$ such that
$$
U_c(\boldsymbol{x})=\sum_{i=1}^T\ket{i}\bra{i}_\mathcal{A}\otimes U^{\boldsymbol{n}^{(i)}}_\mathcal{S}(\boldsymbol{x})
$$
and use Corollary \ref{coro:monomial}:
$$
\langle+|^{\otimes D} U^{\boldsymbol{n}^{(i)}}_\mathcal{S}(\boldsymbol{x})|+\rangle^{\otimes D}=
C_{\boldsymbol{n}^{(i)}}\boldsymbol{x}^{\boldsymbol{n}^{(i)}}
$$
where we choose
\[
C_{\boldsymbol n^{(i)}}
\;=\;T\,c_{\boldsymbol n^{(i)}},
\bigl\lvert C_{\boldsymbol n^{(i)}}\bigr\rvert\le1, \quad
\text{since} \;\lvert c_{\boldsymbol n^{(i)}}\rvert\le\frac1T
\]
and
$$
H_\mathcal{A}^{\otimes \lceil\log T\rceil}|0\rangle^{\otimes \lceil\log T\rceil}_\mathcal{A}=|+\rangle^{\otimes \lceil\log T\rceil}_\mathcal{A}
$$
and we obtain
\begin{align*}
&\langle+|^{\otimes \lceil\log T\rceil}_{\mathcal{A}}
 \langle+|^{\otimes D}_{\mathcal{S}}
 U_c(\boldsymbol{x})
 |+\rangle^{\otimes \lceil\log T\rceil}_{\mathcal{A}}
 |+\rangle^{\otimes D}_{\mathcal{S}}\\
 &\quad
 =\langle+|^{\otimes \lceil\log T\rceil}_\mathcal{A} \langle+|^{\otimes D}_\mathcal{S}
\left(\sum_{i=1}^T\ket{i}\bra{i}_\mathcal{A}\otimes U^{\boldsymbol{n}^{(i)}}_\mathcal{S}(\boldsymbol{x})\right)
|+\rangle^{\otimes \lceil\log T\rceil}_\mathcal{A} |+\rangle^{\otimes D}_\mathcal{S}\\
&\quad
= \sum_{i=1}^T
  \langle+|^{\otimes \lceil\log T\rceil}_{\mathcal{A}}\ket{i}_{\mathcal{A}}
  \bra{i}_{\mathcal{A}}
  |+\rangle^{\otimes \lceil\log T\rceil}_{\mathcal{A}}
  \langle+|^{\otimes D}_{\mathcal{S}}
  U^{\boldsymbol{n}^{(i)}}_{\mathcal{S}}(\boldsymbol{x})
  |+\rangle^{\otimes D}_{\mathcal{S}}\\
&\quad
= \sum_{i=1}^T
  \Bigl(\tfrac1{\sqrt{T}}\Bigr)
  \Bigl(\tfrac1{\sqrt{T}}\Bigr)
  \;C_{\boldsymbol{n}^{(i)}}\,
   \boldsymbol{x}^{\boldsymbol{n}^{(i)}}\\
&\quad
= \sum_{i=1}^T \frac{1}{T}C_{\boldsymbol{n}^{(i)}}
   \,\boldsymbol{x}^{\boldsymbol{n}^{(i)}}
   =\sum_{i=1}^T c_{\boldsymbol n^{(i)}}\,\boldsymbol x^{\boldsymbol n^{(i)}}
=p(\boldsymbol x)
\end{align*}
This explains why we set $\lvert c_{\boldsymbol n^{(i)}}\rvert\le\frac1T$, to satisfy $\bigl\lvert C_{\boldsymbol n^{(i)}}\bigr\rvert\le1$ so that we can use Corollary \ref{coro:monomial} to express all monomials and the polynomial. Then we still use the Hadamard test to estimate 
$$
p(\boldsymbol x)=\langle+|^{\otimes \lceil\log T\rceil}_{\mathcal{A}}
 \langle+|^{\otimes D}_{\mathcal{S}}
 U_c(\boldsymbol{x})
 |+\rangle^{\otimes \lceil\log T\rceil}_{\mathcal{A}}
 |+\rangle^{\otimes D}_{\mathcal{S}}
 $$
The controlled unitary $ U_c(\boldsymbol{x})=\sum_{i=1}^T\ket{i}\bra{i}_\mathcal{A}\otimes U^{\boldsymbol{n}^{(i)}}_\mathcal{S}(\boldsymbol{x})$ can be implemented by $T$ occurrences of $U^{\boldsymbol{n}^{(i)}}_\mathcal{S}(\boldsymbol{x})$ controlled by $\lceil\log T\rceil$ qubits. From Equation \eqref{eq:monomial}, each $U^{\boldsymbol{n}^{(i)}}_\mathcal{S}(\boldsymbol{x})=\bigotimes_{j=1}^D U_{\boldsymbol{\theta}_j}^{n_j}\left(x_j\right)$ can be implemented by $D$ occurrences of $U_{\boldsymbol{\theta}_j}^{n_j}\left(x_j\right)$ defined in Equation \eqref{eq:u_theta}, which consists of $O(L)$ occurrences of single-qubit rotation gates. In total, a $U_c(\boldsymbol{x})$ can be implemented by $O(TDL)$ single-qubit rotation gates controlled by $\lceil\log T\rceil$ qubits.

We know that a single-qubit rotation gate controlled by $\lceil\log T\rceil$ qubits could be implemented by a quantum circuit consisting of CNOT gates and single-qubit gates with depth $O(\log T)$ \citep{da_Silva_2022}. Thus $U_c(\boldsymbol{x})$ could be implemented by a quantum circuit with depth $O(TDL \log T)$ and width $D+\lceil\log T\rceil$. 

Then we build the Hadamard test circuit $W_p^{\dagger}(\boldsymbol{x})$ to estimate $p(\boldsymbol x)$ such that
\begin{equation}
\label{eq:theorem1}
W_p(\boldsymbol{x})=
(H\otimes H^{D+\lceil\log T\rceil})
\left(|0\rangle\langle0|\otimes I+|1\rangle\langle1|\otimes  U_c(\boldsymbol{x})\right)
\left(H\otimes I\otimes I\right)
\end{equation}
The extra controlled-$U_c(\boldsymbol{x})$ and Hadamard gates do not change the depth complexity, but only increase the circuit width by $1$. Hence we have the circuit $W_p^{\dagger}(\boldsymbol{x})$ with width $D+\lceil\log T\rceil+1$, depth $O(TDL \log T)$, and the number of parameters $TD(L+1)$. 
\end{proof}

As $\boldsymbol{n}=\left(n_1, \ldots, n_D\right) \in [L]^D$, the number of monomials $T\leq(L+1)^D$, we can adjust the Proposition \ref{prop:multivariate_uniform} to eliminate the parameter $T$ so that the bound for coefficients absolute values becomes $1/(L+1)^D$. But keeping the coefficients so small obstructs optimization, and the maximum absolute value of circuit output is only 1. To solve this we can simply multiply the final output of the circuit by a factor $\Lambda$ to express more general polynomials while keeping the quantum model ($W_p(\boldsymbol{x})$ and $Z^{(0)}$) unchanged.

\begin{restate}{Theorem}{theorem:multivariate_uniform}
Let  $p(\boldsymbol{x})=\sum_{\boldsymbol{n} } c_{\boldsymbol{n}} \boldsymbol{x}^{\boldsymbol{n}}$ be any real multivariate polynomial with $D$ variables and degree at most $L$ in each variable such that $\boldsymbol{x} \in[-1,1]^D$, $c_{\boldsymbol{n}}\in \mathbb R$, $\boldsymbol{n} \in[L]^D$. Then there exists a quantum model $\mathcal{Q}$ that consists of a PQC $W_p(\boldsymbol{x})$ and an observable $Z^{(0)}$ with the scaling factor $\Lambda$ such that
$$
f_{\mathcal{Q}}(\boldsymbol{x}):=\langle 0| W_p^{\dagger}(\boldsymbol{x}) Z^{(0)} W_p(\boldsymbol{x})|0\rangle=p(\boldsymbol{x})/\Lambda,
$$
where $\Lambda=|c_{\boldsymbol{n}}|_\infty  (L+1)^D$ and $Z^{(0)}$ is the Pauli $Z$ observable on the first qubit. The width of the PQC is $O(D \log L)$, the depth is $O\left(D^2 L^{D+1} \log L\right)$, and the number of parameters is $O\left(D L^{D+1}\right)$.
\end{restate}

The circuit diagram of $W_p(\boldsymbol{x})$ is shown in Figure \ref{fig:Circuit_diagram_withoutT_main}.

\begin{proof}
Following Proposition \ref{prop:multivariate_uniform}, replace $T$ by $(L+1)^D$, we then obtain the width is at most $D(\log (L+1)+1)+2$, the depth is $O\left((L+1)^{D+1}D^2\log (L+1)\right)$, and the number of parameters is at most $(L+1)^DD((L+1)+1)+D$. Using the complexity notation, we have the width is $O(D \log L)$, the depth is $O(D^2 L^{D+1} \log L)$, and the number of parameters is at most $O(D L^{D+1})$.

Then we need to satisfy the bound on absolute coefficient values and the property that the quantum circuit measurements can only output values with absolute value at most $1$. We can divide the polynomial by a scaling factor $\Lambda$ to satisfy the two constraints together. Precisely, for any $p(\boldsymbol{x})=\sum_{\boldsymbol{n} } c_{\boldsymbol{n}} \boldsymbol{x}^{\boldsymbol{n}}$, the maximum absolute coefficient value of $p(\boldsymbol{x})/\Lambda$ is $|c_{\boldsymbol{n}}|_\infty/\Lambda$ such that
$$
|c_{\boldsymbol{n}}|_\infty/\Lambda \leq \frac{1}{(L+1)^D},
$$
the smallest scaling factor is
$$
\Lambda=|c_{\boldsymbol{n}}|_\infty  (L+1)^D.
$$
Once the bound condition on absolute coefficient values is satisfied, we also have $|p(\boldsymbol{x})/\Lambda|_\infty \leq1$. 

\end{proof}

Theorem  \ref{theorem:multivariate_uniform}  is similar to the results of  \citet[Theorem 1]{yu2024nonasymptotic}: For any multivariate polynomial $p(\boldsymbol{x})=\sum_{\boldsymbol{n} } c_{\boldsymbol{n}} \boldsymbol{x}^{\boldsymbol{n}}$ with $D$ variables and $\|\boldsymbol{n}\|_1\leq L$ such that $|p(\boldsymbol{x})| \leq$ 1 for $\boldsymbol{x} \in[0,1]^D$, there exists a PQC $W_p(\boldsymbol{x})$ such that $\langle 0| W_p^{\dagger}(\boldsymbol{x}) Z^{(0)} W_p(\boldsymbol{x})|0\rangle=p(\boldsymbol{x})$, with PQC width $O(D+\log L+L \log D)$, the depth is $O\left(L^2 D^L(\log L+L \log D)\right)$, and the number of parameters is $O\left(L D^L(L+D)\right)$. 

In fact, they consider a stricter condition $\|\boldsymbol{n}\|_1\leq L$ instead of our $\|\boldsymbol{n}\|_\infty\leq L$, use the bound $T \leq(L+1) D^L$ instead of our $T=(L+1)^D$, and take the domain $\boldsymbol{x} \in[0,1]^D$ instead of our  $\boldsymbol{x} \in[-1,1]^D$. Moreover, they would also require a scaling factor to ensure that the PQC is found for all monomials of eligible polynomials. Compared to their results, our condition not only specifies the scaling factor bound but also has fewer restrictions on the degree and input domain. Moreover, our complexity result is clearly better when $L$ is larger than $D$, which is also the more common case in certain complex financial problems.

\subsection{Tensor-decomposed polynomial implementation}
\label{sec_app: TD poly}

\begin{restate}{Theorem}{theorem:tensor_poly}
Let $p(\boldsymbol{x})$ be any real multivariate polynomial with $D$ variables and degree at most $L$ in each variable such that 
$$
p(\boldsymbol{x})=\sum_{i=1}^R \lambda_i \prod_{j=1}^D p_{i, j}\left(x_j\right),
$$
where $R\in[1,(L+1)^D]$, $\sum_{i=1}^R |\lambda_i|=\Lambda\in\mathbb{R}$, and each $p_{i,j}(x_j)$ is a univariate polynomial of degree $L$ satisfying $|p_{i,j}(x_j)| \leq 1/2$  for $x_j \in[-1,1],\forall i\in[R],j\in[D]$. Then there exists a quantum model $\mathcal{Q}$ that consists of a PQC $W_p(\boldsymbol{x})$ and an observable $Z^{(0)}$ such that
$$
f_{\mathcal{Q}}(\boldsymbol{x}):=\langle 0| W_p^{\dagger}(\boldsymbol{x}) Z^{(0)} W_p(\boldsymbol{x})|0\rangle=p(\boldsymbol{x})/\Lambda,
$$
where $Z^{(0)}$ is the Pauli $Z$ observable on the first qubit. The width of the PQC is at most $2D+\lceil\log R\rceil+1$, the depth is $O(RDL \log R)$, and the number of parameters is $O\left(RDL\right)$.
\end{restate}

The circuit diagram of the PQC $W_p(\boldsymbol{x})=
(H\otimes F_\mathcal{A} \otimes H_\mathcal{S}^{\otimes2D})
\left(|0\rangle\langle0|\otimes I+|1\rangle\langle1|\otimes  V_c(\boldsymbol{x})\right)
\left(H\otimes I_\mathcal{A}\otimes I_\mathcal{S}\right)$, with $V_c(\boldsymbol{x})=\sum_{i=1}^R \ket{i}\bra{i}_\mathcal{A}\;\otimes\;V^{(i)}_\mathcal{S}(\boldsymbol{x})$, and $V^{(i)}_\mathcal{S}(\boldsymbol{x})$ are shown in Figure \ref{fig:theorem2_circuit_main} and  Figure \ref{fig:V_circuit_main}, respectively.

\begin{proof}
For each term \(i\in[R]\) and each variable \(x_j\), apply Corollary \ref{coro:univariate_poly_coro} to the univariate polynomial \(p_{i,j}(x_j)\).  Since \(\deg(p_{i,j})\le L\) and \(\lvert p_{i,j}(x_j)\rvert\le\tfrac12\), there exist angles \(\boldsymbol{\theta}_{i,j}^{(1)}\in\mathbb{R}^L\) and \(\boldsymbol{\theta}_{i,j}^{(2)}\in\mathbb{R}^{L+1}\) such that
\[
p_{i,j}(x_j)
=\bigl\langle +\bigr|^{\otimes 2}\!
\Bigl(\ket{0}\!\bra{0}\otimes U_{\boldsymbol{\theta}_{i,j}^{(1)}}(x_j)
+\ket{1}\!\bra{1}\otimes U_{\boldsymbol{\theta}_{i,j}^{(2)}}(x_j)\Bigr)
\bigl|+\bigr\rangle^{\otimes 2}.
\]
Define a \(2D\)-qubit unitary for each \(i\) by
\[
V^{(i)}(\boldsymbol{x})
=\bigotimes_{j=1}^D
\Bigl(\ket{0}\!\bra{0}\otimes U_{\boldsymbol{\theta}_{i,j}^{(1)}}(x_j)
+\ket{1}\!\bra{1}\otimes U_{\boldsymbol{\theta}_{i,j}^{(2)}}(x_j)\Bigr),
\]
and consider \(V^{(i)}\) in the \(2D\) qubit system \(\mathcal{S}\) such that,
\begin{equation}
\label{eq:V^i}
\bigl\langle +\bigr|^{\otimes 2D}_\mathcal{S}
V^{(i)}_\mathcal{S}(\boldsymbol{x})\,
\bigl|+\bigr\rangle^{\otimes 2D}_\mathcal{S}
=\prod_{j=1}^D p_{i,j}(x_j).
\end{equation}
Define $F_\mathcal{A}$ on the ancilla register \(\mathcal{A}\) of \(\lceil\log R\rceil\) qubits such that
\[
F_\mathcal{A}\,\ket{0}^{^{\otimes\lceil\log R\rceil}}_\mathcal{A}
=\sum_{i=1}^R\sqrt{w_i}\,|i\rangle_\mathcal{A}
\quad \text{where} \;
w_i \;=\;\frac{\lvert \lambda_i\rvert}{\sum_{k=1}^R \lvert \lambda_k\rvert}
\]
with $\sum_{i=1}^R w_i=1$, and define the controlled unitary $V_c(\boldsymbol{x})$ such that
\[
V_c(\boldsymbol{x})
=\sum_{i=1}^R \ket{i}\bra{i}_\mathcal{A}\;\otimes\;V^{(i)}_\mathcal{S}(\boldsymbol{x}).
\]
Define $|v_p\rangle=\sum_{i=1}^R\sqrt{w_i}\,|i\rangle_\mathcal{A}\otimes |+\rangle_{\mathcal{S}}^{\otimes2D}$, we have
\begin{align*}
&\langle v_p|\,V_c(\boldsymbol{x})\,|v_p\rangle\\
&=
\Bigl(\sum_{i'=1}^R\sqrt{w_{i'}}\,\langle i'|_\mathcal{A}\otimes \langle+|_{\mathcal{S}}^{\otimes2D}\Bigr)
\Bigl(\sum_{i=1}^R\ket{i}\bra{i}_\mathcal{A}\otimes V^{(i)}_\mathcal{S}(\boldsymbol{x})\Bigr)
\Bigl(\sum_{k=1}^R\sqrt{w_k}\,\ket{k}_\mathcal{A}\otimes\ket{+}_{\mathcal{S}}^{\otimes2D}\Bigr)\\
&=\sum_{i',i,k}\sqrt{w_{i'}w_k}\,
\langle i'|i\rangle_\mathcal{A}\langle i|k\rangle_\mathcal{A}\,
\langle+|V^{(i)}_\mathcal{S}(\boldsymbol{x})|+\rangle_{\mathcal{S}}^{\otimes2D}\,
=\sum_{i=1}^R w_i\,
\bigl\langle +\bigr|_{\mathcal{S}}^{\otimes2D}\,
V^{(i)}_\mathcal{S}(\boldsymbol{x})\,
\bigl|+\bigr\rangle_{\mathcal{S}}^{\otimes2D}\\
&=\sum_{i=1}^R w_i\prod_{j=1}^D p_{i,j}(x_j)
\;\\
&=\;\frac{1}{\sum_{k=1}^R|\lambda_k|}\sum_{i=1}^R|\lambda_i|\prod_{j=1}^D p_{i,j}(x_j)
\;\\
&\equiv\;\frac{1}{\sum_{k=1}^R|\lambda_k|}\sum_{i=1}^R\lambda_i\prod_{j=1}^D p_{i,j}(x_j)
\\&=\;p(\boldsymbol{x})/\Lambda.
\end{align*}
In the second-to-last row, we can assume $\sum_{i=1}^R|\lambda_i|\prod_{j=1}^D p_{i,j}(x_j)
\;\equiv\;\sum_{i=1}^R\lambda_i\prod_{j=1}^D p_{i,j}(x_j)$ without loss of generality, since the sign of $\lambda_i$ can always be absorbed into one of the univariate polynomials $p_{i, j}\left(x_j\right)$. Then we can do the Hadamard test to estimate $\langle v_p|\,V_c(\boldsymbol{x})\,|v_p\rangle=p(\boldsymbol{x})/\Lambda$.

Using similar ideas as in Proposition \ref{prop:multivariate_uniform}, $V_c(\boldsymbol{x})$ could be implemented by a quantum circuit with depth $O(RDL \log R)$ and width $2D+\lceil\log R\rceil$. For the Hadamard test part, \citet{mottonen2004transformationquantumstatesusing} showed that $F$ can be implemented by a quantum circuit of CNOT gates and single-qubit rotation gates of size $O(R)$, so the depth and width of the circuit are still the same. The extra controlled-$V_c(\boldsymbol{x})$ gates introduced by the Hadamard test also do not change the complexity. Note that the number of parameters in the PQC equals the number of parameters in $V_c(\boldsymbol{x})$ plus the number of parameters in $F$, i.e., $O(RDL)$ and $O(R)$, which is still $O(RDL)$ in total.
\end{proof}

\subsubsection{Rank-1 tensor-decomposed polynomial implementation}
\label{section:rank1_poly}
In this work, we specifically consider a simple case when $R=1$, i.e. rank-1 multivariate polynomial $p(\boldsymbol{x})$:
\begin{restate}{Corollary}{coro:rank1_poly}
Let  $p(\boldsymbol{x})=\prod_{j=1}^D p_{ j}\left(x_j\right)$ be any real multivariate polynomial such that $\forall j\in[D],x_j\in[-1,1]$, each $p_{i,j}(x_j)$ is a univariate polynomial of degree $L$ satisfying $|p_{i,j}(x_j)| \leq 1/2$. Then there exists a quantum model $\mathcal{Q}$ that consists of a PQC $W_p(\boldsymbol{x})$ and an observable $Z^{(0)}$ such that
$$
f_{\mathcal{Q}}(\boldsymbol{x}):=\langle 0| W_{p}^{\dagger}(\boldsymbol{x}) Z^{(0)}W_{p}(\boldsymbol{x})|0\rangle=p(\boldsymbol{x}),
$$
where $Z^{(0)}$ is the Pauli $Z$ observable on the first qubit. The width of the PQC is at most $2D+1$, the depth is at most $4LD+2$ by allowing double-controlled rotation gates to be implemented natively, and the number of parameters is at most $(2L+1)D$.
\end{restate}

\begin{proof}

For each variable, the depth of the double-controlled rotation gates is at most $2 L+1+2(L-1)+1=4 L$ and the number of parameters is at most $2 L+1$. Thus, with $D$ variables and including the additional Hadamard layers, the total circuit depth is $4 L D+2$, whereas the total number of parameters is at most $(2 L+1)D$.
\end{proof}

\section{QPINN Circuit for Volatility Input}
\label{sec:volatility_input_qpinn_circuit}

Figure~\ref{fig:qsp_circuit_volatility_input} shows the eight-qubit QPINN circuit used in Section~\ref{sec:parametric_volatility_experiment}. The circuit is wrapped into two rows only for readability, and the lower row continues the upper row from left to right.

\begin{figure}[p]
    \centering
    \includegraphics[angle=90,origin=c,width=0.47\textheight]{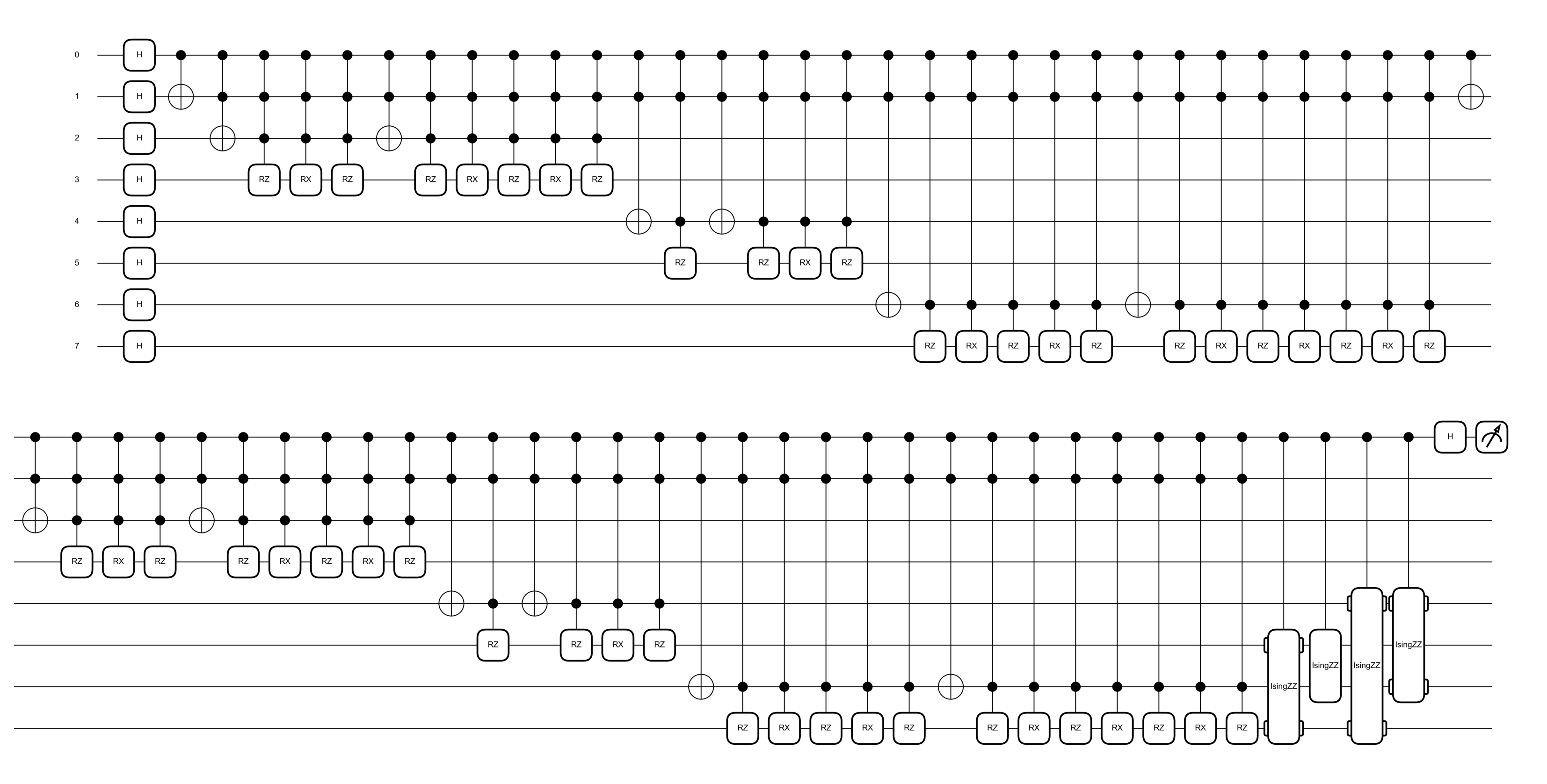}
    \caption{Complete circuit for the QPINN used in the volatility-input HJB PDE experiment. The first qubit is the Hadamard-test qubit and the second qubit is the tensor-rank selection qubit, and the remaining six qubits encode $(x,t,\sigma)$. The four \texttt{IsingZZ} gates form the added entangling layer between the $t$ and $\sigma$ registers.}
    \label{fig:qsp_circuit_volatility_input}
\end{figure}

The qubits are ordered from top to bottom as $q_0,\ldots,q_7$. The first qubit $q_0$ is the Hadamard-test qubit, and the second qubit $q_1$ selects the two tensor-rank components with $R=2$. The remaining six qubits form three two-qubit variable registers: $(q_2,q_3)$ encodes $x$, $(q_4,q_5)$ encodes $t$, and $(q_6,q_7)$ encodes $\sigma$. The tensor-decomposed model uses different univariate degrees for the three variables: $L_t=1$, $L_x=2$, and $L_\sigma=3$. The four \texttt{IsingZZ} gates at the right end form the added entangling layer. They act on $(q_5,q_7)$, $(q_5,q_6)$, $(q_4,q_7)$, and $(q_4,q_6)$, realizing the pairwise $ZZ$ couplings between the $t$ and $\sigma$ registers in Equation~\eqref{eq:volatility_input_added_hamiltonian}. The $x$ register is not directly entangled by this layer.




\end{appendices}



\end{document}